\providecommand{\keywords}[1]
{
  {\small	
  \textbf{\textit{Keywords---}} #1}
}
\newcommand{\cB}{\mathcal{B}}
\NewDocumentCommand{\newzctheorem}{mO{#1}m}{
  \newtheorem{#1}[sharedtheoremcounter]{#3}
    \AddToHook{env/#1/begin}{%
      \zcsetup{countertype={sharedtheoremcounter=#2}}}
}
\newcommand{\R}{{\mathds{R}}}
\newcommand{\E}{{\mathbb{E}}}
\newcommand{\Q}{{\mathbb{Q}}}
\renewcommand{\P}{{\mathbb{P}}}
\newcommand{\eps}{\varepsilon}
\newcommand{\bF}{{\mathbb{F}}}
\newcommand{\Rp}{{\R_+}}
\newcommand{\Rpnpo}{\R_+^{n+1}}
\newcommand{\cI}{{\mathcal{I}}}
\newcommand{\A}{{\mathcal{A}}}
\newcommand{\mfZ}{{\mathfrak{Z}}}
\newcommand{\mfB}{{\mathfrak{B}}}
\newcommand{\mfS}{{\mathfrak{S}}}
\newcommand{\F}{{\mathcal{F}}}
\newcommand{\N}{{\mathcal{N}}}
\renewcommand{\L}{{\mathcal{L}}}
\newcommand{\kinI}{k\in\cI}
\newcommand{\bz}{\boldsymbol{z}}
\newcommand{\bZ}{\boldsymbol{Z}}
\newcommand{\var}[1]{\mathrm{Var}^{#1}}
\newcommand{\newinf}{\mathop{\mathrm{inf}\vphantom{\mathrm{sup}}}}
\title{Model Combination in Risk Sharing under Ambiguity}
\author[1]{Emma Kroell}
\author[2,a]{Sebastian Jaimungal}
\author[2,b]{Silvana M. Pesenti}
\affil[1]{Department of Mathematical Sciences, University of Copenhagen}
\affil[ ]{ek@math.ku.dk}
\affil[2]{Department of Statistical Sciences, University of Toronto}
\affil[a]{sebastian.jaimungal@utoronto.ca}
\affil[b]{silvana.pesenti@utoronto.ca}
\date{\today}
\begin{document}

\maketitle

\begin{abstract}
We consider the problem of an agent who faces losses in continuous time over a finite time horizon and may choose to share some of these losses with a counterparty. The agent is uncertain about the true loss distribution and has multiple models for the losses, characterized by a finite set of probability measures. Their goal is to optimize a mean-variance type criterion with model combination under ambiguity through risk sharing. We construct such a criterion using the chi-squared divergence, exploiting a dual representation to expand the state space, yielding a time consistent problem. Assuming a Cram\'er-Lundberg loss model, we fully characterize the optimal risk sharing contract and the agent’s wealth process under the optimal strategy. We prove that the strategy we obtain is admissible and that the value function satisfies the appropriate verification conditions. Furthermore, we show that the model combination problem is equivalent to the monotone mean-variance problem of Maccheroni et al.~(2009) under a composite probability measure that depends on the agent's reference probability measures. Finally, we apply the optimal strategy to an insurance setting in a simulation example and provide numerical illustrations of the results. 
\end{abstract}

\keywords{risk sharing, model combination, model ambiguity, monotone mean-variance, optimal contracting}

\onehalfspacing

\section{Introduction}

In many applications in insurance and finance, actors may have multiple, competing models for an uncertain outcome. Reasons for this could include the need to incorporate expert opinions or the availability of multiple data sources. For example, consider an insurance company with a model based on past claims, and who knows these claims will not be fully representative of future claims due to climate change. They could incorporate the threat of climate change by including additional models representing adverse climate change scenarios. Another example is a company moving into a new area of business, where they might need to include multiple sources of available data in their decision-making.

Our focus in this paper is on a model combination problem under ambiguity  inspired by this idea of different scenarios. We assume that an agent has access to multiple reference models, and does not know which of the reference models are correct. They must make a decision optimally accounting for ambiguity about these models, and therefore incorporate all models jointly into a single model penalization problem. In particular, we study the problem of an insurance company who faces insurance losses over a finite time horizon. The insurance company can share their risk with another agent, called the counterparty. The insurer then determines the optimal risk sharing strategy accounting for the multiple models and ambiguity about them. 

Traditionally, to incorporate model ambiguity in insurance and finance, one specifies a single reference model and searches for the worst-case outcome among alternate models, whose distance from the reference measure may be penalized using a divergence or distance measure. Commonly used divergence measures include the Kullback-Leibler (KL) divergence and the Wasserstein distance. Some seminal contributions among many include \textcite{HansenSargent2008} in economics, \textcite{Maenhout2004} and \textcite{PflugWozabal2007} in mathematical finance, \textcite{BlanchetMurthy2019} who quantify the divergence using optimal transport, and \textcite{Blanchet2019RobustActuarial} for applications in actuarial science. Given our inclusion of \emph{multiple} reference models, this work differs from these approaches to model uncertainty or robustness. 

Instead, it is more closely related to recent work on combining models in continuous time via barycenters. The barycenter of a set of probability measures $\P_k, k=1,\ldots, n$, has the form $\min_\Q \sum_{i=k}^n \pi_k D(\Q,\P_k)$ where $D$ is a divergence, $\pi_k$ are non-negative weights that sum to 1, and the minimum is taken over an appropriately defined set of probability measures. Work on this problem using the KL divergence includes  
\textcite{Liu2025}, who determine an efficient method for combining diffusion models by approximating the KL barycenter, and \textcite{JaimungalPesenti2026}, who determine the optimal model that combines differing expert models with constraints. 
In an optimal reinsurance context, \textcite{Kroell2025} study an optimal control and model combination problem under ambiguity and determine the optimal reinsurance premium charged by a reinsurer, showing that the optimal pricing model is a distortion of the barycenter model. Furthermore, a recent work using the adapted Wasserstein distance instead of the KL divergence is \textcite{acciaio2025multicausal}, who study a barycenter problem for stochastic processes with respect to causal optimal transport. 

Unlike the papers mentioned above, we study a model combination problem with the chi-squared divergence, which is defined as follows from a probability measure $\P$ to a probability measure $\Q$:
\begin{equation*}
    \chi^2(\Q\,\Vert \, \P) := \E^\P \left[ \left(\frac{d\Q}{d\P}\right)^2 - 1 \right] \,,
\end{equation*}
where $\frac{d\Q}{d\P}$ denotes the Radon-Nikodym derivative of $\Q$ with respect to $\P$. For an overview of probability metrics including the chi-squared divergence, see \textcite{gibbs2002}. 
Penalizing model ambiguity using chi-squared divergence is of particular interest due to its close relationship with another popular optimization criterion: the monotone mean-variance (MMV) preferences of \textcite{Maccheroni2006,Maccheroni2009}. MMV preferences were introduced as the minimal monotone extension of the mean-variance (MV) preferences. In particular, MMV preferences agree with mean-variance preferences when mean-variance preferences are monotone, and are closest pointwise to the mean-variance preferences when they are not. Given a reference measure $\P$, MMV preferences are defined as
\begin{equation}
\label{eqn:def_MMV}
    V^{\theta}_{\text{MMV}}(X;\P) :=
    \min_{\Q \in \Delta^2(\P)} \left( \E^\Q [X] + \frac{1}{2 \theta} \, \chi^2(\Q \, \Vert \, \P) \right) \,,
\end{equation}
where $X\in\L^2(\P)$, $\Delta^2(\P) = \{\Q \ll \P: \E^\P \left[\left(\frac{d\Q}{d\P}\right)^2\right] <\infty \}$, and $\theta >0$ is a parameter penalizing uncertainty. 

In our work, we incorporate a chi-squared divergence penalty between \emph{multiple} reference measures $\P_1, \ldots,\P_n$ and a candidate measure $\Q$. Similar to the barycenter problem, none of the reference measures is necessarily the ``true'' measure, and all are incorporated into the problem. This criterion incorporates both model ambiguity, through its construction as a penalized utility maximization problem, and model combination, through the weighted average term. Given a set of probability measures $\{\P_k\}_{k\in\N}$, where $\N := \left\{ 1, \ldots, n\right\}$, our model combination (MC) criterion is defined as:
\begin{equation}
    \label{eqn:def_MC}
    V^{\theta}_{\text{MC}}(X;\{\P_k\}_{k\in\N}) := \min_{\Q \in \Delta^2(\{\P_k\}_{k\in\N})} \left( \E^\Q [X] + \frac{1}{2 \theta} \sum_{k\in\N} \pi_k \, \chi^2(\Q \, \Vert \, \P_k) \right) \,,
\end{equation}
where $0 \leq \pi_k \leq 1$ are weights such that $\sum_{k\in\N} \pi_k = 1$ and $\Delta^2(\{\P_k\}_{k\in\N}):= \bigg\{\Q: \Q \ll \P_k \text{ and } \E^{\P_k} \left[\left(\tfrac{d\Q}{d\P_k}\right)^2\right] <\infty  \, \allowbreak  \text{ for all } k\in\N\bigg\}$. 
With this criterion, we generalize aforementioned works on model combination by using the chi-squared divergence, and works on MMV by incorporating multiple reference models $\P_k$.

Given the connection between our model combination under ambiguity problem and monotone mean-variance preferences, we mention here some important works that use MMV preferences. In recent years, there has been interest in applying MMV preferences to optimal investment problems. Examples include \textcite{TrybulaZawisza2019}, who study an investment problem in an incomplete market and \textcite{ShenZou2022} and \textcite{Hu2023} who study the problem with trading constraints. The general finding has been that the optimal strategies of MMV and classical mean-variance coincide in continuous time when there is only one reference model, and this is proved by \textcite{StrubLi2020} and \textcite{DuStrub2024} in general settings, under the assumption of continuous asset prices. \textcite{Cerny2020} gives an equivalent characterization of when MMV and MV preferences coincide in a general semimartingale model. Recently, \textcite{LiLiangPang2024,LiLiangPang2024arXiv} show that MMV and MV preferences coincide in a L\'evy market under a specific market assumption, and \textcite{Li2025ORL} expand this to a general market setting. There has also been recent interest in applying MMV to problems in optimal insurance; see \cite{LiGuo2021,LiGuo2024,ShiXu2024,Li2025CatastropheMMV} for examples in both the diffusive and jump process settings.

We also mention another related class of mean-variance problems, which are based on the concept of the Nash subgame-perfect equilibrium. These problems assure a time-consistent approach to the mean-variance problem by searching for an intrapersonal equilibrium point (see \textcite{TimeInconsistentControlBook2021} for an overview of this approach).  \textcite{ZengLiGu2016} use this perspective to find the robust reinsurance and investment strategies of an insurer. \textcite{Yuan2022} expand it to a leader-follower game between an insurance company and a reinsurance company, and find that model ambiguity increases the price of reinsurance. 
\textcite{ChenLandriaultLiLi2021risksharing} use this approach to study a risk sharing problem with $n$ insurers, and show that the Pareto optimal risk sharing strategy is a combination of a proportional and an excess-of-loss reinsurance strategy. Our work is distinct from these problems, as we assure a time consistent problem by expanding the state space rather than looking for equilibrium points.

This work has several contributions. Using the  criterion \eqref{eqn:def_MC}, we study a broad risk sharing problem, where the insurer may share any functional of its risk with a counterparty. We solve this problem and obtain closed-form expressions for the optimal strategies and the insurer's wealth process. The key to the approach is the introduction of auxiliary processes, which represent the Radon-Nikodym derivative between each reference model and the optimal model. Furthermore, we derive an explicit expression for the insurer's wealth process under the optimal strategy in terms of the auxiliary processes and find that the optimal wealth process is linear in the auxiliary processes. Using this result, we derive many properties of the optimal strategy, including its mean and variance. We find that the model penalization parameter penalizes the variance of the insurer's wealth process.

We further show that the model combination problem has an equivalent formulation as an MMV problem with respect to a composite probability measure, which is based on all the reference measures. We characterize this composite probability measure and show that, under it, the compensator of the Poisson random measure which drives the insurance losses is non-Markovian in general. Given the aforementioned results showing equivalence between MMV and mean-variance problems in continuous time, we also solve the mean-variance problem with respect to the composite measure, and show that it is indeed equivalent to both the MMV and original formulations of the problem.

The rest of the paper is structured as follows. \zcref{sec:setting-prelim} establishes the problem setting and requisite mathematical preliminaries. In \zcref{sec:main-results}, we state our optimization problem and solve for the insurer's optimal risk sharing strategy, as well as the optimal decision measure, $\Q^*$. We further develop semi-explicit formulas for the key processes, derive their properties under the $\Q^*$-measure, and provide a verification theorem. 
\zcref{sec:MMV-equiv} provides the equivalent formulation of the problem as an MMV problem with respect to a composite measure, and shows it also has an equivalent mean-variance formulation.
In \zcref{sec:one-model}, we apply our results to the simpler case where there is only one reference model, and in \zcref{sec:counterparty} we consider the pricing of such risk sharing contracts and determine how the counterparty could set the price. Finally, in \zcref{sec:data}, we apply our results to an example using data from a Spanish auto insurer, and illustrate the solution numerically.

\section{Problem setting}
\label{sec:setting-prelim}

Let $(\Omega, \F, \mathbb{F}=(\F_t)_{t \in [0,T]})$ be a completed and filtered measurable space. Let $\P_1, \ldots, \P_n, \P_C$ be $n+1$ equivalent probability measures defined on $(\Omega, \F, \mathbb{F})$ and let $\mathcal{I} := \{1,\ldots,n, C \}$, where the element $C$ denotes the index of the counterparty while the elements $1,\dots,n$ denote the indices of individual models.

Denote by $N(d\xi,dt)$ a Poisson random measure (PRM), which drives the insurance losses in the market. We assume that under a measure $\P_k$ for $k \in \cI$, $N$ has $\P_k$-compensator $\nu_k(d\xi,dt) = \nu_k(d\xi) dt$ and define the $\P_k$-compensated PRM by 
\begin{equation*}
    \tilde N^{\P_k}(d\xi,dt) = N(d\xi,dt) - \nu_k(d\xi)dt \,.
\end{equation*}
We assume that each compensator admits a density denoted by $v_k(\xi)$, i.e., $\nu_k(d\xi) = v_k(\xi) d\xi$ for $k \in \cI$. They have the same essential support on some subset of $\R_+$.

\begin{remark}
    \label{assump:equiv}
    As $\{\P_k\}_{k\in\cI}$ are a set of equivalent probability measures,  we have $\nu_j\ll\nu_k$, for all $j,k\in\cI$. That is, all compensators are absolutely continuous with respect to one another.     
\end{remark}
This implies that, when interpreting the random measure as a marked point process with the points denoting the time of occurrence of a loss and the mark its severity, the various models all agree on the support of the severity random variable.

Furthermore, we assume that for all $k \in \cI$
\begin{equation}
\label{eqn:nu_int_assumption}
    \int_{\Rp} \!\!\! \nu_k(d\xi) < \infty \quad \text{and} \quad \int_{\Rp} \!\!\! \xi  \, \nu_k(d\xi) < \infty \,,
\end{equation}
and that
\begin{equation}
    \int_{\Rp} \!\!\! \xi^2  \, \nu_C(d\xi) < \infty \,.
\end{equation} 
The first bound implies finite activity, the second and third imply  finite first and second moments of the severity distribution.

Next, we make two technical assumptions about the integrability of the compensators, which are required to assure the existence of the optimal risk sharing strategy. Essentially, these assumptions imply that the models for the loss distribution are not vastly different. 
Recall that the compensators have the same support.  Thus, when considering integrals over ratios of compensators (as in the following two assumptions), we use the convention that 0/0=0.

\begin{assumption}
\label{assump:intgr_comp}
For all $k \in \cI$
  \begin{equation*}
    \int_{\R_+} \frac{v_C^2(\xi)}{v_k(\xi)} \, d\xi  < \infty \,.
  \end{equation*}
\end{assumption}

This means that the parameters of the reference models must be within a certain range of the corresponding parameters of the counterparty's model.
For example, if we assume that $\nu_k(d\xi)$ is compound Poisson such that $v_k(\xi) = \lambda_k f_k(\xi)$, where $\lambda_k >0$ and $f_k$ is the density of a Gamma distribution with shape $m_k > 0$ and scale $\phi_k > 0$, \zcref{assump:intgr_comp} is satisfied if $2 m_C > m_k$ and $2 \phi_k > \phi_C$ for all $\kinI\setminus\{C\}$. Given that the differing models describe the same source of loss, this is a reasonable assumption.

\begin{assumption}
\label{assump:intgr_comp_jk}
For all $j,k \in \cI$
\begin{equation*}
  \int_{\Rp} \frac{v_C^3(\xi)}{v_j(\xi)v_k(\xi)} \, d\xi < \infty \,. 
\end{equation*}
\end{assumption}

Continuing the same example with a Gamma severity distribution, \zcref{assump:intgr_comp_jk} is satisfied if for all $j,k \in \cI\setminus\{C\}$, $3 m_C > m_j + m_k$ and $3 \phi_j\phi_k > \phi_C (\phi_j + \phi_k)$. Similarly to the interpretation of \zcref{assump:intgr_comp}, this means the reference models cannot be too different from the counterparty's model, and is a reasonable assumption for models that describe the same loss distribution.

\subsection{Insurer's wealth process}
\label{sec:21}

Our focus is on the behaviour of an insurance company who faces an insurable loss and receives premium income to compensate for this. The insurer charges a constant premium rate $c>0$. Initially, the insurer's wealth process follows the Cram\'er-Lundberg model:
\begin{equation}
\label{eqn:X_CL}
    X_t^{CL} = x_{0} + c \, t -  \int_0^t \!\! \int_{\Rp} \!\!\! \xi  \, N(d\xi,ds) \,,
\end{equation}
where $x_{0}>0$ is the insurer's initial wealth.

The insurer can share their risk with another agent, who we refer to as the counterparty. They could be another insurance company, a reinsurer, or another financial entity. The insurer shares a functional $\alpha_t(\xi)$ of the loss $\xi \in \Rp$ with the counterparty. The counterparty accepts the risk sharing $\alpha_t(\xi)$ and in turn charges a premium based on the amount of risk shared. We assume that the counterparty charges the expected value premium principle calculated under their own model, which is given by the probability measure $\P_C$. They markup the expected ceded loss by a markup rate $\eta>0$, also called the safety loading in an insurance setting. Thus the risk sharing premium is 
$p_C = (1+\eta) \int_{\R_+} \alpha_t(\xi) \nu_C (d\xi)$.\footnote{We could allow for time-dependency in the premia charged by the insurer and counterparty, i.e., $c(t)$ and $\eta(t)$, where $c(t)$ and $\eta(t)$ are bounded and integrable. Our results would hold but with additional notation to account for integration over time, so we prefer to set the premia constant.}
We assume that the risk sharing premium is set such that $c < (1+\eta) \int_{\R_+} \xi \, \nu_C (d\xi)$, i.e. passing the full risk to the counterparty is not optimal.

\begin{definition}[Admissible risk sharing strategies]
\label{def:alpha_int}
We define the set of admissible risk sharing strategies, $\A$, as those strategies $(\alpha_t(\cdot))_{t\in[0,T]}$ that are $\bF$-predictable random fields, $\alpha_t:\Rp\to\R$, satisfying
\begin{equation*}
    \E^{\P_C} \left[ \int_0^T \!\!\! \int_{\R_+} \!\!\! |\alpha_s(\xi)|^2 \,  \nu_C(d\xi) ds \right] < \infty \quad \text{and} \quad
    \E^{\P_C} \left[ \int_0^T \!\!\! \int_{\R_+} \!\! \left[ \xi - \alpha_s(\xi) \right]^2 \nu_C(d\xi) ds \right] < \infty \,.
\end{equation*}
\end{definition}

Given the above assumptions, the insurer's wealth process with risk sharing evolves according to the following equation:
\begin{equation}
\label{eqn:X_SDE}
    X_t^\alpha = x_{0} + \int_0^t \left[ c - (1+\eta) \int_{\Rp} \!\!\! \alpha_s(\xi) \, \nu_C (d\xi) \right] ds -  \int_0^t \!\! \int_{\Rp} \!\!\! [\xi - \alpha_s(\xi)] \, N(d\xi,ds) \,.
\end{equation}

\subsection{Insurer's criterion}
In this section, we introduce the insurer's criterion. The insurer is uncertain about the correct model for the insurable loss. They have access to $n+1$ models for the insurable loss, given by the probability measures $\P_k$ for $\kinI$, which correspond to the counterparty's model $\P_C$ and the $n$ other models $\P_1, \ldots, \P_n$. The insurer has varying degrees of certainty about the models, and may disregard some of them completely. Their goal is to determine the optimal way to share their risk with the counterparty, combining the models they have access to and taking into account uncertainty about the true loss distribution.

Our criterion is inspired by the monotone mean-variance criterion of \textcite{Maccheroni2006,Maccheroni2009}, given by \eqref{eqn:def_MMV}. 
To incorporate the $n+1$ reference models $\P_k$, $\kinI$, we propose the following criterion:
\begin{definition}
\label{def:MMVMA}
The agent's criterion for a given $\alpha\in\A$ is
\begin{equation}
\label{eqn:MMV1}
    V^{\theta}_{\text{MC}}(X^\alpha_T;\{\P_k\}_{\kinI}) := \min_{\Q \in \Delta^2 (\{\P_k\}_{k \in \cI})} \left\{ \E^\Q  \left[ X^\alpha_T \right]  + \frac{1}{2 \theta} \sum_{\kinI} \pi_k \, \E^{\P_k} \left[ \left(\frac{d\Q}{d\P_k}\right)^2 - 1 \right] \right\}  \,,
\end{equation}
where $0 \leq \pi_k \leq 1$ are weights such that $\sum_{\kinI} \pi_k = 1$ and 
\begin{equation*}
    \Delta^2 (\{\P_k\}_{k \in \cI}) := \left\{\Q: \Q \ll \P_k \text{ and } \E^{\P_k} \left[\left(\tfrac{d\Q}{d\P_k}\right)^2\right] <\infty  \, \text{ for all } k \in \cI \right\} \,.
\end{equation*}
\end{definition}
Note, $\P_C\in\Delta^2(\{\P_k\}_{k \in \cI})$, as $\P_C\ll \P_k$ for all $k\in\cI$, and $\E^{\P_k}\left[\left(\frac{d\P_C}{d\P_k}\right)^2\right]=e^{\int_{\Rp} \! \left(\frac{v_C(\xi)}{v_k(\xi)}-1\right)^2v_k(\xi)\,d\xi\,T}<\infty$ due to \zcref{assump:intgr_comp} and \eqref{eqn:nu_int_assumption}, hence $\Delta^2(\{\P_k\}_{k \in \cI})\neq\emptyset$.
Any of the models $\P_k$, $\kinI$, may be excluded from the insurer's criterion by setting $\pi_k=0$ for that model. 

It is convenient to express this criterion entirely under the measure $\Q$ by splitting the second term in \eqref{eqn:MMV1} and rewriting it as a $\Q$-expectation, giving
\begin{equation*}
    \min_{\Q \in \Delta^2(\{\P_k\}_{k \in \cI})} \E^\Q  \left[ X^\alpha_T + \frac{1}{2 \theta} \sum_{\kinI} \pi_k \left(\frac{d\Q}{d\P_k} - 1 \right)\right] \,.
\end{equation*}

Next, we obtain an alternate representation of \eqref{eqn:MMV1} by recasting the problem as a zero-sum stochastic game (see \textcite{mataramvuraoksendal2008}). To do so, we introduce a family of stochastic processes $\{Z^\beta_{k,t}\}_{\kinI,t\in[0,T]}$, which we call \textit{auxiliary processes}. These processes are driven by a random field $\beta$ which we call the \textit{compensator}. Each auxiliary process $Z_k^\beta$ may be viewed as the process version of the Radon-Nikodym derivative from $\P_k$ to a candidate probability measure $\Q_\beta$, which is parametrized by $\beta$, and reflects the effect of the reference model $\P_k$ on the insurer's problem. They are needed from a technical perspective to expand the state space of the problem, allowing for a time consistent solution.\footnote{By time consistent, we mean that Bellman's optimality principle may be applied to the problem (see \textcite{TimeInconsistentControlBook2021}).} We prove below that the set of probability measures induced by such processes generate the same set of measures $\Delta^2(\{\P_k\}_{k \in \cI})$ in the original problem.

For each $\bF$-predictable random field $(\beta_t(\cdot))_{t\in[0,T]}$, $\beta_t:\Rp\to\Rp$, we define the stochastic processes $\left\{\left(Z^\beta_{k,t}\right)_{k\in\cI}\right\}_{t\in[0,T]}$ as the stochastic exponentials that solve the SDEs:
\begin{equation}
\label{eqn:Z_SDE}
    dZ_{k,t}^\beta = - Z_{k,t^-}^\beta \int_{\Rp} \!\! \left[ 1 - \frac{\beta_t(\xi)}{v_k(\xi)}\right] \tilde N^{\P_k}(d\xi,dt)\,, \quad Z^\beta_{k,0} = 1 \,, \quad \text{for all } \kinI \,.
\end{equation}

\begin{definition}[Admissible compensators]
\label{def:beta_int2}
    Let $\mfB$ denote the $\bF$-predictable random fields $\beta$ such that, for all $\kinI$,
        \begin{equation}
        \label{eqn:finite-second-moment}
        \E^{\P_k} \left[ \left( Z_{k,T}^\beta \right)^2 \right] < \infty \,.
    \end{equation} 
\end{definition}
Note, for any $\beta\in\mfB$ and $k\in\cI$, as $Z_{k,T}^\beta$ is a non-negative random variable, we have $\E^{\P_k} \left[ Z_{k,T}^\beta \right]^2 \le \E^{\P_k} \left[ \left(Z_{k,T}^\beta\right)^2 \right]<\infty$. Hence, $Z_{k,T}^\beta\in\mathbb{L}^1$, and therefore a true martingale, and so $\E^{\P_k} \left[ Z_{k,T}^\beta \right]=Z_{k,0}^\beta=1$.

Fix $k\in\cI$, then for each $\beta\in\mfB$ we define the measure $\Q_{\beta,k}$ such that $\Q_{\beta,k}(\omega) = Z_{k,T}^\beta(\omega)\, \P_k(\omega)$ for all $\omega\in\F_T$.
From Girsanov's Theorem for random measures, under a candidate measure $\Q_{\beta,k}$,
\begin{align}
\label{eqn:girsanov}
    \tilde N^{\Q_{\beta,k}} (d\xi, dt) 
    &= \tilde N^{\P_k} (d\xi, dt) + \left[ 1 - \frac{\beta_t(\xi)}{v_k(\xi)}\right] \nu_k(d\xi)dt = N(d\xi, dt) -\beta_t(\xi) d\xi dt \,, 
\end{align}
is a martingale increment, i.e., $N$ has $\Q_{\beta,k}$-compensator $\beta_t(\xi) \, d\xi \, dt$. Note that as the $\Q_{\beta,k}$-compensator is independent of $k$, $\Q_{\beta,k}$ is independent of $k$, and we write this measure as $\Q_\beta$.

\begin{lemma}
    The equality of the set of probability measures $\Delta^2(\{\P_k\}_{k \in \cI}) = \{\Q_\beta: \beta\in\mfB\}$ holds.
\end{lemma}
\begin{proof}
We prove that any element of $\Delta^2(\{\P_k\}_{k \in \cI})$ has a corresponding element in $\mfB$ and vice versa.

Fix $k\in\cI$. By Theorem III.3.17, definitions III.3.15 and III.3.16, and equation II.8.2 of \cite{jacod2013limit}, for any $\Q\in\{\Q:\Q\ll\P_k\}$  the Radon-Nikodym derivative $\frac{d\Q}{d\P_k}=\mfZ^{k}_T$, where $(\mfZ_t^k)_{t\in[0,T]}$ satisfies the SDE $d\mfZ_t^k = \mfZ_{t^-}^k\int_{\Rp} (Y_{k,t}(\xi)-1)\,\tilde N^{\P_k} (d\xi, dt)$, $\mfZ_0^k=1$, for some random field $Y_k$ such that $Y_k\nu_k$ is $\sigma$-finite and $Y_k\nu_k$ is the $\Q$-compensator of the random measure $N$. 
As $\nu_k$ admits a density $v_k$, $Y_k\nu_k$ must also and its support must coincide with $\nu_k$. Denote the density corresponding to $Y_k\nu_k$ as $b_k$, then we have that $Y_k=\frac{b_k}{v_k}$. Further note, that the SDE admits the stochastic exponential form 
\begin{equation}
    \label{eqn:mfZ-stoch-exp}
    \mfZ_T^k=e^{-\int_0^T \! \int_{\Rp} \!\!  \left(Y_{k,t}(\xi)-1\right)v_k\,d\xi\,dt + \int_0^T \! \int_{\Rp} \!\! \log Y_{k,t}(\xi)\,N(d\xi,dt)}.
\end{equation}

As $\P_l\ll \P_k$ for any $l\in\cI$ and $k$ still fixed as above, we similarly have that $\frac{d\P_l}{d\P_k}=\mfZ^{k,l}_T$, where $(\mfZ_t^{k,l})_{t\in[0,T]}$ satisfies the SDE $d\mfZ_t^{k,l} = \mfZ_{t^-}^{k,l}\int_{\Rp} (Y_{k,l,\, t}(\xi)-1)\,\tilde N^{\P_k} (d\xi, dt)$, $\mfZ_0^{k,l}=1$, and $Y_{k,l}=\frac{v_l}{v_k}$.

Next, take $\Q\in\Delta^2(\{\P_k\}_{k \in \cI})$, we then have
\begin{align}
    \mfZ_T^k=\frac{d\Q}{d\P_k}=\frac{d\Q}{d\P_l}\frac{d\P_l}{d\P_k}&=\mfZ_{T}^{l}\,\mfZ_{T}^{k,l}
    \\
    \begin{split}
        &= e^{-\int_0^T \! \int_{\Rp} \!\!  \left(\frac{b_l(t,\xi)}{v_l(t,\xi)}-1\right)v_l\,d\xi\,dt + \int_0^T \! \int_{\Rp} \!\! \log\frac{b_l(t,\xi)}{v_l(t,\xi)}\,N(d\xi,dt)}
        \\
        &\quad \times 
        e^{-\int_0^T \! \int_{\Rp} \!\! \left(\frac{v_l(t,\xi)}{v_k(t,\xi)}-1\right)v_k\,d\xi\,dt + \int_0^T \! \int_{\Rp} \!\! \log\frac{v_l(t,\xi)}{v_k(t,\xi)}\,N(d\xi,dt)}
    \end{split}
    \\
    &=
        e^{-\int_0^T \! \int_{\Rp} \!\! \left(\frac{b_l(t,\xi)}{v_k(t,\xi)}-1\right)v_k\,d\xi\,dt + \int_0^T \! \int_{\Rp} \!\! \log\frac{b_l(t,\xi)}{v_k(t,\xi)}\,N(d\xi,dt)} \,.\label{eqn:mfZ-stoch-exp-2}
\end{align}
Therefore, by identifying \eqref{eqn:mfZ-stoch-exp} and \eqref{eqn:mfZ-stoch-exp-2}, we have $Y_k=\frac{b_l}{v_k}$. From earlier, we have that $Y_k=\frac{b_k}{v_k}$, hence $b_l=b_k$ for all $k,l\in\cI$. We denote this simply by $b$. Furthermore, note that $\mfZ_T^k=Z_{k,T}^b$ where $Z^b_k$ is given by \eqref{eqn:Z_SDE}.

Therefore, we have shown that for any $\Q\in\Delta^2(\{\P_k\}_{k \in \cI})$, there exists a single random field $b$, such that $\frac{d\Q}{d\P_k}=Z_{k,T}^b$ for all $k\in\cI$. Moreover, for any $\Q\in\Delta^2(\{\P_k\}_{k \in \cI})$, we have $\E^{\P_k} \left[\left(Z_{k,T}^b\right)^2\right]= \E^{\P_k} \left[\left(\tfrac{d\Q}{d\P_k}\right)^2\right] <\infty$ (by the definition of the set $\Delta^2(\{\P_k\}_{k \in \cI})$), hence, $b\in\mfB$.

We next show the other direction. Choose $\beta\in\mfB$, define the Radon-Nikodym derivatives $\frac{d\Q_\beta}{d\P_k}=Z_{k,T}^\beta$, with $\{(Z_{k,t}^\beta)_{k\in\cI}\}_{t\in[0,T]}$ satisfying the SDEs \eqref{eqn:Z_SDE} (note as argued in \eqref{eqn:girsanov}, the probability measure $\Q_\beta$ does not depend on $k$). The induced probability measure $\Q_\beta$ satisfies (i) $\Q_\beta\ll\P_k$, for all $k\in\cI$ and (ii) $\E^{\P_k}\left[\left(\frac{d\Q_\beta}{d\P_k}\right)^2\right]=\E^{\P_k}\left[(Z_{k,T}^\beta)^2\right] <\infty$ (as $\beta\in\mfB$).  Hence, the corresponding $\Q_\beta\in\Delta^2(\{\P_k\}_{k \in \cI})$.
\end{proof}

From this lemma, we immediately obtain the following corollary.
\begin{corollary}
The criterion \eqref{eqn:MMV1} is equivalent to
\begin{equation*}
    \inf_{\beta \in \mfB} \E^{\Q_\beta}  \left[ X^\alpha_T + \frac{1}{2 \theta} \sum_{\kinI} \pi_k \left(Z^\beta_{k,T} - 1 \right)\right] \,.
\end{equation*}
\end{corollary}

\section{The insurer's optimal risk sharing strategy under model combination}
\label{sec:main-results}

In this section, we first state the insurer's optimization problem and solve it for the insurer's optimal risk sharing strategy, denoted $\alpha^*$, and the optimal probability measure, denoted $\Q^* := \Q_{\beta^*}$. We then show that under the optimal risk sharing strategy, the insurer's wealth $X$ and the auxiliary processes $Z_k$, $\kinI$, may be written in a semi-explicit form, and use this to determine the mean and variance of the processes at any time in $[0,T]$. In particular, we show that there is a linear relationship between the optimal $X$ and the $Z_k$'s. Finally, we provide a verification theorem, showing the candidate controls are indeed optimal.

\subsection{Optimization problem and candidate solution}
\begin{optimization}
\label{opt:insurer-problem}
    The insurer seeks the solution to the following problem:
    \begin{equation*}
        \sup_{\alpha \in \A} \newinf_{\beta \in \mfB} \E^{\Q_\beta}  \left[ X^\alpha_T + \frac{1}{2 \theta} \sum_{\kinI} \pi_k \left(Z^\beta_{k,T} - 1 \right)\right] \,,
    \end{equation*}
    where
    \begin{subequations}
         \label{eqn:SDEs}
        \begin{align}
        \label{eqn:X_Q_SDE}
            d X_t^\alpha &= \left[ c - (1+\eta) \int_{\Rp} \!\!\! \alpha_t(\xi) \, \nu_C (d\xi)  - \int_{\Rp} \!\!\! [\xi - \alpha_t(\xi)] \,\beta_t(\xi) \, d\xi  \right] dt - \int_{\Rp} \!\!\! [\xi - \alpha_t(\xi)] \, \tilde N^{\Q_\beta}(d\xi,dt) \,,\\
            dZ_{k,t}^\beta &=  Z_{k,t^-}^\beta \int_{\Rp} \!\! \left[ 1 - \frac{\beta_t(\xi)}{v_k(\xi)}\right]^2 \nu_k(d\xi)dt - Z_{k,t^-}^\beta \int_{\Rp} \!\! \left[ 1 - \frac{\beta_t(\xi)}{v_k(\xi)}\right] \tilde N^{\Q_\beta}(d\xi,dt)  \,, \qquad \forall\;\kinI,
        \end{align}
    \end{subequations}
    with $X^\alpha_0 = x_{0}$, $Z^\beta_{k,0}=1$ for all $\kinI$.
\end{optimization}
We obtain the dynamics of $X^\alpha$ and $Z_k^\beta$, $\kinI$, given in \eqref{eqn:SDEs} by writing the original dynamics given in \eqref{eqn:X_SDE} and \eqref{eqn:Z_SDE} in terms of the candidate measure $\Q_\beta$.

Let $\bZ_t = (Z_{1,t}, \ldots, Z_{n,t}, Z_{C,t})$ be the vector version of the auxiliary processes and let $\bz = (z_1, \ldots, z_n, z_C)$ be an arbitrary vector in $\Rpnpo$. For fixed controls $\alpha \in \A$, $\beta \in \mfB$, we define the time-$t$ version of the value function as 
\begin{equation}
    J^{\alpha,\beta}(t,x,\bz) :=  \E^{\Q_\beta}_{t,x,\bz}  \left[ X^\alpha_T + \frac{1}{2 \theta} \sum_{\kinI} \pi_k \left(Z^\beta_{k,T} - 1 \right)\right] \,,
\end{equation}
where $\E^{\Q_\beta}_{t,x,\bz}[\cdot]$ denotes the $\Q_\beta$-expectation given that the processes $X$, $\bZ$ at time $t^-$ are equal to $x$ and $\bz$, respectively, i.e., $X_{t^-}=x$ and $\bZ_{t^-} = \bz$. We then define the insurer's time-$t$ optimal value function at the optimal controls as
\begin{equation*}
    J(t,x,\bz) := \sup_{\alpha \in \A} \newinf_{\beta \in \mfB} J^{\alpha,\beta}(t,x,\bz) \,.
\end{equation*}

We first derive a candidate value function for \zcref{opt:insurer-problem} and the associated candidate controls. Let $\mfS$ denote the set of functions $g:[0,T] \times \Rp \times \R \times \Rpnpo \to \R$ such that $\int_{\Rp} g(t,\xi, x, \bz) \,d\xi < \infty$ for every $t,x,\bz \in [0,T] \times \R \times \Rpnpo$.
The generator of the stochastic differential equations \eqref{eqn:SDEs} for Markov controls $a, b \in \mfS$ is
\begin{align}
\label{eqn:generator}
    \begin{split}
        A^{a,b} &f(t,x,\bz) 
        \\
        :=&\, \partial_t f(t,x,\bz) 
        \\
        & + \left[ c - (1+\eta) \int_{\Rp} \!\!\! a(t,\xi,x,\bz) \, \nu_C (d\xi)  - \int_{\Rp} \!\!\! [\xi - a(t,\xi,x,\bz)] \,b(t,\xi,x,\bz) \, d\xi  \right] \partial_x f(t,x,\bz)\\
        & + \sum_{\kinI} z_{k} \int_{\Rp} \!\! \left[ 1 - \frac{b(t,\xi,x,\bz)}{v_k(\xi)}\right]^2 \!\! \nu_k(d\xi) \,\partial_{z_k} f(t,x,\bz) 
        \\
	    & + \int_{\Rp} \Bigg[f \! \left(t,x - [\xi -  a(t,\xi,x,\bz)],z_1 \frac{b(t,\xi,x,\bz)}{v_1(\xi)},\ldots, z_n \frac{b(t,\xi,x,\bz)}{v_n(\xi)},z_C \frac{b(t,\xi,x,\bz)}{v_C(\xi)} \right) - f(t,x,\bz)\\
        & \quad\qquad \, + [\xi -  a(t,\xi,x,\bz)] \partial_x f(t,x,\bz)+  \sum_{k\in\cI} z_k \left( 1 - \frac{b(t,\xi,x,\bz)}{v_k(\xi)}\right) \partial_{z_k} f(t,x,\bz)\Bigg] b(t,\xi,x,\bz) d\xi \,,
    \end{split}  
\end{align}
where $f$ is a continuously differentiable function on $(0,T) \times \R \times \Rpnpo$.

The next result states the candidate controls and value function for \zcref{opt:insurer-problem}. We provide a verification result in \zcref{sec:verification}. 
\begin{proposition}
\label{prop:optim_contr}
    The candidate controls for \zcref{opt:insurer-problem} in feedback form are 
    \begin{subequations}
        \begin{align}
            \alpha^*(t,\xi,\bz) &= \xi  - \frac{1}{\theta} \sum_{\kinI} \pi_k \, z_k \, \ell_k(T-t) \left[ (1+\eta) \frac{v_C(\xi)}{v_k(\xi)} - 1 \right] \,,
            \label{eqn:alpha-star}
            \\
            \beta^*(\xi) &= (1+\eta) \, v_C(\xi) \,,
            \label{eqn:beta-star}
        \end{align}
    \end{subequations}
        where
    \begin{equation}
    \label{eqn:ell_k}
        \ell_k(t) = \exp \left( t \int_{\Rp} \!\! \left[1- (1+\eta) \frac{v_C(\xi)}{v_k(\xi)} \right]^2\!\! \nu_k(d\xi)\right) \,,
    \end{equation}
    and the candidate value function is
    \begin{equation*}
        J(t,x,\bz) = x + \frac{1}{2 \theta} \sum_{\kinI} \pi_k \, z_k \, \ell_k(T-t) - \frac{1}{2\theta} - \left[ (1+\eta) \int_{\Rp} \!\! \xi \, \nu_C(d\xi) - c \right] (T-t) \,.
    \end{equation*}
\end{proposition}
Before providing the proof, we first comment on the candidate optimal controls and value function.
First, note that by \zcref{assump:intgr_comp}, the integral over $\xi$ in the exponential in $\ell_k(t)$ is finite and thus $\ell_k(t) < \infty$ for all $t \in [0,T]$. Therefore, the candidate value function is well-defined. The form of the compensator of the optimal measure, $\beta^*$, is instructive: we see that the probability measure the insurer uses to determine the optimal risk sharing agreement is tied to the premium charged for the risk sharing. They use the counterparty's model for the losses, but increase the rate of arrival multiplicatively by $1+\eta$, where $\eta$ is the counterparty's markup rate to the premium. The form of the optimal risk sharing strategy, $\alpha^*$, is more complex. One may view it as a generalization of an excess-of-loss reinsurance contract, which is usually of the form $(\xi-d)_+$ for some retention limit $d>0$. In this case, however, the ``retention limit'' depends on the size of the loss, $\xi$, and there is no restriction that it be positive.

\begin{proof}
The insurer's value function
\begin{equation*}
    J(t,x,\bz) = \sup_{\alpha \in \A} \newinf_{\beta \in \mfB}\E^{\Q_\beta}_{t,x,\bz}  \left[ X^\alpha_T + \frac{1}{2 \theta} \sum_{\kinI} \pi_k \left(Z^\beta_{k,T} - 1 \right)\right] \,,
\end{equation*}
must satisfy the Hamilton-Jacobi-Bellman-Isaacs (HJBI) equation
\begin{align}
     0 & = \,\sup_{a \in \mfS} \newinf_{b \in \mfS} \big\{ A^{a,b} J(t,x,\bz) \big\} \label{eqn:HJBI1} \\
    & = \, \partial_t J(t,x,\bz) \nonumber \\
    & \quad + \sup_{a \in \mfS} \newinf_{b \in \mfS} \!\! \Bigg\{ \left[ c - (1+\eta) \int_{\Rp} \!\!\! a(t,\xi,x,\bz) \, \nu_C (d\xi)  - \int_{\Rp} \!\!\! [\xi - a(t,\xi,x,\bz)] \,b(t,\xi,x,\bz) \, d\xi  \right] \partial_x J(t,x,\bz) \nonumber \\
    & \hspace{6.5em} + \sum_{\kinI} z_{k} \int_{\Rp} \!\! \left[ 1 - \frac{b(t,\xi,x,\bz)}{v_k(\xi)}\right]^2 \!\! \nu_k(d\xi) \,\partial_{z_k} J(t,x,\bz) \nonumber \\
	& \hspace{6.5em} + \int_{\Rp} \Big[ J \! \left(t,x - [\xi -  a(t,\xi,x,z)],z_1 \frac{b(t,\xi,x,z)}{v_1(\xi)},\ldots,z_n \frac{b(t,\xi,x,z)}{v_n(\xi)},z_C \frac{b(t,\xi,x,z)}{v_C(\xi)}\right) \nonumber \\
    &  \hspace{9.5em} - J(t,x,\bz) + [\xi -  a(t,\xi,x,\bz)] \partial_x J(t,x,\bz) \nonumber \\
    &  \hspace{9.5em} +  \sum_{k\in\cI} z_k \left[ 1 - \frac{b(t,\xi,x,\bz)}{v_k(\xi)}\right] \partial_{z_k} J(t,x,\bz)\Big] b(t,\xi,x,\bz) d\xi \Bigg\} \nonumber 
\end{align}
with terminal condition
\begin{equation}
\label{eqn:HJBI2}
J(T,x,\bz) = x + \frac{1}{2\theta} \sum_{\kinI} \pi_k(z_k -1) \,.
\end{equation}
Using the Ansatz
\begin{equation*}
    J(t,x,\bz)= x + \sum_{\kinI} A_k(t)z_k + B(t) \,,
\end{equation*}
where $A_k$, $\kinI$, and $B$ are deterministic functions satisfying $A_k(T) = \pi_k/2\theta,\, \kinI \,, B(T) = -1/2\theta$, 
we obtain the simplified HJBI equation:
\begin{equation}
\begin{split}
    0 &=  \sum_{\kinI} A'_k(t) \, z_k  + \,  B'(t) \\
    & \quad + \sup_{a \in \mfS} \newinf_{b \in \mfS} \!\Bigg\{ \! c - (1+\eta) \int_{\Rp} \!\!\! a(t,\xi,x,\bz) \, \nu_C (d\xi)  - \int_{\Rp} \!\!\! [\xi - a(t,\xi,x,\bz)] \,b(t,\xi,x,\bz) \, d\xi  \\
    & \hspace{6em} + \sum_{\kinI} z_{k} A_k(t) \int_{\Rp} \!\! \left[ 1 - \frac{b(t,\xi,x,\bz)}{v_k(\xi)}\right]^2 \!\! \nu_k(d\xi) \Bigg\} \,.    
\end{split}
\label{eqn:HBJI-anstaz}
\end{equation}
Next, consider the infimum problem for $b$, where we minimize the following functional:
\begin{equation*}
    \mathcal{L}_1[b] := \int_{\Rp} \!\! \left( \, \sum_{\kinI} z_{k} A_k(t) \!\! \left[ 1 - \frac{b(t,\xi,x,\bz)}{v_k(\xi)}\right]^2 \!\! v_k(\xi) - [\xi - a(t,\xi,x,\bz)] \,b(t,\xi,x,\bz) \right) d\xi \,.
\end{equation*}
Let $\eps >0$ and $f$ be an arbitrary function such that $b + \eps f \in \mfS$. We apply a variational first order condition to obtain the equation: 
\begin{equation*}
    0 = \lim_{\eps \searrow 0} \frac{1}{\eps} \left( \mathcal{L}[b + \eps f] - \mathcal{L}[b] \right) = \int_{\Rp} \!\!\! f(t, \xi, x, \bz) \left( \, 2 \sum_{\kinI} z_{k} A_k(t) \!\! \left[\frac{b(t,\xi,x,\bz)}{v_k(\xi)} - 1\right] - [\xi - a(t,\xi,x,\bz)] \right) d\xi \,.
\end{equation*}
As the function $f$ is arbitrary, the above expression vanishes for
\begin{equation}
    \label{eqn:b_feedback}
    \hat b(t,\xi,x,\bz) = \frac{\xi - a(t,\xi,x,\bz) + 2 \sum_{\kinI} z_{k} A_k(t) }{ 2 \sum_{\kinI} A_k(t) \frac{z_{k} }{v_k(\xi)}} \,.
\end{equation}
Substituting this form back into the  HJBI equation \eqref{eqn:HBJI-anstaz} yields the new equation
\begin{equation}
\label{eqn:HBJI-b-subbed-in}
    \begin{split}
     0 = &\, \sum_{\kinI} A'_k(t) \, z_k  + B'(t) + c + \sum_{\kinI} A_k(t) \, z_k \int_{\Rp} \!\!\! \xi \, \nu_k (d\xi) \\
    &+ \sup_{a \in \mfS} \left\{ - \int_{\Rp} \!\! \left[ (1+\eta) \, a(t,\xi,x,\bz) \, v_C (\xi)  + \frac{\left(\xi - a(t,\xi,x,\bz) + 2 \sum_{\kinI} z_{k} A_k(t) \right)^2}{ 4\sum_{\kinI} A_k(t) \frac{z_{k} }{v_k(\xi)}} \right] d\xi \right\} \,, 
\end{split}
\end{equation}
with terminal conditions $A_k(T) = \pi_k/2\theta,\, \kinI \,, B(T) = -1/2\theta$.
Next, we solve the supremum problem for $a$ using a variational first order condition. To this end, we minimize the functional 
\begin{equation*}
    \mathcal{L}_2[a] :=  \int_{\Rp} \!\! \left[ (1+\eta) \, a(t,\xi,x,\bz) \, v_C (\xi)  + \frac{\left(\xi - a(t,\xi,x,\bz) + 2 \sum_{\kinI} z_{k} A_k(t) \right)^2}{ 4\sum_{\kinI} A_k(t) \frac{z_{k} }{v_k(\xi)}} \right] d\xi\,.
\end{equation*}
Let $\eps >0$ and $g$ be an arbitrary function such that $b + \eps g \in \mfS$. Then 
\begin{equation*}
    0 = \lim_{\eps \searrow 0} \frac{1}{\eps} \left( \mathcal{L}_2[a + \eps g] - \mathcal{L}_2[a] \right) = \int_{\Rp} \!\!\! g(t, \xi, x, \bz) \left(  (1+\eta) \,  v_C (\xi)  - \frac{\xi - a(t,\xi,x,\bz) + 2 \sum_{\kinI} z_{k} A_k(t) }{2\sum_{\kinI} A_k(t) \frac{z_{k} }{v_k(\xi)}} \right) d\xi \,.
\end{equation*}
As $g$ is arbitrary, we obtain the following feedback form for $a$:
\begin{equation}
    \label{eqn:a_feedback}
    \hat a(t,\xi,x,\bz) =\xi  - 2 \sum_{\kinI} A_k(t) \, z_k \left[ (1+\eta) \frac{v_C(\xi)}{v_k(\xi)} - 1 \right] \,.
\end{equation}
Substituting this expression into the HBJI equation \eqref{eqn:HBJI-b-subbed-in}, we obtain
\begin{align*}
     0 = &\, \sum_{\kinI} A'_k(t) \, z_k  + B'(t) + c - (1+\eta) \int_{\Rp} \!\!\! \xi \, \nu_C(d\xi) + \sum_{\kinI} A_k(t) \, z_k \int_{\Rp} \!\! \left[ 1 - (1+\eta) \frac{v_C(\xi)}{v_k(\xi)}\right]^2 \nu_k(d\xi) \,
\end{align*}
with terminal conditions $A_k(T) = \pi_k/2\theta,\, \kinI \,, B(T) = -1/2\theta$. Solving for the unknown functions $A_k(t)$, $\kinI$, and $B(t)$ gives
\begin{align*}
    A_k(t) &= \frac{\pi_k}{2\theta} \exp \left( \int_{\Rp} \left[1- (1+\eta) \frac{v_C(\xi)}{v_k(\xi)} \right]^2\!\! \nu_k(d\xi) \, (T-t) \right) , \quad \kinI \,, \\
    B(t) &=  \left[ (1+\eta) \int_{\Rp} \!\! \xi \, \nu_C(d\xi) - c \right] (t-T) - \frac{1}{2\theta} \,.
\end{align*}
Substituting these functions into the Ansatz gives the insurer's value function:
    \begin{align*}
        J(t,x,\bz)=& \, x + \frac{1}{2 \theta} \sum_{\kinI} \pi_k \, z_k \, \ell_k(T-t) - \frac{1}{2\theta} - \left[ (1+\eta) \int_{\Rp} \!\! \xi \, \nu_C(d\xi) - c \right] (T-t) \,,
    \end{align*}
    where 
    \begin{equation*}
         \ell_k(t) := \exp \left( t \int_{\Rp} \left[1- (1+\eta) \frac{v_C(\xi)}{v_k(\xi)} \right]^2\!\! \nu_k(d\xi)  \right) \,.
    \end{equation*}
    Furthermore, simplifying \eqref{eqn:b_feedback} and \eqref{eqn:a_feedback} and substituting in the $A_k(t)$, $\kinI$, gives the candidate optimal controls.
\end{proof}

\subsection{Processes under optimal controls}
\label{subsec:processes}
Next, we derive expressions for the processes $X$ and $\bZ$ under the candidate controls $\alpha^*$ and $\beta^*$, which we denote by $X^*$ and $\bZ^*$, respectively. We find that $X^*$ and $\bZ^*$ have a relatively simple form, which allows for explicit calculation of their mean and variance. The processes under the optimal controls satisfy the SDEs:
\begin{equation*}
    \begin{split}
         d X_t^* &=\left[c - (1+\eta) \int_{\Rp} \!\!\! \xi \, \nu_C(d\xi) + \frac{1+\eta}{\theta} \sum_{\kinI} \pi_k Z^*_{k,t^-} \ell_k(T-t) \int_{\Rp} \!\!\! \left[ (1+\eta) \frac{v_C(\xi)}{v_k(\xi)} - 1 \right] \nu_C(d\xi)  \right] dt \\
         & \quad - \frac{1}{\theta} \sum_{\kinI} \pi_k Z^*_{k,t^-} \ell_k(T-t) \int_{\Rp} \!\!\! \left[ (1+\eta) \frac{v_C(\xi)}{v_k(\xi)} - 1 \right]  N(d\xi,dt) \,, \\
         X_0^* &= x_{0} \,,
    \end{split}
\end{equation*}
where
    \begin{equation*}
        \ell_k(t) = \exp \left(t \int_{\Rp} \!\! \left[1- (1+\eta) \frac{v_C(\xi)}{v_k(\xi)} \right]^2\!\! \nu_k(d\xi) \right) \quad \text{for } \kinI \,,
    \end{equation*}
and for $\kinI$,
\begin{equation*}
    \begin{split}
         dZ^*_{k,t} &= Z^*_{k,t^-} \int_{\Rp} \!\!\!  \left[ v_k(\xi) - (1+\eta) v_C(\xi) \right] d\xi dt -  Z^*_{k,t^-} \int_{\Rp} \!\!\!  \left[ 1 - (1+\eta) \frac{v_C(\xi)}{v_k(\xi)} \right]  N(d\xi,dt) \,, \\
         Z^*_{k,0} &= 1 \,. 
    \end{split}
\end{equation*}

Using It\^o's lemma, we solve these SDEs and find that $X^*$ may be written as a linear combination of the $Z^*_k$'s.
\begin{proposition}
\label{prop:expression-Z-X}
For $t \in [0,T]$:
    \begin{align}
        Z^*_{k,t} &= \exp \left(t \int_{\Rp} \!\!\!  \left[ v_k(\xi) - (1+\eta) v_C(\xi) \right] d\xi + \int_0^t \!\! \int_{\Rp} \!\!\!  \ln \left( (1+\eta) \frac{v_C(\xi)}{v_k(\xi)} \right)  N(d\xi,ds) \right)\,, \quad \kinI \,, \nonumber \\
        X^*_t &= x_{0} + \left[ c - (1+\eta) \int_{\Rp} \!\!\! \xi \, \nu_C(d\xi) \right] t + \frac{1}{\theta} \sum_{\kinI} \pi_k \, \ell_k(T) \, \left[1 - \ell_k(-t) Z^*_{k,t}\right] \,.\nonumber
    \end{align}
\end{proposition}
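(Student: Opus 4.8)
The plan is to substitute the candidate controls $\alpha^*,\beta^*$ of \cref{prop:optim_contr} into the dynamics \eqref{eqn:SDEs} --- which gives precisely the $X^*$- and $Z^*_k$-SDEs displayed just above the statement --- and then verify the two closed forms by It\^o's lemma. Throughout write $d_k:=\int_{\Rp}[v_k(\xi)-(1+\eta)v_C(\xi)]\,d\xi$ and $c_k:=\int_{\Rp}[1-(1+\eta)\tfrac{v_C(\xi)}{v_k(\xi)}]^2\,\nu_k(d\xi)$, so that $\ell_k(t)=e^{c_k t}$; by \cref{assump:intgr_comp} both constants are finite, hence the claimed formulas are well defined.

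For the auxiliary processes, I would verify the expression for $Z^*_{k,t}$ directly. Let $Y_{k,t}:=t\,d_k+\int_0^t\!\int_{\Rp}\ln\!\big((1+\eta)\tfrac{v_C(\xi)}{v_k(\xi)}\big)\,N(d\xi,ds)$ be the exponent in the claimed formula, so the candidate is $Z^*_{k,t}=e^{Y_{k,t}}$ with $Z^*_{k,0}=1$. Applying It\^o's formula to $t\mapsto e^{Y_{k,t}}$: the absolutely continuous part of $Y_{k,\cdot}$ contributes the drift $Z^*_{k,t^-}d_k\,dt$, while a jump of $N$ of size $\xi$ multiplies $e^{Y_{k,\cdot}}$ by $(1+\eta)\tfrac{v_C(\xi)}{v_k(\xi)}$ --- equivalently changes it by $-Z^*_{k,t^-}[1-(1+\eta)\tfrac{v_C(\xi)}{v_k(\xi)}]$ --- which reproduces exactly the $N$-integral in the $Z^*_k$-dynamics. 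Since $\int_{\Rp}\nu_k(d\xi)<\infty$ there are $\P_k$-a.s.\ finitely many jumps on $[0,T]$, so no integrability subtlety arises, and the multiplicative factors are strictly positive because the compensators are equivalent.

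For $X^*$, consider the candidate $X^*_t=x+\big[c-(1+\eta)\int_{\Rp}\xi\,\nu_C(d\xi)\big]t+\tfrac1\theta\sum_{\kinI}\pi_k\,\ell_k(T)\big[1-\ell_k(-t)Z^*_{k,t}\big]$. I would apply the It\^o product rule to $\ell_k(-t)Z^*_{k,t}=e^{-c_k t}Z^*_{k,t}$, using the $Z^*_k$-SDE just verified and the identity $\ell_k(T)\ell_k(-t)=\ell_k(T-t)$. This yields, for the candidate, a jump term that coincides term-by-term with the $N$-integral in the $X^*$-SDE, and a drift term whose agreement with the drift of the $X^*$-SDE reduces to the single scalar identity
\begin{equation*}
 (1+\eta)\int_{\Rp}\Big[(1+\eta)\tfrac{v_C(\xi)}{v_k(\xi)}-1\Big]\nu_C(d\xi)=c_k-d_k,
\end{equation*}
which follows by expanding the square in $c_k$ and using $\int_{\Rp}\tfrac{v_C(\xi)}{v_k(\xi)}\nu_k(d\xi)=\int_{\Rp}\nu_C(d\xi)$ and $\int_{\Rp}\tfrac{v_C^2(\xi)}{v_k^2(\xi)}\nu_k(d\xi)=\int_{\Rp}\tfrac{v_C^2(\xi)}{v_k(\xi)}\,d\xi$ (finite by \cref{assump:intgr_comp}). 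At $t=0$, $Z^*_{k,0}=1$ and $\ell_k(0)=1$ give $X^*_0=x$, matching the initial condition.

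The calculations are elementary, so I expect the only real work to be bookkeeping: carrying the It\^o/product-rule differentiation of $\ell_k(-t)Z^*_{k,t}$ cleanly, keeping careful track of which contributions land in the $dt$-part versus the $N(d\xi,dt)$-part, and then collapsing the drift comparison to the scalar identity above. A secondary point worth a sentence is justifying the stochastic-exponential representation of $Z^*_k$; because the jump intensity is finite this is immediate path-by-path, but it --- together with the finiteness of $d_k$, $c_k$ and the $\nu_C$-integrals under \cref{assump:intgr_comp} --- should be noted so that the stated formulas are genuinely well defined.
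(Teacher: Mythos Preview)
Your proposal is correct and follows essentially the same route as the paper: both arguments verify the two formulas by applying It\^o's lemma to the processes under the candidate controls. The only cosmetic difference is directional---the paper applies It\^o to $\ln Z^*_{k,t}$ and to the single auxiliary function $f(t,X^*_t,\bZ^*_t)=X^*_t+\tfrac1\theta\sum_k\pi_k\ell_k(T-t)Z^*_{k,t}-[\,\cdot\,]\,t-\tfrac1\theta\sum_k\pi_k\ell_k(T)$ and observes the result is constant, whereas you verify the exponential candidate for $Z^*_k$ and then differentiate the candidate for $X^*$ via the product rule on $\ell_k(-t)Z^*_{k,t}$, reducing the drift match to the scalar identity $c_k-d_k=(1+\eta)\int_{\Rp}[(1+\eta)\tfrac{v_C}{v_k}-1]\,\nu_C(d\xi)$; that identity is exactly the cancellation the paper carries out line by line.
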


\begin{proof}
    For $Z^*_{k,t}$, applying It\^{o}'s formula for semimartingales to the function $\ln(Z^*_{k,t})$ for each $\kinI$ gives 
    \begin{align*}
        \ln(Z^*_{k,t}) = & \, \ln(Z^*_{k,0}) + \int_0^t \frac{1}{Z^*_{k,s^-}} dZ^*_{k,s} \\
        & +\int_0^t \!\! \int_{\Rp} \!\!\! \left\{ \ln\left(Z^*_{k,s^-} - Z^*_{k,s^-} \left[ 1 - (1+\eta) \frac{v_C(\xi)}{v_k(\xi)} \right] \right) - \ln(Z^*_{k,s^-}) +  \left[ 1 - (1+\eta) \frac{v_C(\xi)}{v_k(\xi)} \right]\right\} N(d\xi,ds) \\
        = & \int_0^t \!\! \int_{\Rp} \!\!\!  \left[ v_k(\xi) - (1+\eta) v_C(\xi) \right] d\xi ds - \int_0^t \!\!\int_{\Rp} \!\!\!  \left[ 1 - (1+\eta) \frac{v_C(\xi)}{v_k(\xi)} \right]  N(d\xi,ds) \\
        & +\int_0^t \!\! \int_{\Rp} \!\!\! \left\{ \ln\left( (1+\eta) \frac{v_C(\xi)}{v_k(\xi)} \right) + \left[ 1 - (1+\eta) \frac{v_C(\xi)}{v_k(\xi)} \right]\right\} N(d\xi,ds) \\
        = & \int_0^t \!\! \int_{\Rp} \!\!\!  \left[ v_k(\xi) - (1+\eta) v_C(\xi) \right] d\xi ds +\int_0^t \!\! \int_{\Rp} \!\!\!  \ln\left( (1+\eta) \frac{v_C(\xi)}{v_k(\xi)} \right) N(d\xi,ds) \,. 
    \end{align*}
    Exponentiation gives the result. \\
    For $X^*_t$, define the function 
    \begin{equation*}
        f(t, X^*_t, \bZ^*_t) = X^*_t + \frac{1}{\theta} \sum_{\kinI} \pi_k \,  \ell_k(T-t) Z^*_{k,t}  - \left[ c - (1+\eta) \int_{\Rp} \!\!\! \xi \, \nu_C(d\xi) \right] t - \frac{1}{\theta} \sum_{\kinI} \pi_k \, \ell_k(T) \,.
    \end{equation*}
    Then by  It\^{o}'s formula for multi-dimensional semimartingales (see, e.g., \textcite[Chapter 2, Theorem 33]{Protter2005}), we have
    \begin{align*}
        f(t, X^*_t, \bZ^*_t) = & \, f(0, X^*_0,  \bZ^*_0) + \int_0^t \frac{\partial f}{\partial s} (s, X^*_{s^-}, \bZ^*_{s^-}) \, ds \\
        &+ \int_0^t \frac{\partial f}{\partial x} (s, X^*_{s^-}, \bZ^*_{s^-}) dX^*_s + \sum_{\kinI} \int_0^t \frac{\partial f}{\partial z_k} (s, X^*_{s^-}, \bZ^*_{s^-}) dZ^*_{k,s} \\
        &+ \int_0^t \!\! \int_{\Rp} \!\!\! \Big\{ f \left(s, X^*_{s^-} + \Delta X^*_{s}, \bZ^*_{s^-} + \Delta \bZ^*_{s} \right) - f(s, X^*_{s^-},  \bZ^*_{s^-})  \\
        & \qquad \qquad \quad  - \frac{\partial f}{\partial x} (s, X^*_{s^-},  \bZ^*_{s^-}) \Delta  X^*_{s}  - \sum_{\kinI} \frac{\partial f}{\partial z_k} (s, X^*_{s^-},  \bZ^*_{s^-}) \Delta Z^*_{k,s} \Big\} N(d\xi,ds) \,,
    \end{align*}
    where for a process $Y$, $\Delta Y_s = Y_s - Y_{s^-}$ is the jump at $s$. In particular, $\Delta X^*_s = - \frac{1}{\theta} \sum_{\kinI} \pi_k Z^*_{k,s^-} \ell_k(T-s) \left[ (1+\eta) \frac{v_C(\xi)}{v_k(\xi)} - 1 \right]$ is the jump in $X^*$ at $s$ and $\Delta \bZ^*_s := (\Delta Z^*_{1,s} , \ldots, \Delta Z^*_{n,s}, \Delta Z^*_{C,s} ) = \Big(-Z^*_{1, s^-} \left[1 - (1+\eta) \frac{v_C(\xi)}{v_1(\xi)} \right],$ $\ldots, -Z^*_{n, s^-} \left[1 - (1+\eta) \frac{v_C(\xi)}{v_n(\xi)} \right],\eta Z^*_{C, s^-}\Big)$ is the jump in the vector $\bZ^*$ at $s$. We have that
    \begin{align*}
        0 &= f \left(s, X^*_{s^-} + \Delta X^*_{s}, \bZ^*_{s^-} + \Delta \bZ^*_{s} \right) - f(s, X^*_{s^-},  \bZ^*_{s^-}) \,\, \text{and } \\
        0 &= - \frac{\partial f}{\partial x} (s, X^*_{s^-},  \bZ^*_{s^-}) \Delta  X^*_{s}  - \sum_{\kinI} \frac{\partial f}{\partial z_k} (s, X^*_{s^-},  \bZ^*_{s^-}) \Delta  Z^*_{k,s} \,.
    \end{align*}
    Furthermore, by the definition of $\ell_k(t)$, we have that for $\kinI$,
    \begin{align*}
       \partial_t \, \ell_k(T-t) &= - \exp \left((T-t) \int_{\Rp} \!\! \left[1- (1+\eta) \frac{v_C(\xi)}{v_k(\xi)} \right]^2\!\! \nu_k(d\xi)  \right) \int_{\Rp} \!\! \left[1- (1+\eta) \frac{v_C(\xi)}{v_k(\xi)} \right]^2\!\! \nu_k(d\xi)   \\
       &= - \ell_k(T-t) \int_{\Rp} \!\! \left[1- (1+\eta) \frac{v_C(\xi)}{v_k(\xi)} \right]^2\!\! \nu_k(d\xi)   \,.
    \end{align*}

    Substituting in the derivatives and jumps, we obtain,
    \begin{align*}
        f(t, X^*_t, \bZ^*_t) = x_{0} &+ \int_0^t \left[ - \frac{1}{\theta} \sum_{\kinI} \pi_k \,  \ell_k(T-s) Z^*_{k,s} \int_{\Rp} \!\! \left[1- (1+\eta) \frac{v_C(\xi)}{v_k(\xi)} \right]^2\!\! \nu_k(d\xi) - c + (1+\eta) \int_{\Rp} \!\!\! \xi \, \nu_C(d\xi) \right] ds \\
        &+ \int_0^t \left[c - (1+\eta) \int_{\Rp} \!\!\! \xi \, \nu_C(d\xi) + \frac{1+\eta}{\theta} \sum_{\kinI} \pi_k Z^*_{k,s^-} \ell_k(T-s) \int_{\Rp} \!\! \left[ (1+\eta) \frac{v_C(\xi)}{v_k(\xi)} - 1 \right] \nu_C(d\xi)  \right] ds \\
        & - \int_0^t \frac{1}{\theta} \sum_{\kinI} \pi_k \, Z^*_{k,s^-} \ell_k(T-s) \int_{\Rp} \!\! \left[ (1+\eta) \frac{v_C(\xi)}{v_k(\xi)} - 1 \right]  N(d\xi,ds)   \\
        & + \sum_{\kinI} \int_0^t \frac{\pi_k}{\theta} \, \ell_k(T-s) Z^*_{k,s^-}  \int_{\Rp} \!\!\!  \left[ v_k(\xi) - (1+\eta) v_C(\xi) \right] d\xi ds \\
        &- \sum_{\kinI} \int_0^t \frac{\pi_k}{\theta} \, \ell_k(T-s)  Z^*_{k,s^-} \int_{\Rp} \!\!  \left[ 1 - (1+\eta) \frac{v_C(\xi)}{v_k(\xi)} \right]  N(d\xi,ds) 
    \end{align*}
    and upon cancellation, we have
    \begin{equation*}
        f(t, X^*_t, \bZ^*_t)= x_{0} \,.
    \end{equation*}
    Substituting in the definition of $f(t, X^*_t, \bZ^*_t)$ gives the result.
\end{proof}

Next, we use \zcref{prop:expression-Z-X} to calculate the expected value of the auxiliary processes $Z_k^*$, $\kinI$, under the reference measures $\P_k$ and the optimal measure $\Q^*$.

\begin{corollary}
\label{cor:EZ}
    For $t\in[0,T]$ and all $\kinI$, $\E^{\P_k}[Z^*_{k,t}] = 1$  and $ \E^{\Q^*}[Z^*_{k,t}] = \ell_k(t)$.
\end{corollary}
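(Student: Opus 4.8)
The plan is to read both identities off the closed form of $Z^*_{k,t}$ obtained in \cref{prop:expression-Z-X}. That formula exhibits $Z^*_{k,t} = e^{\mu_k t}\,e^{Y_t}$, where $\mu_k := \int_{\Rp}[v_k(\xi)-(1+\eta)v_C(\xi)]\,d\xi$ and $Y_t := \int_0^t\!\int_{\Rp}g_k(\xi)\,N(d\xi,ds)$ with $g_k(\xi):=\ln\!\big((1+\eta)v_C(\xi)/v_k(\xi)\big)$. Since $\int_{\Rp}\nu_k(d\xi)<\infty$ for every $\kinI$, under any of the candidate measures the restriction of $N$ to $[0,t]\times\Rp$ carries only finitely many atoms, so $Y_t$ is a compound-Poisson integral and the exponential formula for Poisson integrals, $\E[e^{Y_t}] = \exp\!\big(t\int_{\Rp}(e^{g_k(\xi)}-1)\,m(d\xi)\big)$, holds for the relevant intensity measure $m$ provided the exponential moment $\int_{\Rp}e^{g_k(\xi)}\,m(d\xi)$ is finite.

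For the first identity I would take $m=\nu_k$, the $\P_k$-intensity. Then $e^{g_k(\xi)}=(1+\eta)v_C(\xi)/v_k(\xi)$, so $\int_{\Rp}e^{g_k}\,\nu_k = (1+\eta)\,\nu_C(\Rp)<\infty$, the formula applies, and $\E^{\P_k}[e^{Y_t}] = \exp\!\big(t\int_{\Rp}[(1+\eta)v_C(\xi)-v_k(\xi)]\,d\xi\big) = e^{-\mu_k t}$. Hence $\E^{\P_k}[Z^*_{k,t}] = e^{\mu_k t}\,e^{-\mu_k t} = 1$.

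For the second identity I would instead work under $\Q^*=\Q_{\beta^*}$, under which Girsanov's theorem for random measures (as recorded in \cref{sec:setting-prelim}) gives $N$ the $\Q^*$-compensator $\beta^*_t(\xi)\,d\xi\,dt = (1+\eta)v_C(\xi)\,d\xi\,dt$, with $\beta^*$ from \cref{prop:optim_contr}. Now $\int_{\Rp}e^{g_k(\xi)}(1+\eta)v_C(\xi)\,d\xi = (1+\eta)^2\int_{\Rp}v_C^2(\xi)/v_k(\xi)\,d\xi$, which is finite by \cref{assump:intgr_comp}, so the exponential formula again applies and $\E^{\Q^*}[e^{Y_t}] = \exp\!\big(t\int_{\Rp}[(1+\eta)^2 v_C^2(\xi)/v_k(\xi) - (1+\eta)v_C(\xi)]\,d\xi\big)$. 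Multiplying by $e^{\mu_k t}$ collapses the exponent to $t\int_{\Rp}v_k(\xi)\,[1-(1+\eta)v_C(\xi)/v_k(\xi)]^2\,d\xi = t\int_{\Rp}[1-(1+\eta)v_C(\xi)/v_k(\xi)]^2\,\nu_k(d\xi)$, which is exactly $\ln\ell_k(t)$ by the definition of $\ell_k$; hence $\E^{\Q^*}[Z^*_{k,t}] = \ell_k(t)$.

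The algebraic simplifications are routine; the only genuine obstacle is justifying the exponential formula, i.e.\ checking the exponential moments, which under $\P_k$ is immediate from the assumed finiteness of $\nu_k$ and $\nu_C$ while under $\Q^*$ is precisely the content of \cref{assump:intgr_comp}. If one prefers to avoid the explicit representation, an equivalent route is to rewrite the SDE for $Z^*_k$ of \cref{subsec:processes}: under $\P_k$ the drift cancels, so $Z^*_k$ is a nonnegative $\P_k$-local martingale; under $\Q^*$ the drift becomes $(\ln\ell_k(1))\,Z^*_{k,t^-}\,dt$, so $\ell_k(-t)\,Z^*_{k,t}$ is a nonnegative $\Q^*$-local martingale. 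One would then upgrade these local martingales to true martingales using the same moment bounds (a nonnegative local martingale is a supermartingale, made a martingale once its $L^2$-norm is controlled uniformly on $[0,T]$ via \cref{assump:intgr_comp}), and conclude $\E^{\P_k}[Z^*_{k,t}] = Z^*_{k,0} = 1$ and $\E^{\Q^*}[\ell_k(-t)Z^*_{k,t}] = 1$. The transform argument above is the shorter of the two.
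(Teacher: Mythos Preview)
Your proposal is correct and follows essentially the same approach as the paper: both use the closed-form representation of $Z^*_{k,t}$ from \cref{prop:expression-Z-X} together with the exponential formula for Poisson random measures, first under $\P_k$ (intensity $\nu_k$) to get $\E^{\P_k}[Z^*_{k,t}]=1$, then under $\Q^*$ (intensity $(1+\eta)\nu_C$) to obtain $\ell_k(t)$. Your write-up is slightly more explicit about the moment checks justifying the exponential formula, and the alternative local-martingale route you sketch is a valid but unnecessary detour.
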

\begin{proof}
    Under each $\P_k$, $\kinI$, the PRM $N$ has compensator $\nu_k(d\xi) \,dt = v_k(\xi)d\xi \,dt$. Using the representation of $Z_{k}^*$ from \zcref{prop:expression-Z-X} and the exponential formula for Poisson random measures, we compute
    \begin{align*}
        \E^{\P_k}[Z^*_{k,t}] &= \exp \left(t \int_{\Rp} \!\!\!  \left[ v_k(\xi) - (1+\eta) v_C(\xi) \right] d\xi \right) \E^{\P_k}\!\left[  \exp \left(\int_0^t \!\! \int_{\Rp} \!\!\!  \ln \left( (1+\eta) \frac{v_C(\xi)}{v_k(\xi)} \right)  N(d\xi,ds) \right)\right] \\
        &=  \exp \left( t \int_{\Rp} \!\!\!  \left[ v_k(\xi) - (1+\eta) v_C(\xi) \right] d\xi \right) \exp \left(t \int_{\Rp} \!\!\! \left[ (1+\eta) \frac{v_C(\xi)}{v_k(\xi)} - 1 \right] v_k(\xi) d\xi \right) \\
        &= 1\,.
    \end{align*}
    Under $\Q^*$, the PRM $N$ has compensator $(1+\eta) \,\nu_C(d\xi) \,dt$. Again using the representation of $Z_{k}^*$ from \zcref{prop:expression-Z-X}, we compute for $\kinI$, 
    \begin{align*}
        \E^{\Q^*}[Z^*_{k,t}] &= \exp \left(t \int_{\Rp} \!\!\!  \left[ v_k(\xi) - (1+\eta) v_C(\xi) \right] d\xi \right) \E^{\Q^*}\!\left[  \exp \left(\int_0^t \!\! \int_{\Rp} \!\!\!  \ln \left( (1+\eta) \frac{v_C(\xi)}{v_k(\xi)} \right)  N(d\xi,ds) \right)\right] \\
        &=  \exp \left(t\int_{\Rp} \!\!\!  \left[ v_k(\xi) - (1+\eta) v_C(\xi) \right] d\xi \right) \exp \left(\int_0^t \!\! \int_{\Rp} \!\!\! \left[ (1+\eta) \frac{v_C(\xi)}{v_k(\xi)} - 1 \right]  (1+\eta) \nu_C(d\xi) ds \right) \\
        &=  \exp \left( t \int_{\Rp} \!\! \left[1- (1+\eta) \frac{v_C(\xi)}{v_k(\xi)} \right]^2\!\! \nu_k(d\xi) \right) \\
        &= \ell_k(t) \,. \qedhere
    \end{align*}
\end{proof}

\zcref{prop:expression-Z-X} also allows us to find the expected wealth of the insurer under the optimal strategy and optimal measure $\Q^*$. This highlights an interesting feature of the measure $\Q^*$: under this measure, the expected value of the insurer's wealth under the optimal risk sharing strategy, $X^*$, is the same as the expected value of the insurer's wealth with no risk sharing, $X^{CL}$, given by \eqref{eqn:X_CL}.

\begin{corollary}
\label{cor:EXZ}
    Under \zcref{assump:intgr_comp}, for $t \in [0,T]$,
    \begin{align*}
        \E^{\Q^*}[X^*_t] &= x_{0} + \left[ c - (1+\eta) \int_{\Rp} \!\!\! \xi \, \nu_C(d\xi) \right] t  = \E^{\Q^*}[X^{CL}_t]\,.
    \end{align*}  
\end{corollary}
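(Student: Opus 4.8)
The plan is to substitute the closed-form expression for $X^*_t$ from \cref{prop:expression-Z-X} and take $\Q^*$-expectations term by term. Recall
\begin{equation*}
    X^*_t = x + \left[ c - (1+\eta) \int_{\Rp} \!\! \xi \, \nu_C(d\xi) \right] t + \frac{1}{\theta} \sum_{\kinI} \pi_k \, \ell_k(T) \, \left[1 - \ell_k(-t) Z^*_{k,t}\right] \,.
\end{equation*}
The first two terms are deterministic, so the only work is in the sum. First I would apply \cref{cor:EZ}, which gives $\E^{\Q^*}[Z^*_{k,t}] = \ell_k(t)$ for each $\kinI$; \cref{assump:intgr_comp} guarantees $\ell_k(t)<\infty$ on $[0,T]$, so this expectation is finite and the expression for $X^*_t$ is $\Q^*$-integrable, legitimizing the term-by-term computation. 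The key elementary observation is that $\ell_k$ is the exponential of a linear function of its argument, hence $\ell_k(-t)\,\ell_k(t) = \ell_k(0) = 1$. Therefore $\E^{\Q^*}\!\left[1 - \ell_k(-t) Z^*_{k,t}\right] = 1 - \ell_k(-t)\,\ell_k(t) = 0$, and the entire sum vanishes in expectation, yielding the first equality.

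For the second equality I would compute $\E^{\Q^*}[X^{CL}_t]$ directly from \eqref{eqn:X_CL}. Under $\Q^*=\Q_{\beta^*}$ the PRM $N$ has compensator $(1+\eta)\,\nu_C(d\xi)\,dt$ (as established after \cref{def:beta_int2} with $\beta^* = (1+\eta)v_C$), so by the compensation formula and the standing assumption $\int_{\Rp}\xi\,\nu_C(d\xi)<\infty$,
\begin{equation*}
    \E^{\Q^*}\!\left[ \int_0^t \!\! \int_{\Rp} \!\! \xi \, N(d\xi,ds) \right] = t\,(1+\eta) \int_{\Rp} \!\! \xi \, \nu_C(d\xi) \,,
\end{equation*}
whence $\E^{\Q^*}[X^{CL}_t] = x + ct - t\,(1+\eta)\int_{\Rp}\xi\,\nu_C(d\xi)$, which matches the expression obtained above.

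There is no substantial obstacle here: the proof is essentially a one-line consequence of \cref{prop:expression-Z-X} and \cref{cor:EZ} together with the cancellation $\ell_k(-t)\ell_k(t)=1$. The only point requiring (minor) care is confirming integrability so that the expectation of the sum splits into the sum of expectations, which is immediate from the finiteness of $\ell_k$ under \cref{assump:intgr_comp} and the finiteness of $\E^{\Q^*}[Z^*_{k,t}]$ from \cref{cor:EZ}.
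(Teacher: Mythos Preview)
Your proposal is correct and mirrors the paper's own proof almost line for line: both invoke \cref{prop:expression-Z-X}, apply \cref{cor:EZ} to get $\E^{\Q^*}[Z^*_{k,t}]=\ell_k(t)$, use the cancellation $\ell_k(-t)\ell_k(t)=\ell_k(0)=1$, and then compute $\E^{\Q^*}[X^{CL}_t]$ directly from the $\Q^*$-compensator $(1+\eta)\nu_C(d\xi)\,dt$. Your added remark on integrability is a welcome bit of care but does not change the argument.
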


\begin{proof}
    Using \zcref{cor:EZ} and the representation of $X^*$ from \zcref{prop:expression-Z-X}, the computation of the expected value of $X$ under $\Q^*$ is straightforward as $\ell_k(t) \ell_k(-t) = \ell_k(0) = 1$:
    \begin{align*}
        \E^{\Q^*}[X^*_{t}] &= x_{0} + \left[ c - (1+\eta) \int_{\Rp} \!\!\! \xi \, \nu_C(d\xi) \right] t + \frac{1}{\theta} \sum_{\kinI} \pi_k \, \ell_k(T) \, \left[1 - \ell_k(-t) \, \E^{\Q^*}[Z^*_{k,t}]\right] \\
        &= x_{0} + \left[ c - (1+\eta) \int_{\Rp} \!\!\! \xi \, \nu_C(d\xi) \right] t + \frac{1}{\theta} \sum_{\kinI} \pi_k \, \ell_k(T) \, \left[1 - \ell_k(-t) \ell_k(t) \right] \\
        &= x_{0} + \left[ c - (1+\eta) \int_{\Rp} \!\!\! \xi \, \nu_C(d\xi) \right] t  \,. 
    \end{align*}
    To see that this is the same as the expected wealth with no risk sharing, note that 
    \begin{equation*}
        \E^{\Q^*}\left[X^{CL}_{t}\right] = \E^{\Q^*}\!\left[x_{0} + ct -  \int_0^t \!\! \int_{\Rp} \!\!\! \xi  \, N(d\xi,ds)\right] = x_{0} + c\,t -  (1+\eta) \, t \int_{\Rp} \!\!\! \xi  \,\nu_C(d\xi)  \,.\qedhere
    \end{equation*}
\end{proof}

The variance of the insurer's wealth under the optimal strategy may also be computed under $\Q^*$.

\begin{proposition}
\label{prop:Cov-Z}
Under \zcref{assump:intgr_comp_jk}, we have that for $t \in (0,T]$ and $j,k\in\cI$,
\begin{equation*}
     \mathrm{Cov}^{\Q^*}(Z^*_{k,t},Z^*_{j,t}) = \exp \left( t \int_{\Rp} \!\!\!  \left[ v_k(\xi) +  v_j(\xi)  - 3 (1+\eta) v_C(\xi) + (1+\eta)^3 \frac{v_C^3(\xi)}{v_k(\xi)v_j(\xi)}\right]  \,d\xi \right) - \ell_k(t) \ell_j(t) \,.
\end{equation*}
Let $\Sigma^{\Q^*}_{\bZ^*}$ denote the covariance matrix of $\bZ^*$, i.e., $(\Sigma^{\Q^*}_{\bZ^*})_{jk} = \mathrm{Cov}^{\Q^*}(Z^*_{j,t},Z^*_{k,t})$ and $\boldsymbol{p}_t := (\pi_1 \ell_1(T-t), \dots ,\pi_n \ell_n(T-t),\pi_C \ell_C(T-t))$. Then
\begin{align*}
    \var{\Q^*} (X^*_t) = \frac{1}{\theta^2} \, 
    \boldsymbol{p}_t^\intercal \,
    \Sigma^{\Q^*}_{\bZ^*} \, \boldsymbol{p}_t \,.
\end{align*}
\end{proposition}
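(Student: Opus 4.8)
The plan is to read the covariance directly off the explicit exponential representation of $Z^*_{k,t}$ in \cref{prop:expression-Z-X}, and then, since that same proposition writes $X^*_t$ as a deterministic affine function of the vector $\bZ^*_t$, to express $\var{\Q^*}(X^*_t)$ as a quadratic form in the covariance matrix $\Sigma^{\Q^*}_{\bZ^*}$.

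For the covariance I would write $\mathrm{Cov}^{\Q^*}(Z^*_{k,t},Z^*_{j,t}) = \E^{\Q^*}[Z^*_{k,t}Z^*_{j,t}] - \E^{\Q^*}[Z^*_{k,t}]\,\E^{\Q^*}[Z^*_{j,t}]$; by \cref{cor:EZ} the last product equals $\ell_k(t)\ell_j(t)$, which is exactly the subtracted term in the claim, so only the mixed second moment must be computed. Multiplying the two representations from \cref{prop:expression-Z-X} gives
\begin{equation*}
Z^*_{k,t}Z^*_{j,t} = \exp\!\left( t\!\int_{\Rp}\!\big[v_k(\xi)+v_j(\xi)-2(1+\eta)v_C(\xi)\big]d\xi + \int_0^t\!\!\int_{\Rp}\!\ln\!\Big((1+\eta)^2\tfrac{v_C^2(\xi)}{v_k(\xi)v_j(\xi)}\Big)N(d\xi,ds)\right),
\end{equation*}
and then I would apply the exponential formula for Poisson random measures, using that $N$ has $\Q^*$-compensator $(1+\eta)\nu_C(d\xi)\,dt$. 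With $g(\xi)=\ln\big((1+\eta)^2 v_C^2(\xi)/(v_k(\xi)v_j(\xi))\big)$ one has $e^{g(\xi)}-1 = (1+\eta)^2 v_C^2(\xi)/(v_k(\xi)v_j(\xi))-1$, and integrating $(e^{g}-1)$ against $(1+\eta)v_C$ contributes $(1+\eta)^3 v_C^3/(v_kv_j) - (1+\eta)v_C$ to the exponent; combined with the $-2(1+\eta)v_C$ already present, this produces the stated $-3(1+\eta)v_C + (1+\eta)^3 v_C^3/(v_kv_j)$. \cref{assump:intgr_comp_jk} ensures $\int_{\Rp} v_C^3/(v_jv_k)\,d\xi<\infty$, which together with finiteness of $\nu_k,\nu_j,\nu_C$ yields the absolute integrability needed to invoke the exponential formula.

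For the variance of $X^*_t$, I would note that in the representation $X^*_t = x + [c-(1+\eta)\int_{\Rp}\xi\,\nu_C(d\xi)]t + \tfrac{1}{\theta}\sum_{\kinI}\pi_k\ell_k(T)\big[1-\ell_k(-t)Z^*_{k,t}\big]$ the only random part is the final sum, so $\var{\Q^*}(X^*_t)$ equals $\theta^{-2}$ times the variance of $\sum_{\kinI}\pi_k\ell_k(T)\ell_k(-t)Z^*_{k,t}$. Since $\ell_k(s)=\exp(s I_k)$ with $I_k:=\int_{\Rp}[1-(1+\eta)v_C/v_k]^2\nu_k(d\xi)$, we have $\ell_k(T)\ell_k(-t)=\ell_k(T-t)$, so this variance is $\sum_{j,k\in\cI}\pi_j\pi_k\ell_j(T-t)\ell_k(T-t)\,\mathrm{Cov}^{\Q^*}(Z^*_{j,t},Z^*_{k,t}) = \boldsymbol{p}_t^\intercal\Sigma^{\Q^*}_{\bZ^*}\boldsymbol{p}_t$ with $\boldsymbol{p}_t$ as defined, which completes the proof.

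The one place needing genuine care — and the only real obstacle — is justifying the exponential (Lévy–Khintchine) formula for the product $Z^*_{k,t}Z^*_{j,t}$ under $\Q^*$, i.e.\ verifying that $e^{g}-1$ is integrable against the $\Q^*$-compensator; this is precisely what \cref{assump:intgr_comp_jk} is designed to supply. Everything else is bookkeeping with exponents, and is essentially the same computation already carried out in the proof of \cref{cor:EZ}.
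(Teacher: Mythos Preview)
Your proposal is correct and follows essentially the same route as the paper: compute $\E^{\Q^*}[Z^*_{j,t}Z^*_{k,t}]$ from the explicit exponential representation in \cref{prop:expression-Z-X} via the exponential formula for PRMs under the $\Q^*$-compensator $(1+\eta)\nu_C(d\xi)\,dt$, subtract $\ell_j(t)\ell_k(t)$ from \cref{cor:EZ}, and then read off $\var{\Q^*}(X^*_t)$ from the affine representation of $X^*_t$ using $\ell_k(T)\ell_k(-t)=\ell_k(T-t)$. Your explicit check that \cref{assump:intgr_comp_jk} is what legitimizes the exponential formula is a nice touch that the paper leaves implicit.
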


\begin{proof}
The result again follows from \zcref{prop:expression-Z-X}. Using the exponential formula for Poisson random measures and \zcref{assump:intgr_comp_jk}, we have for $j,k\in\cI$,
\begin{align*}
        \E^{\Q^*} [Z^*_{j,t} Z^*_{k,t}] &= \exp \left(t \int_{\Rp} \!\!\!  \left[ v_j(\xi) + v_k(\xi) - 2(1+\eta) v_C(\xi) \right] d\xi \right) \\
        & \qquad \times \E^{\Q^*}\!\left[  \exp \left(\int_0^t \!\! \int_{\Rp} \!\!\!  \ln \left( (1+\eta)^2 \frac{v_C^2(\xi)}{v_j(\xi) v_k(\xi)} \right)  N(d\xi,ds) \right)\right] \\
        &=  \exp \left(t \int_{\Rp} \!\!\!  \left[ v_j(\xi) + v_k(\xi) - 3(1+\eta) v_C(\xi) + (1+\eta)^3 \frac{v_C^3(\xi)}{v_j(\xi) v_k(\xi)} \right] d\xi \right) \,,
    \end{align*}
which, combined with the result from \zcref{cor:EZ}, gives the covariance. 
Then using the representation of $X^*$, we have
\begin{equation*}
    \var{\Q^*}\!(X^*_t) = \var{\Q^*} \! \left( \frac{1}{\theta} \sum_{\kinI} \pi_k \, \ell_k(T-t)  Z^*_{k,t}\right) = \frac{1}{\theta^2} \!\! \sum_{j, k, \in \cI} \! \! \pi_j \pi_k \ell_j(T-t) \ell_k(T-t) \mathrm{Cov}^{\Q^*}\!(Z^*_{k,t},Z^*_{j,t}) \,,
\end{equation*}
which gives the result. 
\end{proof}

We conclude this section with a remark on the effect of the model penalization parameter, $\theta$.
\begin{remark}
    We are interested in particular in the effect of the model penalization parameter, $\theta$, on the insurer's wealth under the optimal strategy, $X^*$. The result of the previous proposition shows that, like in a traditional MMV setting, $\theta$, acts as a variance penalty. As $\theta$ increases, all else fixed, the variance of $X^*$ under $\Q^*$ is reduced. A similar statement holds for the variance of $X^*$ under $\P_C$. In fact, under any measure $\P_k$, $\kinI$, we have $\var{\P_k} (X^*_t) = \frac{1}{\theta^2} \, 
    \boldsymbol{p}_t^\intercal \,
    \Sigma^{\P_k}_{\bZ^*} \, \boldsymbol{p}_t$, as long as the covariance matrix of $\bZ^*$ under $\P_k$, $\Sigma^{\P_k}_{\bZ^*}$, is defined. 
    
    Furthermore, under $\Q^*$, $\theta$ does not affect the mean of $X^*$. However, this is not true under other probability measures. For example, it is straightforward to show (see the proof of \zcref{prop:EY}) that under $\P_C$, the mean of $X_t^*$ is 
    \begin{equation*}
        \E^{\P_C}[X^*_t] = x_{0} + t \left[ c - (1+\eta) \int_{\Rp} \!\!\! \xi \, \nu_C(d\xi) \right] + \frac{1}{\theta} \sum_{\kinI} \pi_k \, \ell_k(T) \, \left[1 - \exp \left( t \, \eta \int_{\R_+} \!\! \left( 1 - (1+\eta) \frac{v_C(\xi)}{v_k(\xi)} \right) v_C(\xi) \, d\xi  \right)\right]  \,.
    \end{equation*}
\end{remark}

\subsection{Verification}
\label{sec:verification}

Next, we show that the candidate value function is indeed the value function for \zcref{opt:insurer-problem} and confirm the optimality of the candidate controls. First, we state a verification theorem. This theorem follows from that of \textcite[Theorem 3.2]{mataramvuraoksendal2008}, therefore we omit the proof (see also \textcite{TrybulaZawisza2019} for a similar approach). We then show that the proposed candidate controls and candidate value function satisfy the verification theorem. 

\begin{theorem}[Verification]
\label{thm:verification}
    Suppose there exists a continuously differentiable function $\varphi$ on  $(0,T) \times \R \times \Rpnpo$ that is continuous on $[0,T] \times \R \times [0,\infty)^{n+1}$ and Markovian controls $(\hat\alpha, \hat\beta) \in (\A, \mfB)$ such that the following hold:
    \begin{align}
        A^{\hat \alpha, \, \beta} \varphi(t,x,\bz) \geq 0 \quad &\text{ for all } \beta \in \mfB, \, (t,x,\bz) \in [0,T] \times \R \times \Rpnpo \,, \label{eqn:ver1}
        \\
        A^{\alpha, \, \hat \beta} \varphi(t,x,\bz) \leq 0 \quad &\text{ for all } \alpha \in \A,  \, (t,x,\bz) \in [0,T] \times \R \times \Rpnpo \,,  \label{eqn:ver2}
        \\
        A^{\hat \alpha, \, \hat \beta} \varphi(t,x,\bz) = 0 \quad &\text{ for all } (t,x,\bz) \in [0,T] \times \R \times \Rpnpo \,,  \label{eqn:ver3}
        \\
        \varphi(T, x ,\bz ) = x + \frac{1}{2\theta} \sum_{\kinI} \pi_k(z_k -1) \quad &\text{ for all } (x,\bz) \in \R \times \Rpnpo  \text{, and }  \label{eqn:ver4}
        \\
        \E^{\Q_\beta}_{t,x,\bz}\left[ \sup_{s\in[t,T]} \left| \varphi(s, X^{\alpha}_s, Z^{\beta}_s) \right| \right] < \infty  \quad &\text{ for all } \alpha \in \A,\, \beta \in \mfB, \, (t,x,\bz) \in [0,T] \times \R \times \Rpnpo \,, \label{eqn:ver5}
   \end{align}
   where $A^{a,b}$ is given by \eqref{eqn:generator}.
   Then 
   \begin{equation*}
       J^{\alpha,\hat\beta} \! (t,x,\bz) \leq \varphi(t,x,\bz) \leq J^{\hat\alpha,\beta}(t,x,\bz)  \quad \text{ for all } \alpha \in \A, \beta \in \mfB \,,
   \end{equation*}
    $\hat\alpha, \, \hat\beta$ are optimal controls, and $\varphi(t,x,\bz) = J^{\hat\alpha,\hat\beta}(t,x,\bz)$.
\end{theorem}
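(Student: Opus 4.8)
The plan is to run the classical Dynkin/It\^o verification argument for zero-sum stochastic differential games, adapted to the finite-activity jump setting of \cref{opt:insurer-problem}; as noted, this is exactly the content of \textcite[Theorem 3.2]{mataramvuraoksendal2008} once one checks that the present hypotheses match theirs, so I would either invoke that result directly or sketch the argument as follows. Fix $(t,x,\bz)$ and admissible controls $\alpha\in\A$, $\beta\in\mfB$. The first step is to apply It\^o's formula for semimartingales to $s\mapsto\varphi(s,X^\alpha_s,Z^\beta_s)$ on $[t,T]$, using the dynamics \eqref{eqn:SDEs}. By the very definition of the generator \eqref{eqn:generator}, the absolutely continuous part of this decomposition is precisely $\int_t^{\cdot} A^{\alpha_s,\beta_s}\varphi(s,X^\alpha_{s^-},Z^\beta_{s^-})\,ds$, and the remainder is a stochastic integral against the $\Q_\beta$-compensated measure $\tilde N^{\Q_\beta}$. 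Choosing a localizing sequence of stopping times $\tau_m\uparrow T$ that reduces this stochastic integral to a true martingale and taking $\Q_\beta$-expectations, the martingale term vanishes and one obtains
\begin{equation*}
    \E^{\Q_\beta}_{t,x,\bz}\!\left[\varphi(\tau_m,X^\alpha_{\tau_m},Z^\beta_{\tau_m})\right] = \varphi(t,x,\bz) + \E^{\Q_\beta}_{t,x,\bz}\!\left[\int_t^{\tau_m} A^{\alpha_s,\beta_s}\varphi(s,X^\alpha_{s^-},Z^\beta_{s^-})\,ds\right].
\end{equation*}
Letting $m\to\infty$, the left-hand side converges to $\E^{\Q_\beta}_{t,x,\bz}[\varphi(T,X^\alpha_T,Z^\beta_T)]$ by continuity of $\varphi$ up to the boundary (so $\varphi(\tau_m,X^\alpha_{\tau_m},Z^\beta_{\tau_m})\to\varphi(T,X^\alpha_T,Z^\beta_T)$ a.s., using that a fixed time is a.s.\ not a jump time of $N$) together with the uniform integrability supplied by \eqref{eqn:ver5}; the right-hand side converges by monotone convergence once the sign of $A^{\alpha_s,\beta_s}\varphi$ is fixed. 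The terminal condition \eqref{eqn:ver4} then identifies $\E^{\Q_\beta}_{t,x,\bz}[\varphi(T,X^\alpha_T,Z^\beta_T)]=J^{\alpha,\beta}(t,x,\bz)$.

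With this identity in hand, the HJBI conditions finish the argument. Taking $\alpha=\hat\alpha$ and an arbitrary $\beta\in\mfB$, \eqref{eqn:ver1} gives $A^{\hat\alpha,\beta}\varphi\ge0$, hence $J^{\hat\alpha,\beta}(t,x,\bz)\ge\varphi(t,x,\bz)$ for every such $\beta$, so $\varphi(t,x,\bz)\le\newinf_{\beta\in\mfB}J^{\hat\alpha,\beta}(t,x,\bz)\le\sup_{\alpha\in\A}\newinf_{\beta\in\mfB}J^{\alpha,\beta}(t,x,\bz)$. Symmetrically, taking $\beta=\hat\beta$ and an arbitrary $\alpha\in\A$, \eqref{eqn:ver2} gives $A^{\alpha,\hat\beta}\varphi\le0$, hence $J^{\alpha,\hat\beta}(t,x,\bz)\le\varphi(t,x,\bz)$, so $\newinf_{\beta\in\mfB}\sup_{\alpha\in\A}J^{\alpha,\beta}(t,x,\bz)\le\sup_{\alpha\in\A}J^{\alpha,\hat\beta}(t,x,\bz)\le\varphi(t,x,\bz)$. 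Since the lower value never exceeds the upper value, all of these inequalities collapse to equalities: $\varphi(t,x,\bz)$ equals both values (so the game has a value), and the sandwich $J^{\alpha,\hat\beta}(t,x,\bz)\le\varphi(t,x,\bz)\le J^{\hat\alpha,\beta}(t,x,\bz)$ holds for all admissible $\alpha,\beta$. Finally, \eqref{eqn:ver3} gives $A^{\hat\alpha,\hat\beta}\varphi=0$, so the displayed identity evaluated at $(\alpha,\beta)=(\hat\alpha,\hat\beta)$ yields $J^{\hat\alpha,\hat\beta}(t,x,\bz)=\varphi(t,x,\bz)$, confirming that $(\hat\alpha,\hat\beta)$ is a saddle point and $\varphi=J$.

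The main obstacle, and really the only technical point, is the justification of the two limits as $m\to\infty$: that after removing the localization the stochastic integral against $\tilde N^{\Q_\beta}$ is a genuine martingale, and that $\{\varphi(\tau_m,X^\alpha_{\tau_m},Z^\beta_{\tau_m})\}_m$ is uniformly integrable. This is exactly where \eqref{eqn:ver5} is used essentially — it furnishes the integrable majorant $\sup_{s\in[t,T]}|\varphi(s,X^\alpha_s,Z^\beta_s)|$ — and, together with the finite activity of $N$ (from $\int_{\Rp}\nu_k(d\xi)<\infty$), the square-integrability built into the definitions of $\A$ (\cref{def:alpha_int}) and $\mfB$ (\cref{def:beta_int2}), and the fact that the candidate $\varphi$ of \cref{prop:optim_contr} is affine in $(x,\bz)$, it also controls the jump integrand $\varphi(s,X^\alpha_{s^-}+\Delta X^\alpha_s,Z^\beta_{s^-}+\Delta Z^\beta_s)-\varphi(s,X^\alpha_{s^-},Z^\beta_{s^-})$ in the integrability class required for the martingale property under $\Q_\beta$. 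Because these estimates are carried out for general generators in \textcite[Theorem 3.2]{mataramvuraoksendal2008}, the cleanest route — and the one taken here — is to verify that the standing assumptions of that theorem are met in the present model (finite activity of $N$, \cref{assump:intgr_comp}, and \eqref{eqn:ver1}--\eqref{eqn:ver5}) and then quote their conclusion.
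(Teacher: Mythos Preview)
Your proposal is correct and matches the paper's approach exactly: the paper omits the proof entirely, stating that it follows from \textcite[Theorem 3.2]{mataramvuraoksendal2008} (and citing \textcite{TrybulaZawisza2019} for a similar argument), which is precisely the reference you invoke. Your sketch of the Dynkin/It\^o localization argument is the standard content of that theorem and goes somewhat beyond what the paper provides.
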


We first show that the candidate optimal controls given in \zcref{prop:optim_contr} are indeed admissible.

\begin{proposition}
\label{prop:alpha_adm}
    The candidate optimal controls \[
    \alpha^*(t,\xi,\bZ^*_t) = \xi  - \frac{1}{\theta} \sum_{\kinI} \pi_k \, Z^*_{k,t} \, \ell_k(T-t) \left[ (1+\eta) \frac{v_C(\xi)}{v_k(\xi)} - 1 \right]
    \]
    and $\beta^*(\xi) = (1+\eta) \, v_C(\xi)$ are admissible.
\end{proposition}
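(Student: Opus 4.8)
The plan is to verify, in turn, the defining conditions of \cref{def:beta_int1,def:beta_int2} for $\beta^*$ and of \cref{def:alpha_int} for $\alpha^*$, the only real work being a second-moment bound on the auxiliary processes under $\P_C$. Starting with $\beta^*$: since $\beta^*(\xi)=(1+\eta)v_C(\xi)\ge 0$ is deterministic it is $\bF$-predictable and $\Rp$-valued, and membership in $\cB$ reduces to the finiteness of $\int_{\Rp}[1-(1+\eta)v_C(\xi)/v_k(\xi)]^2\nu_k(d\xi)$; expanding the square, this equals $\int_{\Rp}v_k(\xi)\,d\xi-2(1+\eta)\int_{\Rp}v_C(\xi)\,d\xi+(1+\eta)^2\int_{\Rp}v_C^2(\xi)/v_k(\xi)\,d\xi$, which is finite by the standing hypotheses $\int_{\Rp}\nu_k(d\xi)<\infty$, $\int_{\Rp}\nu_C(d\xi)<\infty$ and by \cref{assump:intgr_comp}. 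For \cref{def:beta_int2}, the identity $\E^{\P_k}[Z^{\beta^*}_{k,T}]=1$ is \cref{cor:EZ} at $t=T$, and combining the explicit form of $Z^*_k$ from \cref{prop:expression-Z-X} with the exponential formula for Poisson random measures under $\P_k$ (for which $N$ has compensator $\nu_k$) gives $\E^{\P_k}[(Z^{\beta^*}_{k,T})^2]=\ell_k(T)<\infty$, again by \cref{assump:intgr_comp}.

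Turning to $\alpha^*$: evaluated along $\bZ^*_{t^-}$ it is $\bF$-predictable, since $Z^*_{k,t^-}$ is left-continuous and adapted, $\ell_k(T-t)$ is continuous and deterministic, and $\xi\mapsto (1+\eta)v_C(\xi)/v_k(\xi)-1$ is Borel; as $\bZ^*$ has only countably many jumps, evaluating along $\bZ^*_t$ or $\bZ^*_{t^-}$ gives the same strategy $ds$-a.e. For the integrability requirements I would first record the second moments of $Z^*_k$ \emph{under $\P_C$}: using \cref{prop:expression-Z-X} and the exponential formula under $\P_C$ (where $N$ has compensator $\nu_C$),
\begin{equation*}
  \E^{\P_C}\big[(Z^*_{k,t})^2\big]=\exp\!\Big(2t\!\int_{\Rp}\!\big[v_k(\xi)-(1+\eta)v_C(\xi)\big]\,d\xi+t\!\int_{\Rp}\!\Big[(1+\eta)^2\tfrac{v_C^2(\xi)}{v_k^2(\xi)}-1\Big]v_C(\xi)\,d\xi\Big),
\end{equation*}
which is finite and bounded for $t\in[0,T]$ precisely because $\int_{\Rp}v_C^3(\xi)/v_k^2(\xi)\,d\xi<\infty$ by \cref{assump:intgr_comp_jk} with $j=k$.

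With this in hand, write $\xi-\alpha^*_s(\xi)=\tfrac1\theta\sum_{\kinI}\pi_k Z^*_{k,s^-}\ell_k(T-s)[(1+\eta)v_C(\xi)/v_k(\xi)-1]$, apply Jensen's inequality for $x\mapsto x^2$ with weights $(\pi_k)$ pointwise, and then Tonelli, to obtain
\begin{equation*}
  \E^{\P_C}\!\Big[\int_0^T\!\!\!\int_{\Rp}[\xi-\alpha^*_s(\xi)]^2\,\nu_C(d\xi)\,ds\Big]\le\frac{1}{\theta^2}\sum_{\kinI}\pi_k\Big(\int_0^T\!\ell_k(T-s)^2\,\E^{\P_C}[(Z^*_{k,s})^2]\,ds\Big)\Big(\int_{\Rp}\!\big[(1+\eta)\tfrac{v_C}{v_k}-1\big]^2 v_C\,d\xi\Big),
\end{equation*}
where the time integral is finite because its integrand is bounded on $[0,T]$ (previous display plus continuity of $\ell_k$), and the $\xi$-integral equals $(1+\eta)^2\int_{\Rp}v_C^3/v_k^2-2(1+\eta)\int_{\Rp}v_C^2/v_k+\int_{\Rp}v_C$, finite by \cref{assump:intgr_comp_jk}, \cref{assump:intgr_comp} and $\int_{\Rp}\nu_C(d\xi)<\infty$. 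This gives the second condition of \cref{def:alpha_int}; the first then follows from $|\alpha^*_s(\xi)|^2\le 2\xi^2+2[\xi-\alpha^*_s(\xi)]^2$ together with the standing assumption $\int_{\Rp}\xi^2\,\nu_C(d\xi)<\infty$ and the bound just established.

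The only nontrivial ingredient is the $\P_C$-second-moment estimate for $Z^*_k$: it is here that \cref{assump:intgr_comp_jk} is essential, since changing measure from $\P_k$ to $\P_C$ replaces the integrand $v_C^2/v_k$ (handled by \cref{assump:intgr_comp}) by the cube $v_C^3/v_k^2$; everything else is Cauchy--Schwarz/Jensen and the standing integrability hypotheses on $\nu_C$.
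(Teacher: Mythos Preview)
Your proof is correct and follows essentially the same route as the paper: establish the $\P_C$-second-moment bound on $Z^*_{k,t}$ via the exponential formula and \cref{assump:intgr_comp_jk}, then use it (together with \cref{assump:intgr_comp} and the standing integrability hypotheses on $\nu_C$) to check the two conditions in \cref{def:alpha_int}, while the admissibility of $\beta^*$ reduces to \cref{assump:intgr_comp} and \cref{cor:EZ}. The only cosmetic differences are that the paper uses Cauchy--Schwarz on the sum over $\cI$ where you use Jensen with weights $(\pi_k)$, and the paper obtains $\E^{\P_k}[(Z^*_{k,T})^2]=\ell_k(T)$ via the change-of-measure identity $\E^{\P_k}[(Z^*_{k,T})^2]=\E^{\Q^*}[Z^*_{k,T}]$ rather than a direct computation; neither changes the substance.
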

\begin{proof}
    First note that for $k\in\cI$ and $t \in [0,T]$, we have
    \begin{align}
        \E^{\P_C}\left[ \left(Z^*_{k,t} \right)^2\right] &= \exp \left(2t \int_{\Rp} \!\!\!  \left[ v_k(\xi) -(1+\eta) v_C(\xi) \right] d\xi \right) \E^{\P_C}\!\!\left[  \exp \left(\int_0^t \!\! \int_{\Rp} \!\!\!  \ln \left( (1+\eta)^2 \frac{v_C^2(\xi)}{v_k^2(\xi)} \right)  N(d\xi,ds) \right)\right] \nonumber \\
        &=  \exp \left(t \int_{\Rp} \!\!\!  \left[ 2 v_k(\xi) - (3+2\eta) v_C(\xi) + (1+\eta)^2 \frac{v_C^3(\xi)}{v_k^2(\xi)} \right] d\xi \right) < \infty \label{eqn:ZsquaredPC}  \,,
    \end{align}
    where the second equality follows by the exponential formula for PRMs and the inequality by the integrability of the compensators $v_k(\xi)$, $\kinI$,
    and \zcref{assump:intgr_comp_jk}.

    To show that $\alpha^*$ is admissible, we check the two conditions from \zcref{def:alpha_int}. For the first, observe that
    \begin{align*}
        &\E^{\P_C}\left[ \int_0^T \!\!\! \int_{\R_+} \!\!\! | \alpha^*(s,\xi,\bZ^*_s) |^2 \,  \nu_C(d\xi) ds \right] \\
        &\qquad= \E^{\P_C}\left[ 
        \int_0^T \!\!\! \int_{\R_+} \!\! \left( \xi   - \frac{1}{\theta} \sum_{\kinI} \pi_k \, \ell_k(T-s) Z_{k,s}^* \left[ (1+\eta) \frac{v_C(\xi)}{v_k(\xi)} - 1 \right] \right)^2 \! \nu_C(d\xi) ds \right]
        \\
        &\qquad\leq 2 \, \E^{\P_C}\Bigg[ 
        \int_0^T \!\!\! \int_{\R_+} \!\!\! \xi^2 \, \nu_C(d\xi) ds + \int_0^T \!\!\! \int_{\R_+} \!\!\! \left( \frac{1}{\theta} \sum_{\kinI} \pi_k \, \ell_k(T-s) Z_{k,s}^* \left[ (1+\eta) \frac{v_C(\xi)}{v_k(\xi)} - 1 \right] \right)^2 \! \nu_C(d\xi) ds \Bigg]
        \\
        &\qquad\leq 2 \, T \int_{\R_+} \!\!\! \xi^2 \, \nu_C(d\xi) + \frac{2}{\theta^2} 
        \int_0^T
        \left(\sum_{\kinI} \E^{\P_C}\left[(\pi_k \, \ell_k(T-s) Z_{k,s}^*)^2
        \right]
        \right)\,ds\;
        \int_{\R_+} \!\!
        \left(\sum_{\kinI} \left((1+\eta) \frac{v_C(\xi)}{v_k(\xi)} - 1\right)^2
        \right)
        \nu_C(d\xi),
    \end{align*}
    where the last inequality follows from the Cauchy-Schwartz inequality. We have $\int_{\R_+} \! \xi^2 \, \nu_C(d\xi) < \infty$ by assumption, and further by \zcref{assump:intgr_comp}, \zcref{assump:intgr_comp_jk}, and the integrability of $v_C(\xi)$, we have that
    \[
        \int_{\R_+} \!\!
        \left(\sum_{\kinI} \left((1+\eta) \frac{v_C(\xi)}{v_k(\xi)} - 1\right)^2
        \right)
        \nu_C(d\xi) < \infty\,.
    \]
    Moreover, from \eqref{eqn:ZsquaredPC}, for each $\kinI$, $\E^{\P_C}[(Z^*_{k,t})^2]=e^{t \, a_k}$ for some finite constant $a_k$ and $\ell_k(t)=e^{t\,b_k}$ for some finite constant $b_k$ (see \zcref{prop:optim_contr}), therefore,
    \[
    \frac{2}{\theta^2} 
        \int_0^T \!
        \left(\sum_{\kinI} \pi_k^2 \, \ell_k^2(T-s) \E^{\P_C}\left[(Z_{k,s}^*)^2
        \right]
        \right)\,ds < \infty \,.
    \]
    Putting the inequalities together, we obtain
    \[
        \E^{\P_C}\left[ \int_0^T \!\!\! \int_{\R_+} \!\!\! | \alpha^*(s,\xi,\bZ^*_s) |^2 \,  \nu_C(d\xi) ds \right] < \infty \,.
    \]
    
    For the second condition of \zcref{def:alpha_int}, we have
    \begin{align*}
        &\E^{\P_C}\Bigg[ \int_0^T \!\!\! \int_{\R_+} \!\!\!\! \big[ \xi - \alpha^*(s,\xi,\bZ^*_s)\big]^2  \nu_C(d\xi) ds \Bigg] \\
        &\qquad= \E^{\P_C}\Bigg[ \int_0^T \!\!\! \int_{\R_+} \!\! \Bigg[\frac{1}{\theta} \sum_{\kinI} \pi_k  \, \ell_k(T-s) \, Z_{k,s}^* \left((1+\eta) \frac{v_C(\xi)}{v_k(\xi)} - 1 \right)\Bigg]^2 \! \nu_C(d\xi) ds \Bigg]  
        \\
        &\qquad\le \frac{1}{\theta^2}  \left( \sum_{k \in \cI} \pi_k^2 \int_0^T \!\! \ell_k^2(T-s) \E^{\P_C}\!\! \left[(Z_{k,s}^*)^2 \right] ds \right)
        \left(\int_{\R_+} 
        \sum_{\kinI} \left((1+\eta) \frac{v_C(\xi)}{v_k(\xi)} - 1\right)^2 \nu_C(d\xi)
        \right)
        \\
         &\qquad < \infty  \,.
    \end{align*}
    The first inequality follows from Cauchy-Schwartz and the second inequality we established earlier.

    Furthermore, $\beta^*$ satisfies \zcref{def:beta_int2} by \zcref{cor:EZ}, as for all $\kinI$, $\E^{\P_k}  [ ( Z_{k,T}^* )^2 ] = \E^{\Q^*} \! [ Z_{k,T}^* ] = \ell_k(T) < \infty$.
    Therefore $\alpha^*$, $\beta^*$ are admissible controls. 
\end{proof}

Next, we confirm that the candidate optimal controls and value function from \zcref{prop:optim_contr} satisfy the conditions of \zcref{thm:verification}.

\begin{proposition}
\label{prop:ver}
    The candidate value function $J(t,x,\bz)$ and the candidate controls $\alpha^*$, $\beta^*$ given in \zcref{prop:optim_contr} satisfy \zcref{thm:verification}, and are therefore the optimal value functions and controls of \zcref{opt:insurer-problem}.
\end{proposition}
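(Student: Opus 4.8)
The plan is to invoke \cref{thm:verification} with $\varphi = J$ the candidate value function of \cref{prop:optim_contr} and $(\hat\alpha,\hat\beta) = (\alpha^*,\beta^*)$, so that the work reduces to checking the five hypotheses \eqref{eqn:ver1}--\eqref{eqn:ver5}. First I would record the structural facts: in the Ansatz form $J(t,x,\bz) = x + \sum_{\kinI} A_k(t) z_k + B(t)$ with $A_k(t) = \tfrac{\pi_k}{2\theta}\ell_k(T-t)$ and $B$ as found in \cref{prop:optim_contr}, the coefficients are smooth on $[0,T]$, so $J$ is $C^1$ on $(0,T)\times\R\times\Rpnpo$ and continuous on $[0,T]\times\R\times[0,\infty)^{n+1}$; moreover $J(T,x,\bz) = x + \tfrac1{2\theta}\sum_{\kinI}\pi_k(z_k-1)$ since $\ell_k(0)=1$ and $\sum_{\kinI}\pi_k=1$, which is \eqref{eqn:ver4}, and admissibility of $(\alpha^*,\beta^*)$ is the preceding proposition. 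The observation that powers everything else is that, \emph{because $J$ is affine in $(x,\bz)$}, the jump integrand of the generator \eqref{eqn:generator} applied to $J$ cancels identically (the terms $z_k A_k(t)(b/v_k-1)$ coming from the increment $J(t,\cdot)-J(t,\cdot)$ cancel against the predictable compensation terms, and the $x$-terms cancel likewise); hence for any feedback maps $a,b$ one has that $A^{a,b}J(t,x,\bz)$ collapses to exactly the expression appearing in \eqref{eqn:HBJI-anstaz} before the sup--inf is taken.

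Next I would dispatch \eqref{eqn:ver3} and \eqref{eqn:ver2}. Condition \eqref{eqn:ver3}, namely $A^{\alpha^*,\beta^*}J\equiv 0$, is precisely the identity that the ODEs for $A_k$ and $B$ were solved to produce: substituting $\hat b$ and $\hat a$ from \eqref{eqn:b_feedback}--\eqref{eqn:a_feedback} into the reduced HJBI equation, checking directly that $\hat b|_{a=\alpha^*} = \beta^*$, recovers the computation in the proof of \cref{prop:optim_contr}. For \eqref{eqn:ver2} I would fix $b=\beta^*=(1+\eta)v_C$ and note that in the reduced generator the $a$-dependent part is $-(1+\eta)\int a\,v_C\,d\xi - \int(\xi-a)(1+\eta)v_C\,d\xi$, which equals $-(1+\eta)\int\xi\,v_C\,d\xi$ independently of $a$; hence $A^{\alpha,\beta^*}J = A^{\alpha^*,\beta^*}J = 0 \le 0$ for every $\alpha\in\A$. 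For \eqref{eqn:ver1} I would fix $a=\alpha^*$ and argue that $b\mapsto A^{\alpha^*,b}J(t,x,\bz)$ is convex in $b$, since its $\xi$-wise integrand has second derivative $2\sum_{\kinI} z_kA_k(t)/v_k\ge 0$ using $A_k(t)\ge 0$ and $z_k\ge0$; its first-order condition is solved by $\beta^*$, so $\beta^*$ is a global minimiser and $A^{\alpha^*,b}J\ge A^{\alpha^*,\beta^*}J=0$. The degenerate case $\sum_{\kinI}\pi_kz_k=0$, where $\alpha^*=\xi$ and the whole expression vanishes identically, I would handle separately.

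The substantive step is \eqref{eqn:ver5}. Bounding $J$ along its affine structure, $|J(s,X^\alpha_s,Z^\beta_s)| \le |X^\alpha_s| + C\sum_{\kinI} Z^\beta_{k,s} + C'$ with $C,C'$ depending only on $\sup_{[0,T]}A_k$ and $\sup_{[0,T]}|B|$, so it suffices to bound $\E^{\Q_\beta}_{t,x,\bz}[\sup_{s}Z^\beta_{k,s}]$ and $\E^{\Q_\beta}_{t,x,\bz}[\sup_{s}|X^\alpha_s|]$ for arbitrary $\alpha\in\A$, $\beta\in\mfB$. For the $Z^\beta_k$ terms I would use that $\beta\in\mfB$ makes $Z^\beta_k$ a nonnegative $\P_k$-martingale with $\E^{\P_k}[(Z^\beta_{k,T})^2]<\infty$; writing $d\Q_\beta/d\P_k=Z^\beta_{k,T}$ on $\F_T$, the Cauchy--Schwarz and Doob $L^2$ inequalities give $\E^{\Q_\beta}[\sup_sZ^\beta_{k,s}] = \E^{\P_k}[Z^\beta_{k,T}\sup_sZ^\beta_{k,s}] \le 2\,\E^{\P_k}[(Z^\beta_{k,T})^2]<\infty$. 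For $X^\alpha$ I would decompose it under $\Q_\beta$ into its initial value, the drift integral $\int_t^s[c-(1+\eta)\int\alpha_u\,\nu_C-\int(\xi-\alpha_u)\beta_u\,d\xi]\,du$, and the $\Q_\beta$-martingale $-\int_t^s\int(\xi-\alpha_u(\xi))\,\tilde N^{\Q_\beta}(d\xi,du)$, take suprema termwise, and apply the Burkholder--Davis--Gundy and Doob inequalities to the martingale part, whose $\Q_\beta$-quadratic variation is $\int_t^{\cdot}\!\int(\xi-\alpha_u(\xi))^2\beta_u(\xi)\,d\xi\,du$. The finiteness of the resulting expectations I would obtain by changing measure back to $\P_C$ and invoking the two admissibility integrals of \cref{def:alpha_int}, the $L^2$-integrability of the compensator densities built into $\mfB$, and the moment hypothesis $\int_{\Rp}\xi^2\,\nu_C(d\xi)<\infty$. \textbf{This coupling of the $\P_C$-stated admissibility conditions with moment control of the running supremum under the a priori quite different measure $\Q_\beta$ is the step I expect to be the main obstacle.} Once \eqref{eqn:ver1}--\eqref{eqn:ver5} are in place, \cref{thm:verification} immediately yields that $J$ is the value function of \cref{opt:insurer-problem} and that $\alpha^*,\beta^*$ are optimal.
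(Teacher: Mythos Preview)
Your proposal is correct and follows the paper's approach closely: the regularity of $J$, the terminal condition \eqref{eqn:ver4}, the vanishing of $A^{\alpha,\beta^*}J$ for \eqref{eqn:ver2}, and the Cauchy--Schwarz/Doob bound on $\E^{\Q_\beta}[\sup_s Z^\beta_{k,s}]$ all match the paper essentially line for line. Your convexity argument for \eqref{eqn:ver1} is equivalent to (and arguably tidier than) the paper's route, which instead computes $A^{\alpha^*,\beta}J$ explicitly and displays it as the manifestly nonnegative expression $\tfrac{1}{\theta}\sum_{\kinI}\pi_k z_k\ell_k(T-t)\int_{\Rp}\frac{[\beta_t(\xi)-(1+\eta)v_C(\xi)]^2}{2v_k(\xi)}\,d\xi$.

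The one substantive difference is in the $X^\alpha$ part of \eqref{eqn:ver5}. You decompose $X^\alpha$ under $\Q_\beta$, so the drift picks up the term $\int(\xi-\alpha_u)\beta_u\,d\xi$ that must then be converted back to $\P_C$-quantities (via the compensator identity $\E^{\Q_\beta}[\int\!\int g\,\beta\,d\xi\,du]=\E^{\Q_\beta}[\int\!\int g\,N(d\xi,du)]$ followed by Cauchy--Schwarz and a further decomposition of $\int\!\int|\xi-\alpha_u|N$ under $\P_C$). The paper instead writes $X^\alpha$ in its $\P_C$-dynamics from the outset, so the drift involves only $\nu_C$ and no $\beta$ ever appears; it then changes measure $\Q_\beta\to\P_C$ via Cauchy--Schwarz \emph{first} and applies BDG to what is already a $\P_C$-martingale. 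Both routes close using the same admissibility integrals from \cref{def:alpha_int} and the $L^2$ bound on $Z^\beta_{C,T}$ from \cref{def:beta_int2}, but the paper's choice of decomposition simply bypasses the $\beta$-coupling you correctly identified as the main obstacle.
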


\begin{proof}
    By the previous proposition, $\alpha^*$ and $\beta^*$ are admissible. Recall that the candidate value function is 
    \begin{equation*}
        J(t,x,\bz) = x + \sum_{\kinI} \frac{\pi_k}{2 \theta} \, z_k \, \ell_k(T-t) - \frac{1}{2\theta} - \left[ (1+\eta) \int_{\Rp} \!\! \xi \, \nu_C(d\xi) - c \right] (T-t) \,,
    \end{equation*}
    where
    \begin{equation*}
        \ell_k(t) = \exp \left( t \int_{\Rp} \!\! \left[1- (1+\eta) \frac{v_C(\xi)}{v_k(\xi)} \right]^2\!\! \nu_k(d\xi)\right) \,.
    \end{equation*}
    By \zcref{assump:intgr_comp}, $\ell_k(t) < \infty$ for all $t \in [0,T]$ and furthermore $\ell_k(t)$ is an integrable and continuously differentiable function. Then, as $J(t,x,\bz)$ is a linear combination of continuously differentiable functions of $x, \, z_k$, and $t$, it is continuously differentiable on  $(0,T) \times \R \times \Rpnpo$ and continuous on its closure.
    
    Conditions \eqref{eqn:ver3} and \eqref{eqn:ver4} follow from the HJBI equation in the proof of \zcref{prop:optim_contr} (see \eqref{eqn:HJBI1} and \eqref{eqn:HJBI2}). Next, we show \eqref{eqn:ver1} and \eqref{eqn:ver2} hold. For Markovian $\alpha \in \A$, $\beta \in \mfB$, and $(t,x,\bz) \in [0,T] \times \R \times \Rpnpo$ we have
    \begin{align*}
        A^{\alpha,\,\beta} J(t,x,\bz) =&\, \frac{1}{\theta} \sum_{\kinI} \pi_k \, z_k \, \ell_k(T-t) \int_{\Rp} \!\! \left( \left[ 1-\frac{\beta_t(\xi,x,\bz)}{v_k(\xi)} \right]^2-\left[1- \frac{(1+\eta) v_C(\xi)}{v_k(\xi)} \right]^2 \right) \nu_k(d\xi) 
        \\
        & \; + \int_{\Rp} \!\!\! \left[\xi - \alpha_t(\xi,x,\bz) \right]\left[ (1+\eta) v_C(\xi) - \beta_t(\xi)\right] d\xi \,.
    \end{align*}
    Then, substituting in $\beta^*=(1+\eta) v_C$ from \eqref{eqn:beta-star} gives
    \begin{equation*}
        A^{\alpha,\beta^*} J(t,x,\bz) = 0 \,,
    \end{equation*}
    for $\alpha \in \A$ (Markovian) arbitrary, so \eqref{eqn:ver2} holds. Furthermore, substituting $\alpha_t^*$ from \eqref{eqn:alpha-star}, after simplifying, we find
    \begin{equation*}
        A^{\alpha^{\!*}\!, \, \beta} J(t,x,\bz) = \frac{1}{\theta} \sum_{\kinI} \pi_k \, z_k \, \ell_k(T-t) \int_{\Rp} \!\! \frac{\left[\beta_t(\xi,x,\bz)) - (1+\eta) v_C(\xi)\right]^2}{ 2\, v_k(\xi)} \, d\xi \geq 0 \,,
    \end{equation*}
    for $\beta \in \mfB$ (Markovian) arbitrary, showing \eqref{eqn:ver1} holds.

    We next show that \eqref{eqn:ver5} holds. 
    To this end, take $\alpha\in\A$ and $\beta\in\cB$ arbitrary. It then follows that
    \begin{align*}
        & \E^{\Q_\beta}_{t,x,\bz}\left[ \sup_{s\in[t,T]} \left| J(s, X^{\alpha}_s, Z^{\beta}_s) \right| \right] 
        \\ 
        &= \E^{\Q_\beta}_{t,x,\bz}\left[ \sup_{s\in[t,T]}  \left| X_s^\alpha + \sum_{\kinI} \frac{\pi_k}{2 \theta} \, \ell_k(T-s) \, Z_{k,s}^\beta -  \frac{1}{2\theta} - \left[ (1+\eta) \int_{\Rp} \!\! \xi \, \nu_C(d\xi) - c \right] (T-s)  \right| \right] 
        \\
        &\leq \E^{\Q_\beta}_{t,x,\bz}\left[ \sup_{s\in[t,T]}  \left| X_s^\alpha \right| \right] + \sum_{\kinI} \frac{\pi_k}{2 \theta} \,  \E^{\Q_\beta}_{t,x,\bz}\left[ \sup_{s\in[t,T]} \left| \ell_k(T-s) \, Z_{k,s}^\beta \right| \right]
        \\ &\qquad 
        + \frac{1}{2\theta} + \sup_{s\in[t,T]} \left( \left| (1+\eta) \textstyle\int_{\Rp} \!\! \xi \, \nu_C(d\xi) - c \right| (T-s) \right) 
        \\
        &= \E^{\Q_\beta}_{t,x,\bz}\left[ \sup_{s\in[t,T]}  \left| X_s^\alpha \right| \right] + \sum_{\kinI} \frac{\pi_k}{2 \theta} \,\ell_k(T-t) \,\E^{\Q_\beta}_{t,x,\bz} \left[ \sup_{s\in[t,T]} \left|  \, Z_{k,s}^\beta \right| \right] 
        + \frac{1}{2\theta} + \left| (1+\eta) \textstyle\int_{\Rp} \!\! \xi \, \nu_C(d\xi) - c \right| (T-t) \,.
    \end{align*}
    Thus, to show that \eqref{eqn:ver5} holds, it suffices to show that $ \E^{\Q_\beta}_{t,x,\bz}\left[ \displaystyle\sup_{s\in[t,T]}  \left| X_s^\alpha \right| \right] < \infty$ and $\E^{\Q_\beta}_{t,x,\bz} \left[ \displaystyle\sup_{s\in[t,T]} \left|  \, Z_{k,s}^\beta \right| \right] < \infty$ for all $\kinI$. Continuing, for each $\kinI$, we have that
    \begin{align*}
       \E^{\Q_\beta}_{t,x,\bz} \left[ \sup_{s\in[t,T]} \left|  \, Z_{k,s}^\beta \right| \right]  &= \E^{\P_k}_{t,x,\bz} \left[ \frac{d\Q_\beta}{d\P_k} \sup_{s\in[t,T]} \left|  \, Z_{k,s}^\beta \right| \right] 
       \\
       &\leq \left( \E^{\P_k}_{t,x,\bz} \left[ \left( \frac{d\Q_\beta}{d\P_k} \right)^2 \right] \right)^\frac12 \left( \E^{\P_k}_{t,x,\bz} \left[ \sup_{s\in[t,T]} \left|  \, Z_{k,s}^\beta \right|^2 \right] \right)^\frac12
       \\
       &\leq \left( \E^{\P_k}_{t,x,\bz} \left[ \left( \frac{d\Q_\beta}{d\P_k}\right)^2 \right] \right)^\frac12 \left( 4 \, \E^{\P_k}_{t,x,\bz} \left[  \left(  \, Z_{k,T}^\beta \right)^2 \right] \right)^\frac12
       \\
       &= 2\,\E^{\P_k}_{t,x,\bz} \left[  \left(  \, Z_{k,T}^\beta \right)^2 \right] \\
       &< \infty \,,
    \end{align*}
    where the first inequality follows from Cauchy-Schwartz and  the second from Doob's maximal inequality. The second equality follow as, by definition of the $Z_k^\beta$ processes, $Z_{k,T}^\beta=\frac{d\Q_\beta}{d\P_k}$. The final inequality follows as $\beta\in\cB$ (see \zcref{def:beta_int2}).  Furthermore, using the triangle inequality followed by Jensen's inequality, we obtain
    \begin{align}
        &\hspace*{-1em}\E^{\Q_\beta}_{t,x,\bz}\left[ \sup_{s\in[t,T]}  \left| X_s^\alpha \right| \right]
        \nonumber
        \\
        &= \E^{\Q_\beta}_{t,x,\bz}\Bigg[ \sup_{s\in[t,T]}  \Bigg| x_{0} + \int_t^s \left[ c - (1+\eta) \int_{\Rp} \!\!\! \alpha_u(\xi) \, \nu_C (d\xi) - \int_{\Rp} \!\! \left[ \xi - \alpha_u(\xi) \right] \nu_C(d\xi) \right] du
        \nonumber
        \\
        & \hspace*{7em} -  \int_t^s \!\! \int_{\Rp} \!\!\! [\xi - \alpha_u(\xi)] \, \tilde N^{\P_C}(d\xi,du) \Bigg| \Bigg] 
        \nonumber
        \\
        &\leq x_{0} + c \, (T-t) + |1+\eta|\; \E^{\Q_\beta}_{t,x,\bz}\left[ \sup_{s\in[t,T]}  \left| \int_t^s \!\! \int_{\Rp} \!\!\! \alpha_u(\xi) \, \nu_C (d\xi) du\right| \right] 
        \nonumber
        \\
        & \qquad + \E^{\Q_\beta}_{t,x,\bz}\left[ \sup_{s\in[t,T]}  \left| \int_t^s \!\! \int_{\Rp} \!\!\! \left[ \xi - \alpha_u(\xi) \right]\, \nu_C (d\xi) du\right| \right] + \E^{\Q_\beta}_{t,x,\bz}\left[ \sup_{s\in[t,T]}  \left|  \int_t^s \!\! \int_{\Rp} \!\!\! [\xi - \alpha_u(\xi)] \, \tilde N^{\P_C}(d\xi,du) \right| \right] 
        \nonumber
        \\
        &\leq x_{0} + c \, (T-t) + |1+\eta|\,  \E^{\Q_\beta}_{t,x,\bz}\left[  \sup_{s\in[t,T]} \left(\int_t^s \!\! \int_{\Rp} \!\!\!  \left| \alpha_u(\xi) \right| \, \nu_C (d\xi) du \right) \right] 
        \nonumber
        \\
        & \qquad + \E^{\Q_\beta}_{t,x,\bz}\left[ \sup_{s\in[t,T]}  
        \left(\int_t^s \!\! \int_{\Rp} \!\!\! \left| \xi - \alpha_u(\xi) \right| \, \nu_C (d\xi) du
        \right) \right] + \E^{\Q_\beta}_{t,x,\bz}\left[ \sup_{s\in[t,T]}  \left|  \int_t^s \!\! \int_{\Rp} \!\!\! [\xi - \alpha_u(\xi)] \, \tilde N^{\P_C}(d\xi,du) \right| \right] 
        \nonumber
        \\
        \begin{split}
        &\leq x_{0} + c \, (T-t) + |1+\eta|\,  \E^{\Q_\beta}_{t,x,\bz}\left[ \int_t^T \!\!\!  \int_{\Rp} \!\!\!  \left| \alpha_u(\xi) \right| \, \nu_C (d\xi) du \right] 
        \\
        & \qquad + \E^{\Q_\beta}_{t,x,\bz}\left[\int_t^T \!\!\! \int_{\Rp} \!\!\! \left| \xi - \alpha_u(\xi) \right| \, \nu_C (d\xi) du\right] + \E^{\Q_\beta}_{t,x,\bz}\left[ \sup_{s\in[t,T]}  \left|  \int_t^s \!\! \int_{\Rp} \!\!\! [\xi - \alpha_u(\xi)] \, \tilde N^{\P_C}(d\xi,du) \right| \right] \,.
        \end{split}
        \label{eqn:sup-X-ineq}
    \end{align}
    We next bound each of the three expectations appearing in \eqref{eqn:sup-X-ineq} in turn. Define $\lambda_C:=\int_{\Rp} \!\!\! \nu_C (d\xi)$, noting that $\lambda_C<\infty$ by the integrability assumption on $\nu_C$.  First, using Cauchy-Schwartz twice, we have
    \begin{align*}
        & \hspace*{-1em}
        \E^{\Q_\beta}_{t,x,\bz}\left[ \int_t^T \!\!\!  \int_{\Rp} \!\!\!  \left| \alpha_u(\xi) \right| \, \nu_C (d\xi) du \right] 
        \\
        &= \E^{\P_C}_{t,x,\bz}\left[ \frac{d\Q_\beta}{d\P_C} \int_t^T \!\!\!  \int_{\Rp} \!\!\!  \left| \alpha_u(\xi) \right| \, \nu_C (d\xi) du \right] 
        \\
        &\leq \left(\E^{\P_C}_{t,x,\bz}\left[ \left(\frac{d\Q_\beta}{d\P_C} \right)^2 \right]\right)^\frac12
        \left(\E^{\P_C}_{t,x,\bz}\left[ \left( \int_t^T \!\!\!  \int_{\Rp} \!\!\!  \left| \alpha_u(\xi) \right| \, \nu_C (d\xi) du \right)^2 \right]\right)^\frac12 
        \\
        &\leq \left(\E^{\P_C}_{t,x,\bz}\left[ \left(\frac{d\Q_\beta}{d\P_C} \right)^2 \right]\right)^\frac12 
        \left( \E^{\P_C}_{t,x,\bz}\left[ \int_t^T \!\!\!  \int_{\Rp} \!\!\!  \left| \alpha_u(\xi) \right|^2 \, \nu_C (d\xi) du  \right]\right)^\frac12  
        \left( (T-t) \lambda_C \right)^\frac12
        \\
        & < \infty \,,
    \end{align*}
    where, the final inequality holds as $\alpha\in\A$ (see \zcref{def:alpha_int}) and $\beta\in\cB$ (see \zcref{def:beta_int2}).
     Next, by the same reasoning, the second term is bounded:
    \begin{align*}
         & \hspace*{-1em}
         \E^{\Q_\beta}_{t,x,\bz}\left[\int_t^T \!\!\! \int_{\Rp} \!\!\! \left| \xi - \alpha_u(\xi) \right| \, \nu_C (d\xi) du\right] 
        \\
        &\leq \left(\E^{\P_C}_{t,x,\bz}\left[ \left(\frac{d\Q_\beta}{d\P_C} \right)^2  \right]\right)^\frac12\left( \E^{\P_C}_{t,x,\bz}\left[ \int_t^T \!\!\! \int_{\Rp} \!\!\! \left| \xi - \alpha_u(\xi) \right|^2 \, \nu_C (d\xi) du  \right]\right)^\frac12 
         \left( (T-t) \lambda_C \right)^\frac12
        < \infty \,.
    \end{align*}
    Finally, for the third term, we have
    \begin{align*}
        &\hspace*{-1em}\E^{\Q_\beta}_{t,x,\bz}\left[ \sup_{s\in[t,T]}  \left|  \int_t^s \!\! \int_{\Rp} \!\!\! [\xi - \alpha_u(\xi)] \, \tilde N^{\P_C}(d\xi,du) \right| \right] 
        \\
         &\leq \left(\E^{\P_C}\left[ \left( \frac{  d\Q_\beta}{d\P_C} \right)^2 \right]
         \right)^\frac12
         \left(\E^{\P_C}_{t,x,\bz} \left[ \sup_{s\in[t,T]}  \left| \int_t^s \!\!\! \int_{\Rp} \!\!\!   \left[ \xi - \alpha_u(\xi) \right] \, \tilde N^{\P_C}(d\xi,du) \right|^2  \right]\right)^\frac12
         \\
         &\leq
         K\left(\E^{\P_C}\left[ \left( \frac{  d\Q_\beta}{d\P_C} \right)^2 \right]
         \right)^\frac12
         \left(\E^{\P_C}_{t,x,\bz} \left[ \int_t^T \!\!\! \int_{\Rp} \!\!\!   \left| \xi - \alpha_u(\xi) \right|^2 \, N (d\xi,du)  \right]\right)^\frac12
         \\
         &\leq
         K\left(\E^{\P_C}\left[ \left( \frac{  d\Q_\beta}{d\P_C} \right)^2 \right]
         \right)^\frac12
         \left(\E^{\P_C}_{t,x,\bz} \left[ \int_0^T \!\!\! \int_{\Rp} \!\!\!   \left| \xi - \alpha_u(\xi) \right|^2 \, \nu_C(d\xi) du \right]\right)^\frac12
         \\ 
         &< \infty,
    \end{align*}
    where $K$ is some positive constant, the second inequality follows by the Burkholder-Davis-Gundy inequality, and the third inequality by replacing the random measure with its compensator and extending the integral. The final inequality holds as $\alpha\in\A$ (see \zcref{def:alpha_int}) and $\beta\in\cB$ (see \zcref{def:beta_int2}). 

    Combining these three inequalities with \eqref{eqn:sup-X-ineq}, we find that $\E^{\Q_\beta}_{t,x,\bz}\left[ \displaystyle\sup_{s\in[t,T]}  \left| X_s^\alpha \right| \right]<\infty$ and hence \eqref{eqn:ver5} holds.
\end{proof}

This shows that the optimal risk sharing strategy and model derived in \zcref{prop:optim_contr} are indeed optimal for the insurer's risk sharing problem.

\section{Equivalent monotone mean-variance and mean-variance formulations}
\label{sec:MMV-equiv}

In this section, we show that the model combination problem solved in \zcref{sec:main-results} has an equivalent monotone mean-variance formulation under a composite probability measure $\bar\P$, which we characterize. This composite probability measure depends on all of the reference measures and is not the same as any of them individually. This gives an alternative perspective on how the model combination enters the decision problem.
Then, given extant results in the literature showing the equivalence between monotone mean-variance and mean-variance problems (e.g., \cite{Cerny2020,Li2025ORL}), we formulate the mean-variance risk sharing problem under $\bar \P$, and show that its solution coincides with that of the model combination problem.

First we show a general equivalence between the values of the MMV problem, \eqref{eqn:def_MMV}, with respect to a probability measure $\bar\P$, and the MC problem, \eqref{eqn:def_MC}, with respect to a finite set of equivalent probability measures $\{\P_k\}_{k\in\N}$. This result applies to the model combination problem for an arbitrary finite set of equivalent probability measures and is not specific to the risk sharing problem.

\begin{theorem}
\label{thm:P_bar}
    Given a finite set of equivalent probability measures $\{\P_k\}_{k\in\N}$, fix a probability measure $\P$ which is equivalent to all $\P_k$, $k\in\N$, and define the composite probability measure $\bar \P$ by the Radon-Nikodym derivative
    \begin{equation}
    \label{eqn:Pbar_def}
        \frac{d\bar\P}{d\P} = \frac1\zeta \left( \sum_{k\in\N} \pi_k \frac{d\P\,}{d\P_k}\right)^{-1} \!, \quad \text{where } \zeta := \E^{\P} \left[ \left( \sum_{k\in\N} \pi_k \frac{d\P\,}{d\P_k}\right)^{-1}\right]
    \end{equation}
    Then, letting $\bar \theta := \zeta \theta$, we have
    \begin{equation*}
       V^{\theta}_{\text{MC}}(X;\{\P_k\}_{k \in \N}) = V^{\bar \theta}_{\text{MMV}}(X;\bar\P) + \frac{1}{2} \left( \frac{1}{\bar \theta} - \frac{1}{\theta} \right)\,.
    \end{equation*}
\end{theorem}

\begin{proof}
Let $\P$ be an arbitrary probability measure which is equivalent to all $\P_k$, $k\in\N$. Such a $\P$ exists since the $\P_k$, $k\in\N$, are equivalent.
We begin by changing the second term in the model combination criterion to an expectation under $\P$:\footnote{We gratefully acknowledge the suggestion of an anonymous reviewer which motivated this representation.}
\begin{align*}
    V^{\theta}_{\text{MC}}(X;\{\P_k\}_{k \in \N}) &:=
    \min_{\Q \in \Delta^2(\{\P_k\}_{k \in \N})} \left\{ \E^\Q  \left[ X \right]  + \frac{1}{2 \theta} \sum_{k\in\N} \pi_k \, \E^{\P_k} \left[ \left(\frac{d\Q}{d\P_k}\right)^2 - 1  \right]  \right\} \,, \\
    &=
    \min_{\Q \in \Delta^2(\{\P_k\}_{k \in \N})} \left\{ \E^\Q  \left[ X \right]  + \frac{1}{2 \theta} \sum_{k\in\N} \pi_k \, \E^{\P_k} \left[ \left(\frac{d\Q}{d\P}\right)^2  \left(\frac{d\P}{d\P_k}\right)^2 \right]  \right\}  - \frac{1}{2\theta}\,, \\
    &=
    \min_{\Q \in \Delta^2(\{\P_k\}_{k \in \N})} \left\{ \E^\Q  \left[ X \right]  + \frac{1}{2 \theta} \, \E^{\P} \left[ \left(\frac{d\Q}{d\P}\right)^2  \sum_{k\in\N} \pi_k  \frac{d\P}{d\P_k}\right] \right\}  - \frac{1}{2\theta}\,.
\end{align*}

Define a random variable $S$ which is the weighted sum of the RN densities
\begin{equation*}
     S = \sum_{k\in\N} \pi_k \frac{d\P}{d\P_k} \,.
\end{equation*}
As the measures $\P_k$ are equivalent to $\P$, and the set $\N$ is finite, we have that $0<S<\infty$ holds $\P$-almost surely. 
Furthermore, let $\zeta = \E^{\P}[1/S]$. Then $\zeta \in (0,\infty)$. To see the upper bound, let $m\in\N$ be the index of a probability measure with a strictly positive weight $\pi_m >0$. One must exist by our assumption that the weights sum to 1. Then
$S = \sum_{k\in\N} \pi_k \frac{d\P}{d\P_k} \geq \pi_m \frac{d\P}{d\P_m}$ $\P$-a.s..
Therefore $\zeta = \E^{\P} \left[\frac{1}{S} \right] \leq \frac{1}{\pi_m} \E^{\P}\left[\frac{d\P_m}{d\P}\right] =\frac{1}{\pi_m} < \infty$.

Now we define the probability measure $\bar\P$ via the RN density
\begin{equation*}
    \frac{d \bar \P}{d \P} = \frac{1}{S} \,\frac{1}{\E^{\P}[1/S]} = \frac{1}{\zeta S}\,,
\end{equation*}
which is well-defined as $0 < 1/S < \infty$ $\P$-a.s.~and $\zeta \in (0,\infty)$. 
Then the criterion is
\begin{align*}
    V^{\theta}_{\text{MC}}(X;\{\P_k\}_{k \in \N}) = &\min_{\Q \in \Delta^2(\{\P_k\}_{k \in \N})} \left\{ \E^\Q \left[ X \right]+ \frac{1}{2 \theta} \, \E^{\P} \left[ S \left(\frac{d\Q}{d\P}\right)^2  \right]  \right\}  - \frac{1}{2\theta}\,, \\
    = & \min_{\Q \in \Delta^2(\{\P_k\}_{k \in \N})} \left\{ \E^{\Q} \left[ X \right]+ \frac{1}{2 \theta} \, \E^{\bar \P} \left[ S \left(\frac{d\Q}{d\bar \P}\right)^2 \frac{d\bar \P}{d\P}  \right]  \right\}  - \frac{1}{2\theta}\,, \\
    = & \min_{\Q \in \Delta^2(\{\P_k\}_{k \in \N})} \left\{ \E^{\Q} \left[ X \right]+ \frac{1}{2 \theta \zeta} \, \E^{\bar \P} \left[\left(\frac{d\Q}{d\bar \P}\right)^2 \right] \right\}  - \frac{1}{2\theta} \\
    = & \min_{\Q \in \Delta^2(\{\P_k\}_{k \in \N})} \left\{ \E^{\Q} \left[ X \right]+ \frac{1}{2 \bar \theta} \, \E^{\bar \P} \left[\left(\frac{d\Q}{d\bar \P}\right)^2 -1 \right] \right\}  - \frac{1}{2\theta} + \frac{1}{2\bar\theta}\,.
\end{align*}

Finally, we show $\Delta^2(\{\P_k\}_{k \in \N})= \Delta^2(\bar \P):= \left\{\Q: \Q \ll \bar \P \text{ and } \E^{\bar \P} \left[\left(\tfrac{d\Q}{d\bar \P}\right)^2\right] <\infty   \right\}$, in which case the first term of the above equation is exactly $V^{\bar \theta}_{\text{MMV}}(X;\bar\P)$. 
We have that
\begin{equation}
\label{eqn:equiv_sets}
    \E^{\bar \P} \left[ \left(\frac{d\Q}{d\bar \P}\right)^{\!\!2}  \right] = \E^{\Q} \left[ \frac{d\Q}{d\P} \frac{d\P}{d\bar \P} \right]= \zeta \, \E^{\Q} \left[ \frac{d\Q}{d\P} S \right]  = \zeta \, \E^{\Q} \left[ \frac{d\Q}{d\P} \sum_{k\in\N} \pi_k \frac{d\P}{d\P_k} \right] = \zeta \sum_{k\in\N} \pi_k  \E^{\P_k} \left[   \left( \frac{d\Q}{d\P_k}\right)^{\!\!2} \right] \,.
\end{equation}
Take $\Q \in \Delta^2(\{\P_k\}_{k \in \N})$. Then $\Q \ll \P_k$ for all $k\in\N$ and hence $\Q \ll \bar \P$ since $\bar \P \sim \P_k$ for all $k\in\N$ by construction. Furthermore, 
$\E^{\P_k} \left[\left(\tfrac{d\Q}{d\P_k}\right)^2\right] <\infty $ for all $k\in\N$ and hence $\E^{\bar \P} \left[ \left(\frac{d\Q}{d\bar \P}\right)^{\!\!2}  \right]  < \infty$ by \eqref{eqn:equiv_sets}. Thus $\Q \in \Delta^2(\bar \P)$.

Take $\Q \in \Delta^2(\bar \P)$. Then $\Q \ll \P_k$ for all $k\in\N$ since $\Q \ll \bar \P$ and $\bar \P \sim \P_k$ for all $k\in\N$. Furthermore, $\E^{\bar \P} \left[ \left(\frac{d\Q}{d\bar \P}\right)^{\!\!2}  \right] < \infty$ and thus by \eqref{eqn:equiv_sets},
\begin{equation*}
     \sum_{k\in\N} \pi_k  \E^{\P_k} \left[   \left( \frac{d\Q}{d\P_k}\right)^{\!\!2} \right] = \frac{1}{\zeta} \, \E^{\bar \P} \left[ \left(\frac{d\Q}{d\bar \P}\right)^{\!\!2}  \right] < \infty \,,
\end{equation*}
implying $\E^{\P_k} \left[   \left( \frac{d\Q}{d\P_k}\right)^{\!\!2} \right] < \infty$ for each $k\in\N$. Therefore $\Q \in \Delta^2(\{\P_k\}_{k \in \N})$. The equivalence of the sets completes the proof.
\end{proof}

Furthermore, the choice of $\P$ in \zcref{thm:P_bar} does not change the composite probability measure $\bar \P$.
\begin{proposition}
\label{prop:P_bar_unique}
    The composite probability measure $\bar \P$ is independent of the choice of the probability measure $\P$.
\end{proposition}

\begin{proof}
Let $\P'$ be a probability measure which is equivalent to all $\P_k$, $k\in\N$, and distinct from $\P$. Define the composite probability measure with respect to $\P'$ as
\begin{equation*}
        \frac{d\bar\P'}{d\P'} := \frac{1}{S'} \frac{1}{\E^{\P'}[1/S']} \,, \quad \text{where } S'= \sum_{k\in\N} \pi_k \frac{d\P'}{d\P_k} \,.
\end{equation*}  

We have that $\P' \sim \P$ and 
\begin{equation*}
    \E^{\P'}[1/S'] = \E^{\P'}\left[\frac{\frac{d\P}{d\P'}}{\frac{d\P}{d \P'}\sum_{k\in\N} \pi_k \frac{d\P'}{d\P_k}}\right] = \E^{\P}\left[\frac{1}{\sum_{k\in\N} \pi_k \frac{d\P}{d\P_k}}\right] = \zeta\,.
\end{equation*}
Furthermore, the following holds $\P$-almost surely:
\begin{equation*}
    \frac{d\bar\P'}{d\P} =  \frac{d\bar\P'}{d\P'} \frac{d\P'}{d\P} = \frac{1}{S'} \frac{1}{\E^{\P'}[1/ S']} \frac{d\P'}{d\P} =  \frac{1}{\zeta} \frac{d\P'}{d\P} \frac{1}{\sum_{k\in\N} \pi_k \frac{d\P'}{d\P_k}} = \frac{1}{\zeta}  \frac{1}{\sum_{k\in\N} \pi_k \frac{d\P}{d\P_k} } = \frac{d\bar\P}{d\P} 
     \,,
\end{equation*}
which shows the probability measures coincide.
\end{proof}

We have shown that the model combination problem attains the same value as the MMV problem plus a constant, and therefore is in that sense equivalent to an MMV problem. This raises the question of whether one could directly solve the MMV problem with respect to $\bar \P$, without incorporating the $n+1$ auxiliary processes. The answer, however, is no, because the Poisson random measure $N$ is no longer compound Poisson under $\bar \P$. In fact, as the next proposition shows, the compensator of $N$ under $\bar \P$ is non-Markovian in general. By introducing additional state variables, it may be lifted to a Markov process, where those state variables are themselves conditional measure change processes. Moreover, the conditional measure change corresponding to $\bar\P$, $\bar{Z}_t:=\E^\P[\frac{d\bar\P}{d\P}|\F_t]$, does not admit a Markov structure in $\bar{Z}$ alone, unlike the conditional measure change corresponding to the reference measures $\P_k,\,\kinI$ we study earlier. Hence, we cannot treat this case as in our earlier analysis with a single reference measure chosen to be $\bar\P$. 
\begin{proposition}
\label{prop:P-bar-compensator}
    The $\bar \P$-compensator $\bar{v}$ of $N$  admits the representation
    \begin{equation*}
    \bar{v}_t(\xi) = \frac{g\left(t,v_1(\xi)\,Z^1_{t^-},\dots,v_n(\xi)\,Z^n_{t^-},v_C(\xi)\,Z^C_{t^-}\right)}{g(t,\,Z^1_{t^-},\dots,\,Z^n_{t^-},Z^C_{t^-})}  
    \end{equation*}
    where 
    \begin{equation*}
    Z_t^k:=\E^{\P}\left[\left.\frac{d\P_k}{d\P}\right|\F_t\right], \qquad \forall \kinI
    \end{equation*}
    and $g(t,z_1,\dots,z_n,z_C)$ denotes the unique solution to the PIDE
    \begin{equation*}
    \left\{
    \begin{split}
    \partial_t g(t,z_1,\dots,z_n,z_C) + \int_{\Rp} \!\! g(t,v_1(\xi)\,z_1,\dots,v_n(\xi)\,z_n,v_C(\xi)\,z_C)\,d\xi &= 0,
    \\
    g(T,z_1,\dots,z_n,z_C) &= \zeta \left(\sum_{\kinI} \pi_k\,(z_k)^{-1}\right)^{-1}.
    \end{split}
    \right.
    \end{equation*}
\end{proposition}
The proof is given in \zcref{appendixB}.

\subsection{Related mean-variance problem}
We have shown that the model combination problem is equivalent to the MMV problem with respect to $\bar \P$. It is therefore also of interest to also investigate the mean-variance problem under $\bar \P$. We expect the problems to coincide.
The related mean-variance problem is:
\begin{optimization}
\label{optim:MV}
    The insurer seeks the solution to the following problem:
    \begin{align*}
        \sup_{\alpha \in \A} \left\{ \E^{\bar \P}[X^\alpha_T] - \frac{\bar \theta}{2} \var{\bar \P} (X^\alpha_T) + \frac{1}{2} \left( \frac{1}{\bar \theta} - \frac{1}{\theta} \right) \right\} \,,
    \end{align*}
    where
    \begin{align}
    \label{eqn:SDEX_MV}
            d X_t^\alpha &= \left[ c - (1+\eta) \int_{\Rp} \!\!\! \alpha_t(\xi) \, \nu_C (d\xi)   \right] dt - \int_{\Rp} \!\!\! [\xi - \alpha_t(\xi)] \, N(d\xi,dt) \,,
        \end{align}
    with $X^\alpha_0 = x_{0}$.
\end{optimization}
Let $X^*_T$ be the optimal wealth for \zcref{optim:MV}. Then if $1-\bar \theta(X^*_T - \E^{\bar \P}[X^*_T]) $ is in $L^2 (\bar \P)$ and almost surely non-negative, the mean-variance and monotone mean-variance problems are equivalent by \textcite{Li2025ORL}. We aim to show that this condition holds, and that therefore the mean-variance formulation of the problem is also equivalent to the model combination problem.

We construct the solution to this mean-variance problem following the pre-commitment approach of \textcite{ZhouLi2000}.
We begin by solving the quadratic utility problem $\inf_{\alpha \in \A} \E^{\bar\P} \left[( X^\alpha_T - \gamma)^2 \right]$ for $\gamma\in\R$. The key insight is to solve the problem under the probability measure $\Q^*$ instead of $\bar \P$, and expand the state space to incorporate the same auxiliary processes as in the MMV problem. By \zcref{thm:P_bar} and \zcref{prop:P_bar_unique}, we have
\begin{align*}
        \inf_{\alpha \in \A} \E^{\bar\P} \left[( X^\alpha_T - \gamma)^2 \right]  &=  \inf_{\alpha \in \A} \E^{\Q^*} \left[\frac{d\bar\P}{d\Q^*}( X^\alpha_T - \gamma)^2 \right] =  \inf_{\alpha \in \A} \frac{1}{\zeta} \E^{\Q^*} \left[\frac{( X^\alpha_T - \gamma)^2}{\sum_{\kinI} \pi_k \frac{d\Q^*}{d\P_k}} \right]  =  \inf_{\alpha \in \A} \frac{1}{\zeta} \E^{\Q^*}\left[\frac{( X^\alpha_T - \gamma)^2 }{\sum_{\kinI} \pi_k Z^*_{k,T}}\right]\,.
\end{align*}

As the following result shows, the optimal mean-variance strategy depends not just on the auxiliary processes but also on the increment $\hat X_t^*-x_0$, and therefore explicitly on the initial value $x_0$. However, the strategy may be rewritten in terms of the auxiliary processes only, eliminating the explicit wealth dependence. 

\begin{theorem}
\label{prop:MV_prob}
The optimal strategy for \zcref{optim:MV} is
    \begin{equation*}
        \hat\alpha^*(t,\hat X^*_t,\bZ^*_t) = \xi + \left( \hat X^*_t - A(t) \right) \left( \frac{\sum_{\kinI} \pi_k \ell_k(T-t) Z^*_{k,t}\left[ \frac{(1+\eta) v_C(\xi)}{v_k(\xi)}-1\right]}{\sum_{\kinI} \pi_k \ell_k(T-t) Z^*_{k,t}} \right)
    \end{equation*}
    where $\hat X^*_t$ solves \eqref{eqn:SDEX_MV} under the strategy $\hat \alpha$, $\bZ^*_t$ is given in \zcref{prop:expression-Z-X}, and 
    \begin{equation*}
        A(t) = \frac{\sum_{\kinI} \pi_k \ell_k(T)}{\theta} + x_0 - \left[ (1+\eta) \int_{\Rp}\!\!\! \xi\, v_C(\xi)d\xi - c\right]t \,.
    \end{equation*}
    Moreover, $\hat \alpha^*$ reduces to
    \begin{align*}
        \hat\alpha^*(t,\hat X_t^*,\bZ_t^*)= \xi - \frac{1}{\theta} {\sum_{\kinI} \pi_k \ell_k(T-t) Z^*_{k,t}\left[ \frac{(1+\eta) v_C(\xi)}{v_k(\xi)}-1\right]} \,.
    \end{align*}
\end{theorem}
The proof is given in \zcref{appendixA}. This shows that the optimal strategy for the mean-variance problem coincides with that of the model combination problem given in \zcref{prop:optim_contr}. Moreover, given \zcref{thm:P_bar}, both are equivalent to the monotone mean-variance problem with respect to $\bar\P$.  We confirm the equivalence of the mean-variance and monotone mean-variance problems by checking the condition of \textcite{Li2025ORL}.
\begin{proposition}
    The mean-variance problem and the monotone mean-variance problem with respect to $\bar\P$ are equivalent. 
\end{proposition}
\begin{proof}
    By \zcref{prop:XA} and the definition of $A(t)$,
\begin{equation*}
    \frac{X_T^* - A(T)}{\sum_{\kinI} \pi_k Z^*_{k,T}} = \frac{x_0 - A(0)}{\sum_{\kinI} \pi_k \ell_k(T)} = -\frac{1}{\theta} \,,
\end{equation*}
so $X_T^* =-\frac{1}{\theta}\sum_{\kinI} \pi_k Z^*_{k,T} + A(T)$.
By the definition of $\bar \P$ in \eqref{eqn:Pbar_def} and \zcref{prop:P_bar_unique}, we have
\begin{equation*}
    \frac{d\bar\P}{d\Q^*} = \frac{1}{\zeta}\frac{1}{\sum_{\kinI} \pi_k  \frac{d\Q^*}{d\P_k}} =  \frac{1}{\zeta} \frac{1}{\sum_{\kinI} \pi_k  Z^*_{k,T}}\,,
\end{equation*}
so, rearranging, we have
\begin{equation*}
    \E^{\bar \P}\left[\sum_{\kinI} \pi_k Z^*_{k,T} \right] = \frac1\zeta \,.
\end{equation*}
Therefore
\begin{equation*}
     X_T^* - \E^{\bar \P}[ X_T^* ] = -\frac{1}{\theta}\sum_{\kinI} \pi_k Z^*_{k,T} + A(T) + \frac{1}{\zeta} \frac{1}{\theta} - A(T) = -\frac{1}{\zeta \theta} \left(\zeta \sum_{\kinI} \pi_k Z^*_{k,T} -1 \right) \,.
\end{equation*}

Thus the condition from \textcite{Li2025ORL} is satisfied, as expected:
\begin{align*}
    1-\bar \theta(X^*_T - \E^{\bar \P}[X^*]) = \zeta \sum_{\kinI} \pi_k Z^*_{k,T} = \frac{d\Q^*}{d\bar\P}\,,
\end{align*}
which is almost surely non-negative and in $L^2 (\bar \P)$, showing equivalence of the MMV and MV problems under $\bar \P$.
\end{proof}

\section{The case with one reference model}
\label{sec:one-model}

In this section, we restrict our attention to a special case of the risk sharing problem in \zcref{opt:insurer-problem}. We consider the case where the insurer only has a single reference model. This is of interest as then the insurer's criterion reduces to the MMV criterion of \textcite{Maccheroni2006,Maccheroni2009} under a single reference measure, $\P$, instead of the composite measure, $\bar \P$.

Let $\P$ denote the single model available to the insurer, a probability measure on the completed and filtered measurable space $(\Omega, \F, \mathbb{F}=(\F_t)_{t \in [0,T]})$. As there is only one model, both the insurer and the counterparty use this model. As above, we assume the PRM $N$ has $\P$-compensator $\nu(d\xi,dt) = \nu(d\xi) dt = v(\xi) d\xi dt$ and define the $\P$-compensated PRM by 
\begin{equation*}
    \tilde N^{\P}(d\xi,dt) = N(d\xi,dt) - \nu(d\xi)dt \,.
\end{equation*}

The insurer's wealth process' dynamics are unchanged, and given by \eqref{eqn:X_SDE}. We  introduce an auxiliary process $Z$ as the solution to the SDE:
\begin{equation}
\label{eqn:Z_1M}
    dZ_{t}^\beta = - Z_{t^-}^\beta \int_{\Rp} \!\! \left[ 1 - \frac{\beta_t(\xi)}{v(\xi)}\right] \tilde N^{\P}(d\xi,dt)\,, \quad Z^\beta_{0} = 1 \,,
\end{equation}
where $\beta_t$ is an $\bF$-predictable random field. The risk sharing strategy $\alpha$ and the compensator $\beta$ must satisfy the following simplified version of the previous definitions, namely \zcref{def:alpha_int} with $\P_C = \P$ and $\nu_C = \nu$ and \zcref{def:beta_int2} with $\P_k=\P$ and $Z_k^\beta=Z^\beta$.

Given this set-up, we solve a simplified version of \zcref{opt:insurer-problem} where there is only one reference measure. To do so, 
we restrict the result in \zcref{prop:optim_contr} to the case $n=1$ and $v_C(\xi) = v(\xi)$. To simplify notation, in this section we assume that $\nu(d\xi)$ is compound Poisson with an arrival rate $\lambda$ and mean severity $\mu$, i.e.,
\begin{equation}
\label{eqn:lambda_asmpt}
  \lambda:=\int_{\Rp} \!\!\! \nu (d\xi) \quad \text{and} \quad  \mu := \frac{1}{\lambda}  \int_{\Rp} \!\!\! \xi \, \nu(d\xi)  \,,
\end{equation}
noting that $\lambda,\mu<\infty$ by the assumption in \eqref{eqn:nu_int_assumption}. We then consider the following problem with the monotone mean-variance criterion:

\begin{optimization}
\label{opt:insurer-problem-1M}
    The insurer seeks the solution to the following problem:
    \begin{equation*}
        \sup_{\alpha \in \A} \newinf_{\beta \in \mfB} \E^{\Q_\beta}  \left[ X^\alpha_T + \frac{1}{2 \theta} \left(Z^\beta_{T} - 1 \right)\right] \,,
    \end{equation*}
    where $X^\alpha$ has dynamics given in \eqref{eqn:X_Q_SDE} with $\nu_C=\nu$ and $X^\alpha_0 = x_0$ and $Z^\beta$ has dynamics given in \eqref{eqn:Z_1M}.
\end{optimization}

\begin{corollary}
\label{corr:optim_alpha_1model}
    The optimal controls for \zcref{opt:insurer-problem-1M} in feedback form are 
    \begin{align*}
        \alpha^*(t,\xi,z) &= \xi - \frac{ \eta z}{\theta} e^{\lambda \eta^2 (T-t)}\,, \\
        \beta^*(\xi) &= (1+\eta)v(\xi) \,,
    \end{align*}
    and the insurer's value function is
    \begin{equation*}
        \Phi(t,x,z) = x + \frac{1}{2\theta} \left[ e^{\lambda \eta^2 (T-t)} z - 1\right] - (T-t) \left[ (1 + \eta) \lambda \mu - c \right] \,.
    \end{equation*}
\end{corollary}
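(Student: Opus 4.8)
The plan is to obtain the statement as a direct specialization of \cref{prop:optim_contr} to the degenerate situation in which there is a single reference model, which I would encode by taking $\cI=\{1\}$ (equivalently, identifying the counterparty's model with the sole reference model, so that $v_C=v_1=v$) and $\pi_1=1$. No fresh optimization or verification argument is required: optimality of the resulting controls and the claim that the given $\Phi$ is the value function are inherited from \cref{prop:optim_contr} together with the verification results of \cref{sec:verification} applied in this special case. So the only content is to simplify the formulas.

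The one computation to carry out first is the evaluation of $\ell_1$. Since $v_C/v_1\equiv 1$, the integrand defining $\ell_1$ collapses to the constant $[1-(1+\eta)]^2=\eta^2$, so that
\begin{equation*}
    \ell_1(T-t) = \exp\!\left((T-t)\,\eta^2 \int_{\Rp}\nu(d\xi)\right) = e^{\lambda\eta^2(T-t)},
\end{equation*}
using $\lambda:=\int_{\Rp}\nu(d\xi)<\infty$. Next I would substitute $v_C=v_1=v$, $\pi_1=1$, $z_1=z$, and $\ell_1(T-t)=e^{\lambda\eta^2(T-t)}$ into \eqref{eqn:alpha-star}: the bracketed factor becomes $(1+\eta)\tfrac{v(\xi)}{v(\xi)}-1=\eta$, the $\xi$-dependence of the correction term disappears, and the single summand equals $z\,e^{\lambda\eta^2(T-t)}\eta$, giving $\alpha^*(t,\xi,z)=\xi-\tfrac{\eta z}{\theta}e^{\lambda\eta^2(T-t)}$. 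Likewise \eqref{eqn:beta-star} gives $\beta^*(\xi)=(1+\eta)v_C(\xi)=(1+\eta)v(\xi)$. Finally, in the value function of \cref{prop:optim_contr} the sum collapses to $\tfrac{1}{2\theta}z\,e^{\lambda\eta^2(T-t)}$ and $\int_{\Rp}\xi\,\nu_C(d\xi)=\int_{\Rp}\xi\,\nu(d\xi)$, yielding exactly $\Phi(t,x,z)=x+\tfrac{1}{2\theta}\big(e^{\lambda\eta^2(T-t)}z-1\big)-(T-t)\big[(1+\eta)\int_{\Rp}\xi\,\nu(d\xi)-c\big]$.

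There is essentially no obstacle here; the only point meriting a moment of care is the bookkeeping in the degenerate case. Whether one reads "one model" as $\cI=\{1\}$ or as $\cI=\{1,C\}$ with $\P_1=\P_C$ makes no difference, because in the latter case $\ell_1=\ell_C$ (both compensators equal $v$) and $\pi_1+\pi_C=1$, so the weighted sums reduce to the same expressions; I would note this briefly to keep the specialization unambiguous.
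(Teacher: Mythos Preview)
Your proposal is correct and matches the paper's own proof, which simply states that the result follows immediately from \cref{prop:optim_contr} when $n=1$ and $v_C=v_1=v$. Your explicit computation of $\ell_1(T-t)=e^{\lambda\eta^2(T-t)}$ and the ensuing substitutions are exactly the simplifications implicit in that one-line argument.
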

\begin{proof}
    This follows immediately from \zcref{prop:optim_contr} when $n=1$ and $v_C(\xi) = v_1(\xi)=v(\xi)$.
\end{proof}

The optimal compensator $\beta^*$ is now a distortion of the single ($\P$-)compensator, $v$. Similar to before, the change to the compensator under the optimal measure $\Q^*$ is to multiply the original by $1+\eta$. The optimal risk sharing strategy $\alpha^*$ is still of the form $\xi - d$, but now this $d$ is loss independent and always positive. This means that in this setting we must have $\alpha^*(t,\xi,z) < \xi$. However, $\alpha^*$ is still not a traditional reinsurance arrangement, as we may have $\alpha^*(t,\xi,z) < 0$.

By \zcref{prop:expression-Z-X}, we obtain expressions for the optimal processes $X^*$, $Z^*$ in terms of a Poisson process.
\begin{corollary}
    \label{prop:XZ1d}
    Let $M_t = \int_0^t \! \int_{\Rp} \!\! N(d\xi,ds)$. Then for $t \in [0,T]$,
    \begin{align*}
        Z_t^* &= e^{- \eta \lambda t} (1+\eta)^{M_t}  \,, \\
        X_t^* &= x_{0} - \left[ (1 + \eta) \lambda \mu - c \right] t + \frac{1}{\theta} e^{\lambda \eta^2 T}  \left( 1 - e^{-\lambda \eta^2 t}  Z_t^* \right) \,.
    \end{align*}
\end{corollary}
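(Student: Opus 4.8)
The plan is to derive both identities by specializing \Cref{prop:expression-Z-X} to the one-model setting, where $n=1$, the single weight is forced to be $\pi_1=1$ by the constraint $\sum_k\pi_k=1$, and $v_C = v_1 = v$ (so $\nu_C=\nu$ and, with $\lambda = \int_{\Rp}\nu(d\xi) < \infty$, we have $\lambda_C = \lambda$). First I would record the elementary simplifications that make everything collapse: since $v_k = v_C = v$, the drift integral in the general formula for the auxiliary process is $\int_{\Rp}[v(\xi) - (1+\eta)v(\xi)]\,d\xi = -\eta\lambda$; the log-integrand in its jump term is the constant $\ln(1+\eta)$, the ratio $v_C(\xi)/v(\xi)$ being identically $1$; and the exponent reduces to $\ell_1(t) = \exp\!\big(t\int_{\Rp}[1-(1+\eta)]^2\,\nu(d\xi)\big) = e^{\lambda\eta^2 t}$, so that $\ell_1(T) = e^{\lambda\eta^2 T}$ and $\ell_1(-t) = e^{-\lambda\eta^2 t}$.

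For $Z^*_t$, I would then plug these into the general representation
\[
  Z^*_t = \exp\!\left(t\!\int_{\Rp}\!\big[v_1(\xi) - (1+\eta)v_C(\xi)\big]\,d\xi + \int_0^t\!\!\int_{\Rp}\!\ln\!\big((1+\eta)\tfrac{v_C(\xi)}{v_1(\xi)}\big)\,N(d\xi,ds)\right),
\]
using that $\int_0^t\!\int_{\Rp}\ln(1+\eta)\,N(d\xi,ds) = \ln(1+\eta)\,M_t$ since the integrand is constant and $M_t = \int_0^t\!\int_{\Rp}N(d\xi,ds)$ is the total jump count on $[0,t]$ (a.s.\ finite because $\lambda<\infty$); exponentiating gives $Z^*_t = e^{-\eta\lambda t}(1+\eta)^{M_t}$. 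For $X^*_t$, I would substitute $n=1$, $\pi_1=1$, $\nu_C=\nu$, and the values $\ell_1(T) = e^{\lambda\eta^2 T}$, $\ell_1(-t) = e^{-\lambda\eta^2 t}$ into the general formula for $X^*_t$ in \Cref{prop:expression-Z-X}; collecting the deterministic terms and writing $Z^*_t$ for the single auxiliary process yields precisely the asserted expression.

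There is essentially no obstacle: the proof is a direct substitution, exactly paralleling the proof of the value-function corollary immediately preceding it. The only points worth a line of justification are that all reduced integrals are finite --- immediate, since in the one-model case \Cref{assump:intgr_comp} and \Cref{assump:intgr_comp_jk} both collapse to $\int_{\Rp}v(\xi)\,d\xi < \infty$, which is already assumed --- and that $M_t$ is well-defined, which follows from $\lambda = \int_{\Rp}\nu(d\xi) < \infty$ so that $N$ a.s.\ has finitely many atoms on $[0,t]\times\Rp$.
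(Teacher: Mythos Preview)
Your approach is exactly the paper's: the corollary is stated there without an explicit proof beyond the sentence ``By \cref{prop:expression-Z-X}, we obtain expressions\dots'', and you carry out precisely that specialization. One small point: your substitution into the $X^*_t$ formula from \cref{prop:expression-Z-X} actually produces the drift term $\big[c - (1+\eta)\int_{\Rp}\xi\,\nu(d\xi)\big]\,t$, not the sign-reversed $\big[(1+\eta)\int_{\Rp}\xi\,\nu(d\xi) - c\big]\,t$ printed in the corollary, so rather than confirming ``precisely the asserted expression'' your computation exposes a sign typo in the stated formula.
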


Between loss events, the paths of $Z^*$ are exponentially decreasing over time according to the function $e^{- \eta \lambda t}$. When a jump arrives in the process $M_t$, the path is multiplied by $1+\eta$, leading to an upward jump. $X^*$ remains linear in the auxiliary process, depending negatively on $Z^*$. The jumps in $X^*$ are driven entirely by the jumps in $Z^*$, and are always downwards.

We also compute the expectation, variance, and covariance of the processes $X^*$ and $Z^*$ under both the optimal measure $\Q^*$ and the (single) reference model $\P$. The results follow directly from \zcref{prop:XZ1d}, using the moment-generating function of the Poisson distribution. 

\begin{corollary} 
    For $t \in (0,T]$,
    \begin{alignat*}{2}
        & \E^{\P}[Z_t^*] = 1,  && \E^{\Q^*}[Z_t^*] = e^{\lambda \eta^2 t} \,,\\
        & \E^{\P}[X_t^*] =  x_{0} - \left[ (1 + \eta) \lambda \mu - c \right] t + \frac{1}{\theta} e^{\lambda \eta^2 T} \left( 1 -  e^{-\lambda \eta^2 t} \right), \hspace{1.5em} && \E^{\Q^*}[X_t^*] =  x_{0} - \left[ (1 + \eta) \lambda \mu - c\right] t, \\
        & \mathrm{Var}^{\P}(Z_t^*) = e^{\lambda \eta^2 t} - 1, && \mathrm{Var}^{\Q^*}(Z_t^*) = e^{2 \lambda \eta^2 t} \left( e^{\lambda \eta^2 (1+\eta)t} - 1 \right),\\
        & \mathrm{Var}^{\P}(X_t^*) = \frac{1}{\theta^2} e^{2\lambda \eta^2 (T-t)} \left( e^{\lambda \eta^2 t} - 1 \right), && \mathrm{Var}^{\Q^*}(X_t^*) = \frac{1}{\theta^2} e^{2 \lambda \eta^2 T} \left( e^{\lambda \eta^2  (1+\eta) t} - 1 \right), \\
        & \mathrm{Cov}^{\P}(X_t^*,Z_t^*) = -\frac{1}{\theta} e^{\lambda \eta^2 (T-t)} \left[ e^{\lambda \eta^2 t} -  1 \right], && \mathrm{Cov}^{\Q^*}(X_t^*,Z_t^*) = -\frac{1}{\theta} e^{\lambda \eta^2 (T+t)} \left( e^{\lambda \eta^2  (1+\eta) t} - 1 \right),\\
        & \mathrm{Corr}^{\P}(X_t^*,Z_t^*) = -1, && \mathrm{Corr}^{\Q^*}(X_t^*,Z_t^*) = -1.
    \end{alignat*}
\end{corollary}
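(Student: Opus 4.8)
The plan is to reduce every quantity to the first two moments of the single Poisson process $M_t = \int_0^t\int_{\Rp} N(d\xi,ds)$, using the representations from \cref{prop:XZ1d}. The key structural observation is that $Z_t^* = e^{-\eta\lambda t}(1+\eta)^{M_t}$ and that $X_t^*$ is an \emph{affine}, strictly decreasing function of $Z_t^*$ with deterministic coefficients, namely $X_t^* = g(t) - \tfrac{1}{\theta}e^{\lambda\eta^2(T-t)}Z_t^*$ for a deterministic function $g$. Consequently, once the mean and variance of $Z_t^*$ are known under a given measure, the affine relation immediately delivers $\E[X_t^*]$, $\mathrm{Var}(X_t^*) = \tfrac{1}{\theta^2}e^{2\lambda\eta^2(T-t)}\mathrm{Var}(Z_t^*)$, $\mathrm{Cov}(X_t^*,Z_t^*) = -\tfrac{1}{\theta}e^{\lambda\eta^2(T-t)}\mathrm{Var}(Z_t^*)$, and $\mathrm{Corr}(X_t^*,Z_t^*) = -1$ on $(0,T]$ (where $\mathrm{Var}(Z_t^*)>0$).

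First I would pin down the law of $M_t$ under each measure. Under $\P$ the PRM $N$ has compensator $\nu(d\xi)\,dt$, so $M$ is a homogeneous Poisson process of rate $\lambda = \int_{\Rp}\nu(d\xi)$. Under $\Q^*=\Q_{\beta^*}$, by \cref{prop:optim_contr} and the Girsanov computation of \cref{sec:setting-prelim} the compensator of $N$ is $(1+\eta)\nu(d\xi)\,dt$, which is again deterministic, so $M$ is a homogeneous Poisson process of rate $(1+\eta)\lambda$ under $\Q^*$.

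Next I would apply the Poisson moment generating function $\E[e^{sM_t}] = \exp\!\big(\mu\, t\,(e^s-1)\big)$ with $\mu\in\{\lambda,(1+\eta)\lambda\}$. Taking $s=\ln(1+\eta)$ and multiplying by the deterministic prefactor $e^{-\eta\lambda t}$ gives $\E[Z_t^*]$; taking $s=2\ln(1+\eta)$ and multiplying by $e^{-2\eta\lambda t}$ gives $\E[(Z_t^*)^2]$; subtracting the square of the former yields $\mathrm{Var}(Z_t^*)$. Substituting into the affine identities of the first paragraph and simplifying the exponents — e.g.\ using $(1+\eta)\big((1+\eta)^2-1\big) = 2\eta+3\eta^2+\eta^3$ for the $\Q^*$ second moment — produces every entry in the table. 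For the correlations one simply notes that an affine relation with negative slope forces $\mathrm{Corr}=-1$.

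I do not anticipate a genuine obstacle; the statement is purely computational once \cref{prop:XZ1d} is in hand. The only points deserving a sentence are: (i) that $M$ is still a homogeneous Poisson process under $\Q^*$, which holds because its $\Q^*$-compensator $(1+\eta)\lambda\,dt$ is deterministic; (ii) that all the exponential moments invoked are finite, which is automatic here since $\lambda<\infty$; and (iii) that $\mathrm{Var}(Z_t^*)>0$ for $t\in(0,T]$, so that the correlation is well defined and equals $-1$.
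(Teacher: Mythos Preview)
Your proposal is correct and follows exactly the approach the paper takes: the paper's proof consists of the single sentence that the results follow directly from \cref{prop:XZ1d} using the moment-generating function of the Poisson distribution, which is precisely what you outline. Your additional remarks on why $M$ remains a homogeneous Poisson process under $\Q^*$ and why the correlation is well defined on $(0,T]$ are apt elaborations of points the paper leaves implicit.
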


Several observations are apparent given these expressions.
First, as the insurer's wealth process $X^*$ is linear in the (single) auxiliary process $Z^*$, they are perfectly negatively correlated under both probability measures. Second, as in the model combination setting, the effect of the parameter $\theta$ on the variance of $X^*$ is clear: a larger penalty $\theta$ reduces the variance under both probability measures through the $1/\theta^2$ term. Finally, we also observe that the variance of both $X^*$ and $Z^*$ is larger under $\Q^*$ than under $\P$.

\section{The counterparty's perspective}
\label{sec:counterparty}

In the preceding sections, we have derived the optimal risk sharing contract from the insurer's perspective. We have yet, however, to address whether it is in the interests of the counterparty to offer such a contract. In this section, returning to the general setup, we address this question and turn our attention to the counterparty in the risk sharing agreement. 

Recall that the counterparty has their own model for the losses, given by $\P_C$. We consider how the counterparty would optimally choose the safety loading $\eta$, i.e., the price to charge, under their model. Our goal is to show that the counterparty can price this bespoke risk sharing contract in the risk-neutral setting. 

As described in \zcref{sec:21}, the counterparty accepts the ceded loss $\alpha_t(\cdot)$ and charges the risk sharing premium
$p_C = (1+\eta) \int_{\R_+} \!\alpha_t(\xi) \,\nu_C (d\xi)$. Suppose the counterparty has a given initial wealth $y_{0} > 0$. Then for a given ceded loss process $(\alpha_t(\cdot))_{t\in[0,T]}$, the counterparty's wealth process evolves as follows:
\begin{equation*}
    Y_t^{\eta,\alpha} = y_{0} + (1+\eta) \int_0^t \!\! \int_{\Rp} \!\!\! \alpha_s(\xi) \, \nu_C (d\xi) \, ds -  \int_0^t \!\! \int_{\Rp} \!\!\! \alpha_s(\xi) \, N(d\xi,ds) \,.
\end{equation*}

\subsection{Model combination setting}

Assuming the insurer responds optimally to a fixed price $\eta$, the counterparty's wealth is 
\begin{equation}
\label{eqn:Y}
    Y^{\eta}_t = y_{0} +  (1+\eta)  \int_0^t \!\! \int_{\Rp} \!\!\! \alpha^*(s,\xi,\bZ^*_s ) \, \nu_C(d\xi) \, ds -  \int_0^t \!\! \int_{\Rp} \!\!\! \alpha^*(s,\xi,\bZ^*_s ) \, N(d\xi,ds) \,,
\end{equation}
where 
\begin{align*}
        \alpha^*(t,\xi,\bz) &= \xi  - \frac{1}{\theta} \sum_{\kinI} \pi_k \, z_k \, \ell_k(T-t) \left[ (1+\eta) \frac{v_C(\xi)}{v_k(\xi)} - 1 \right] \,.
\end{align*}
The counterparty then chooses the constant safety loading $\eta\in\R_+$ that maximizes their expected wealth under their model, $\P_C$:
\begin{optimization}
\label{opt:counterpartysproblem1}
    The counterparty seeks the solution to the following problem:
    \begin{equation*}
         \sup_{\eta \in \R_+} \E^{\P_C}[Y_T^\eta] \,,
    \end{equation*}
    where $Y^{\eta}_t$ evolves according to \eqref{eqn:Y}.
\end{optimization}
The next proposition gives an expression for this expectation, leveraging results from \zcref{subsec:processes} that provide expressions for the processes $X^*$ and $\bZ^*$.
\begin{proposition}
    \label{prop:EY}
    Under \zcref{assump:intgr_comp}, for $t \in [0,T]$ and $\eta \in \Rp$:
    \begin{equation*}
         \E^{\P_C}[Y_t^\eta] = y_{0} + t \, \eta \int_{\Rp} \!\!\! \xi \, \nu_C(d\xi) -  \frac{1}{\theta} \sum_{\kinI} \pi_k \, \ell_k(T) \, \left[1 - \exp \left( t \, \eta \int_{\R_+} \!\! \left( 1 - (1+\eta) \frac{v_C(\xi)}{v_k(\xi)} \right) v_C(\xi) \, d\xi  \right)\right] \,.
    \end{equation*}
\end{proposition}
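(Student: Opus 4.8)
The plan is to reduce the computation to $\E^{\P_C}[Z^*_{k,s}]$ and then exploit the explicit exponential form of $Z^*_k$ from \cref{prop:expression-Z-X}. Since $\alpha^*(s,\cdot,\bZ^*_{s^-})$ is $\bF$-predictable and, by the admissibility of $\alpha^*$ already established, belongs to $\A$ (so in particular $\E^{\P_C}\!\big[\int_0^T\!\int_{\Rp}|\alpha^*(s,\xi,\bZ^*_s)|^2\nu_C(d\xi)\,ds\big]<\infty$), the process $s\mapsto\int_0^s\!\int_{\Rp}\alpha^*(u,\xi,\bZ^*_u)\,\tilde N^{\P_C}(d\xi,du)$ is a square-integrable $\P_C$-martingale null at $0$. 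Writing $N=\tilde N^{\P_C}+\nu_C$ in \eqref{eqn:Y} and taking $\E^{\P_C}$, the compensated integral drops out and the two $\nu_C$-integrals combine to give
\[
  \E^{\P_C}[Y_t^\eta]=y+\eta\,\E^{\P_C}\!\left[\int_0^t\!\!\int_{\Rp}\alpha^*(s,\xi,\bZ^*_s)\,\nu_C(d\xi)\,ds\right].
\]
By Tonelli — the integrand is $\P_C\otimes ds$-integrable because $\int_{\Rp}|(1+\eta)\tfrac{v_C}{v_k}-1|\,\nu_C(d\xi)<\infty$ (using $\lambda_C<\infty$ and \cref{assump:intgr_comp}) and $s\mapsto\E^{\P_C}[Z^*_{k,s}]$ is finite on $[0,t]$ (computed below) — I would move $\E^{\P_C}$ inside and substitute the feedback form \eqref{eqn:alpha-star}. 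The $\xi$-term contributes $t\,\eta\int_{\Rp}\xi\,\nu_C(d\xi)$, matching the first term of the claim; setting $\rho_k:=\int_{\Rp}\big[1-(1+\eta)\tfrac{v_C(\xi)}{v_k(\xi)}\big]v_C(\xi)\,d\xi$ and noting $\int_{\Rp}[(1+\eta)\tfrac{v_C}{v_k}-1]\,\nu_C(d\xi)=-\rho_k$, the remaining term equals $\tfrac{\eta}{\theta}\sum_{\kinI}\pi_k\,\rho_k\int_0^t\E^{\P_C}[Z^*_{k,s}]\,\ell_k(T-s)\,ds$.

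The core of the argument is to evaluate $\E^{\P_C}[Z^*_{k,s}]$. Using $Z^*_{k,s}=\exp\!\big(s\int_{\Rp}[v_k-(1+\eta)v_C]\,d\xi+\int_0^s\!\int_{\Rp}\ln((1+\eta)\tfrac{v_C}{v_k})\,N(d\xi,du)\big)$ from \cref{prop:expression-Z-X} and the exponential formula for Poisson random measures (under $\P_C$, $N$ has compensator $\nu_C(d\xi)\,du=v_C(\xi)\,d\xi\,du$, and the relevant integral is finite by \cref{assump:intgr_comp}), one obtains $\E^{\P_C}[Z^*_{k,s}]=e^{s\delta_k}$ with $\delta_k=\int_{\Rp}\big[v_k-(1+\eta)v_C-v_C+(1+\eta)\tfrac{v_C^2}{v_k}\big]d\xi$. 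The key algebraic identity, obtained by expanding $[1-(1+\eta)\tfrac{v_C}{v_k}]^2\,v_k=v_k-2(1+\eta)v_C+(1+\eta)^2\tfrac{v_C^2}{v_k}$ in $\kappa_k:=\int_{\Rp}[1-(1+\eta)\tfrac{v_C}{v_k}]^2\nu_k(d\xi)$ and collecting terms, is
\[
  \delta_k=\kappa_k+\eta\rho_k,\qquad\text{hence}\qquad \E^{\P_C}[Z^*_{k,s}]\,\ell_k(T-s)=e^{s(\kappa_k+\eta\rho_k)}\,e^{(T-s)\kappa_k}=\ell_k(T)\,e^{s\eta\rho_k},
\]
using $\ell_k(t)=e^{t\kappa_k}$. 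Integrating, $\int_0^t\ell_k(T)\,e^{s\eta\rho_k}\,ds=\ell_k(T)\,\frac{e^{t\eta\rho_k}-1}{\eta\rho_k}$ (read as $t\,\ell_k(T)$ when $\rho_k=0$); the factor $\eta\rho_k$ cancels the prefactor $\eta\rho_k$, leaving $\tfrac1\theta\sum_{\kinI}\pi_k\ell_k(T)\big(e^{t\eta\rho_k}-1\big)$. Since $t\eta\rho_k=t\eta\int_{\Rp}(1-(1+\eta)\tfrac{v_C}{v_k})\,v_C\,d\xi$, this is exactly $-\tfrac1\theta\sum_{\kinI}\pi_k\ell_k(T)\big(1-\exp(t\eta\int_{\Rp}(1-(1+\eta)\tfrac{v_C}{v_k})v_C\,d\xi)\big)$, yielding the stated formula.

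I expect the only genuine obstacle to be verifying the identity $\delta_k=\kappa_k+\eta\rho_k$: the cancellation of the $1/(\eta\rho_k)$ produced by the time integral hinges on it exactly, so the $(1+\eta)$ factors and signs when expanding the square against $\nu_k=v_k\,d\xi$ must be tracked carefully. Everything else — the martingale property of the compensated integral, the Tonelli interchange, and the exponential formula for $N$ under $\P_C$ — is routine given the admissibility of $\alpha^*,\beta^*$ and \cref{assump:intgr_comp,assump:intgr_comp_jk}.
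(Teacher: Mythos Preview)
Your argument is correct. The computation of $\E^{\P_C}[Z^*_{k,s}]$ via the exponential formula and the algebraic identity $\delta_k=\kappa_k+\eta\rho_k$ are both right, and the time integral collapses as you describe.

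The paper takes a slightly different route: rather than compensating $Y^\eta$ directly and integrating $\E^{\P_C}[Z^*_{k,s}]\,\ell_k(T-s)$ over $s\in[0,t]$, it observes that $Y^\eta_t=y+X^{CL}_t-X^*_t$ and then applies the linear representation $X^*_t=x+[\cdots]t+\tfrac{1}{\theta}\sum_{\kinI}\pi_k\ell_k(T)\big(1-\ell_k(-t)Z^*_{k,t}\big)$ from \cref{prop:expression-Z-X}. This bypasses the time integral entirely: one only needs $\E^{\P_C}[Z^*_{k,t}]$ at the terminal point $t$, and the combination $\ell_k(-t)\,\E^{\P_C}[Z^*_{k,t}]=e^{t\eta\rho_k}$ drops out directly (this is the same identity you isolated as $\delta_k=\kappa_k+\eta\rho_k$). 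Your approach is more self-contained --- it does not require spotting the decomposition $Y=y+X^{CL}-X^*$ or invoking the $X^*$ part of \cref{prop:expression-Z-X} --- at the price of the extra integration step and the need to handle the $\rho_k=0$ case separately. Both routes hinge on the same exponential-formula computation and the same cancellation; the paper's is a line shorter, yours makes the mechanism more transparent.
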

\begin{proof}
    Under $\P_C$, the PRM $N$ has compensator $\nu_C(d\xi) \,dt$. By a similar argument to \zcref{cor:EXZ}, we can show that for $\kinI$, 
    \begin{align*}
        \E^{\P_C}[Z^*_{k,t}] &= \exp \left( t \int_{\Rp} \!\! \left[ \left(1- \frac{v_C(\xi)}{v_k(\xi)} \right)^2\!\! v_k(\xi) + \eta \left(\frac{v_C(\xi)}{v_k(\xi)} - 1 \right) v_C(\xi) \right] d\xi\right) \,,
    \end{align*}
    and thus
    \begin{equation*}
        \E^{\P_C}[X^*_t] = x_{0} + t \left[ c - (1+\eta) \int_{\Rp} \!\!\! \xi \, \nu_C(d\xi) \right] + \frac{1}{\theta} \sum_{\kinI} \pi_k \, \ell_k(T) \, \left[1 - \exp \left( t \, \eta \int_{\R_+} \!\! \left( 1 - (1+\eta) \frac{v_C(\xi)}{v_k(\xi)} \right) v_C(\xi) \, d\xi  \right)\right]  \,. 
    \end{equation*}
    We observe that $Y_t$ is the difference between the insurer's uncontrolled wealth process $X_t^{\text{CL}}$, given by \eqref{eqn:X_CL}, and the insurer's wealth process under the optimal control, $X_t^*$, given by \eqref{eqn:X_SDE} evaluated at $\alpha^*$, plus the counterparty's initial wealth, $y_0$. As under $\P_C$, the expected value of $X_t^{\text{CL}}$ is $x_{0}+t\left(c -  \int_{\Rp} \!\!\! \xi \, \nu_C(d\xi) \right)$, we have
    \begin{align*}
        \E^{\P_C}[Y_t^\eta] &= y_{0} + \E^{\P_C}[X_t^{\text{CL}}] - \E^{\P_C}[X_t^*] \\
        &= y_{0} + t \, \eta \int_{\Rp} \!\!\! \xi \, \nu_C(d\xi) -  \frac{1}{\theta} \sum_{\kinI} \pi_k \, \ell_k(T) \, \left[1 - \exp \left( t \, \eta \int_{\R_+} \!\!\! \left( 1 - (1+\eta) \frac{v_C(\xi)}{v_k(\xi)} \right) v_C(\xi) \, d\xi  \right)\right] \,. \qedhere
    \end{align*}
\end{proof}

\begin{remark}
    Under the insurer's optimal measure $\Q^*$, the counterparty's wealth remains constant on average, i.e., for $t \in [0,T]$, $\E^{\Q^*}[Y_t^\eta] = y_{0}$. This follows by writing the equation for $Y^\eta_t$ as 
    \begin{equation*}
        Y^{\eta}_t = y_{0}  - \int_0^t \!\! \int_{\Rp} \!\!\! \alpha^*(s,\xi,\bZ^*_s) \, \tilde N^{\Q^*}(d\xi,ds)  \,,
    \end{equation*}
    from which we observe that $Y^\eta$ is a $\Q^*$-martingale that starts at $y_{0}$. 
\end{remark}

From \zcref{prop:EY}, the solution to \zcref{opt:counterpartysproblem1} is the $\eta$ that maximizes the expression
\begin{equation}
\label{eqn:Y_PC}
    \E^{\P_C}[Y_T^\eta] = y_{0} + T \, \eta \int_{\Rp} \!\!\! \xi \, \nu_C(d\xi) -  \frac{1}{\theta} \sum_{\kinI} \pi_k \, \ell_k(T) \, \left[1 - \exp \left( T \, \eta \int_{\R_+} \!\! \left( 1 - (1+\eta) \frac{v_C(\xi)}{v_k(\xi)} \right) v_C(\xi) \, d\xi  \right)\right] \,.
\end{equation}
From this expression, without further assumptions, it is difficult to ascertain the existence of an optimal safety loading $\eta$. In the next subsection, we show that if there is only one reference model, then a unique optimal $\eta$ exists and is explicit. In the general setting, given a fixed set of parameters, we can search for the optimal $\eta$ numerically. An example is given at the end of \zcref{sec:data}.

\subsection{One reference model setting}
Now, we return to the case where there is only a single reference measure, $\P$ (see \zcref{sec:one-model}) and $\nu(d\xi)$ is compound Poisson with arrival rate $\lambda$ and mean severity $\mu$ defined in \eqref{eqn:lambda_asmpt}.
Assuming the insurer employs the optimal risk sharing strategy $\alpha^*$ given a fixed price $\eta\in \Rp$, the counterparty's wealth evolves as:
\begin{equation}
\label{eqn:reins_wealth}
    \tilde Y^{\eta}_t = y_{0} + \lambda (1+\eta) \int_0^t \left[ \mu - \frac{\eta}{\theta}  e^{\lambda \eta^2 (T-s)} Z^*_{s^-} \right] ds - \int_0^t \!\! \int_{\Rp} \!\!\! \left[ \xi - \frac{\eta}{\theta}  e^{\lambda \eta^2 (T-s)} Z^*_{s^-}  \right] N(d\xi,ds) \,.
\end{equation}

The mean of $\tilde Y^{\eta}_t$ for any $t \in [0,T]$ is given by \zcref{prop:EY} applied to the one-model setting:
\begin{corollary}
    For $t \in [0,T]$ and $\eta \in \Rp$,
    \begin{equation*}
        \E^{\P}\left[\tilde Y^{\eta}_t \right] = y_{0} + \eta \lambda \mu t - \frac{1}{\theta} e^{\lambda \eta^2 (T-t)} \left( e^{\lambda \eta^2 t} -  1 \right) \,.
    \end{equation*}
\end{corollary}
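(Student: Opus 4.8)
The plan is to obtain this identity as a direct specialization of \cref{prop:EY} to the single-reference-model setting, exactly as the other corollaries of \cref{sec:one-model} are derived from their multi-model counterparts. The one-model setup corresponds to taking $n=1$ and identifying $\P_1=\P_C=\P$, $v_1=v_C=v$, $\nu_1=\nu_C=\nu$, and the weight $\pi_1=1$ in the general framework.

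First I would check that the hypothesis of \cref{prop:EY}, namely \cref{assump:intgr_comp}, holds here: with $v_C=v_k=v$ it reduces to $\int_{\Rp} v(\xi)\,d\xi=\lambda<\infty$, which is precisely \eqref{eqn:lambda_asmpt}. Hence \cref{prop:EY} applies and supplies the expression for $\E^{\P}[\tilde Y_t^\eta]$, which I would then evaluate piece by piece under the single-model data. The linear term $t\,\eta\int_{\Rp}\xi\,\nu_C(d\xi)$ equals $\eta\lambda\mu t$ by \eqref{eqn:lambda_asmpt}. In $\ell_1(T)$ the integrand $[1-(1+\eta)v_C/v_k]^2$ collapses to $\eta^2$, so $\ell_1(T)=\exp\bigl(T\eta^2\int_{\Rp}\nu(d\xi)\bigr)=e^{\lambda\eta^2 T}$; likewise the integral inside the exponential of the penalty term, $\int_{\R_+}\bigl(1-(1+\eta)v_C/v_k\bigr)v_C\,d\xi$, collapses to $-\eta\int_{\R_+}v(\xi)\,d\xi=-\eta\lambda$, so that exponential is $e^{-\lambda\eta^2 t}$. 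Substituting yields $\E^{\P}[\tilde Y_t^\eta]=y+\eta\lambda\mu t-\tfrac{1}{\theta}e^{\lambda\eta^2 T}\bigl(1-e^{-\lambda\eta^2 t}\bigr)$, and pulling out a factor $e^{\lambda\eta^2(T-t)}$ via $e^{\lambda\eta^2 T}\bigl(1-e^{-\lambda\eta^2 t}\bigr)=e^{\lambda\eta^2(T-t)}\bigl(e^{\lambda\eta^2 t}-1\bigr)$ gives the claimed form.

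There is essentially no real obstacle: the argument is a substitution followed by one line of algebra. The only point requiring care is the bookkeeping as all model-indexed objects ($v_k$, $\ell_k$, $\pi_k$, and the index set $\cI$) collapse to their single-model counterparts. As a sanity check, one could instead derive the same expression from $\tilde Y_t^\eta=y+X_t^{CL}-X_t^*$ together with the one-model expression for $\E^{\P}[X_t^*]$ recorded earlier and $\E^{\P}[X_t^{CL}]=x+(c-\lambda\mu)t$, which is the route taken in the proof of \cref{prop:EY} itself.
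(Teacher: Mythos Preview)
Your proposal is correct and matches the paper's approach exactly: the paper simply states that the corollary is \cref{prop:EY} applied to the one-model setting, and you have supplied precisely the substitution details (collapsing $v_k=v_C=v$, computing $\ell_1(T)=e^{\lambda\eta^2 T}$ and the inner exponential as $e^{-\lambda\eta^2 t}$, then factoring). There is nothing further to add.
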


In this simplified setting, we can identify the optimal $\eta^*$ that solves \zcref{opt:counterpartysproblem1}. It is given in the following proposition in terms of the Lambert--$W$ function (see e.g., \citep{LambertW}).
\begin{proposition}
    The solution to \zcref{opt:counterpartysproblem1} when there is only one reference model ($k=1$), i.e., when the dynamics of $Y^\eta$ are given by \eqref{eqn:reins_wealth}, is
    \begin{equation*}
    \eta^* = \sqrt{\frac{W(\mu^2\theta^2\lambda T / 2)}{2\lambda T}}\,,
    \end{equation*}
    where $W$ is the principal branch of the Lambert--$W$ function and $\lambda$ and $\mu$ are given by \eqref{eqn:lambda_asmpt}.
\end{proposition}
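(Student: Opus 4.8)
The plan is to reduce \cref{opt:counterpartysproblem1} to an elementary one‑dimensional optimization using the closed form for $\E^{\P}[\tilde Y^{\eta}_t]$ from the preceding corollary, and then to recognize its first‑order condition as a Lambert‑$W$ equation. Evaluating that corollary at $t=T$ collapses the factor $e^{\lambda\eta^2(T-t)}$ to $1$, so the objective becomes the smooth scalar function
\begin{equation*}
    h(\eta) := \E^{\P}\big[\tilde Y^{\eta}_T\big] = y + \eta\,\lambda\mu\,T - \tfrac{1}{\theta}\big(e^{\lambda\eta^2 T} - 1\big)\,, \qquad \eta\in\Rp\,.
\end{equation*}
First I would record the boundary behaviour: $h$ is differentiable on $[0,\infty)$, $h(0)=y$, and $h(\eta)\to-\infty$ as $\eta\to\infty$ because the exponential term dominates the linear one; hence a maximiser exists, and since $h'(0)=\lambda\mu T>0$ it must lie in the open half‑line $(0,\infty)$.

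Next I would derive and solve the first‑order condition. Differentiating,
\begin{equation*}
    h'(\eta) = \lambda\mu T - \frac{2\lambda T}{\theta}\,\eta\, e^{\lambda\eta^2 T} = \lambda T\left(\mu - \frac{2}{\theta}\,\eta\, e^{\lambda\eta^2 T}\right)\,,
\end{equation*}
so stationary points solve $\eta\, e^{\lambda\eta^2 T} = \tfrac{\mu\theta}{2}$. To bring this to standard form, substitute $u := \lambda T\,\eta^2\ge 0$, so that $\eta = \sqrt{u/(\lambda T)}$; squaring the stationarity equation gives $\tfrac{u}{\lambda T}e^{2u} = \tfrac{\mu^2\theta^2}{4}$, equivalently $2u\,e^{2u} = \tfrac{\mu^2\theta^2\lambda T}{2}$. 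Since $\mu,\theta,\lambda,T>0$, the argument on the right is strictly positive, so the defining relation of the principal branch yields $2u = W\!\big(\tfrac{\mu^2\theta^2\lambda T}{2}\big)$; back‑substituting $u=\lambda T(\eta^*)^2$ gives exactly $\eta^* = \sqrt{W(\mu^2\theta^2\lambda T/2)/(2\lambda T)}$.

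Finally I would confirm this is the unique global maximiser. The map $\eta\mapsto\eta\,e^{\lambda\eta^2 T}$ is strictly increasing on $[0,\infty)$ from $0$ to $+\infty$, so $\eta\,e^{\lambda\eta^2 T}=\mu\theta/2$ has a single root $\eta^*$; consequently $h'$ is strictly decreasing and changes sign exactly once, from $+$ to $-$, at $\eta^*$, so $h$ is increasing on $[0,\eta^*]$ and decreasing on $[\eta^*,\infty)$. Together with the boundary behaviour this identifies $\eta^*$ as the unique solution of \cref{opt:counterpartysproblem1}. The only point requiring a word of care — and it is minor — is the squaring step: since both $\eta$ and $\mu\theta/2$ are positive, squaring is injective on the relevant domain, so the Lambert‑$W$ root is genuinely the stationary point and not a spurious one, and the nonnegativity of its argument is exactly what licenses the use of the principal branch. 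Apart from this observation, the argument is routine calculus; the one substantive step is recognising the transcendental first‑order condition as a Lambert‑$W$ equation.
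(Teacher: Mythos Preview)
Your proof is correct and follows essentially the same approach as the paper: both write the objective $h(\eta)=y+\eta\lambda\mu T-\tfrac{1}{\theta}(e^{\lambda\eta^2 T}-1)$, derive the first-order condition $\eta e^{\lambda\eta^2 T}=\mu\theta/2$, square and substitute to obtain $2\lambda T\eta^2 e^{2\lambda T\eta^2}=\mu^2\theta^2\lambda T/2$, and invoke the principal branch of the Lambert--$W$ function. The only cosmetic difference is that the paper establishes uniqueness via the second derivative (concavity), whereas you argue directly that $\eta\mapsto\eta e^{\lambda\eta^2 T}$ is strictly increasing so $h'$ changes sign exactly once; these are equivalent.
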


\begin{proof}
    We have
\begin{equation*}
    \E^{\P}[\tilde Y_T^\eta] = y_{0} + \eta \lambda \mu T - \frac{1}{\theta} \left[e^{\lambda \eta^2 T} -  1 \right],
\end{equation*}
and observe that its second derivative w.r.t. $\eta$ is
\begin{equation*}
    - \frac{2 \lambda T}{\theta} e^{\lambda \eta^2 T} \left[ 2\lambda \eta^2 T + 1\right] < 0 \,.
\end{equation*}
Hence,  $\E^{\P}[\tilde Y_T^\eta]$ is concave in $\eta$. The first order condition in $\eta$ then  implies that the optimal $\eta$ satisfies
\begin{equation*}
    \eta e^{\eta^2 \lambda T} = \frac{\mu \theta}{2} \,,
\end{equation*}
from which it follows that
\begin{equation*}
    2 \lambda T \eta^2 e^{2 \lambda T \eta^2} = \frac{\mu^2 \theta^2 \lambda T}{2} \,.
\end{equation*}
As $\mu^2\theta^2\lambda T / 2>0$, this equation has the unique solution on $\Rp$ (as $\eta\ge0$ to be a bonified safety loading)
\begin{equation*}
    \eta^* = \sqrt{\frac{W(\mu^2\theta^2\lambda T / 2)}{2\lambda T}} \,,
\end{equation*}
where $W$ is the principal branch of the Lambert--$W$ function.
\end{proof}

The expected wealth criterion is only one criterion the counterparty might wish to maximize. Other alternatives, such as mean-variance, may be used instead.

\section{Application to Spanish auto insurance data}
\label{sec:data}

In this section, we illustrate the optimal risk sharing strategy with model combination.
We assume an insurer has access to many data sets implying different models for the loss distribution, with no single true model. We base our example on a recent open-access insurance data set \citep{Segura-Gisbert2024-data-article,Segura-Gisbert2024-data-set}. This data set consists of 105,555 observations, giving policy-level data on annual motor insurance policies of a Spanish non-life insurer for policies that commenced in the years 2015–2018, covering claims up to the end of 2018. We use this data to estimate the model parameters using cross-validation.
We exclude policies that started in the year 2018, as the data set does not include a full year of exposure to potential claims for these policies. An analysis of the data shows that those policies do indeed have fewer claims and lower median and mean claim amounts. This leaves 69,740 observations, of which 16,259 (23\%) have positive aggregated claim amounts. 

Using cross-validation, we estimate 101 models from the data set. We estimate the counterparty's model, $\P_C$, using the full data set. For the other 100 models, denoted by $\P_k$, $ k = 1, \ldots, 100$, we sample 50\% of the data and then estimate the model parameters given that subset. 
We assume that under all models the claim arrival rate is Poisson distributed with rate $\lambda_k$ and that the severity distribution is Gamma distributed with shape parameter $m_k > 0$ and scale parameter $\phi_k > 0$ for $\kinI$. We could use any other parametric or non-parametric loss distributions, e.g., kernel density estimators, log-normal, mixture of Gammas, Weibull, or other typical loss model distributions, however, as the purpose of this section is illustrative, we opt to keep the individual models simple.

Both the arrival rate and the severity distribution are estimated from each data set by maximum likelihood. The arrival rates are estimated using the number of claims per policy. The severity distributions are fit to the average claim size per policy using the \texttt{bbmle} package in \texttt{R} \citep{bbmle}.
The parameter estimates are shown in \zcref{fig:CV_params}. The estimated parameters of the $\P_k$ models, $k=1,\ldots,100$, are in black, while the estimated parameters of $\P_C$ are in red. We observe a negative relationship between shape and scale, and no discernible relationship between the arrival rate and the shape and scale parameters. The estimated parameters for the counterparty's (full) model are arrival rate $\lambda_C = 0.52$, shape $m_C= 0.27$, and scale $\phi_C=1415$. \zcref{assump:intgr_comp} and \zcref{assump:intgr_comp_jk} hold for the parameter estimates for the models $\P_k$, $k=1,\ldots,100$ and $\P_C$.

\begin{figure}[bth]
    \centering
    \subfloat[Shape versus scale]{\includegraphics{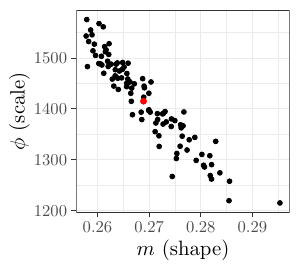}}
    \subfloat[Arrival rate versus scale]{\includegraphics{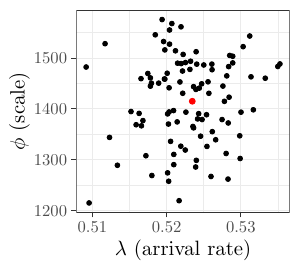}}
    \subfloat[Arrival rate versus shape]{\includegraphics{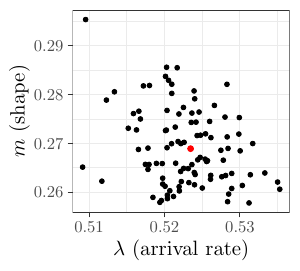}}
    \caption{Scatterplots of parameters estimated from the Spanish auto insurance data set. The parameters for the $\P_k$ models, $k=1,\ldots,100$, estimated by repeated cross-validation with 50\% of the data, are in black. The parameters for $\P_C$, estimated from the full data set, are in red.}
    \label{fig:CV_params}
\end{figure}

Using these estimated models, we implement the optimal risk sharing strategy. We simulate the jump process $\int_0^t \! \int_{\Rp} \!\!  \ln \left( (1+\eta) \frac{v_C(\xi)}{v_k(\xi)} \right)  N(d\xi,ds)$, and then compute the values of $Z^*_{k}$, $k = 1, \ldots, 100, C$, and $X^*$ using \zcref{prop:expression-Z-X}. We simulate 10,000 paths of the processes from $t=0$ to $t=T=5$. We assume that the counterparty's safety loading is $\eta=0.12$, the insurer's income is $c=5{,}550$, the weights are all equal at $\pi_k = 1/ 100$ for $k = 1,\ldots, 100$, $\pi_C=0$, and the initial value of $X$ is $x_{0} = 5{,}000$. Setting $\pi_C = 0$ means that the insurer does not use the counterparty's model.

\zcref{fig:compare-KDE} shows kernel density estimates of the distribution of the insurer's terminal wealth $X_T$ under four scenarios. 
\begin{figure}[!b]
    \centering
    \includegraphics[width=0.9\textwidth]{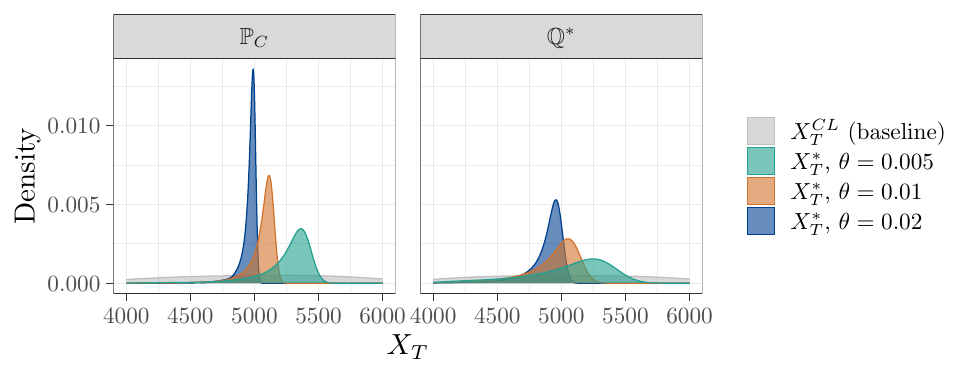}
    \caption{Kernel density estimates of the distribution of the insurer's terminal wealth $X_T$ if the insurer does not engage in risk sharing (grey) or executes the optimal risk sharing strategy with $\theta =0.02$ (blue) $\theta =0.01$ (orange), and $\theta =0.005$ (turquoise) under the probability measures $\P_C$ and $\Q^*$. The other parameters are $\eta=0.12$, $c=5{,}550$, $\pi_k = 1/ 100$ for $k = 1,\ldots, 100$, $\pi_C=0$, $x_{0} = 5{,}000$, and $T=5$.}
    \label{fig:compare-KDE}
\end{figure}
First, if the insurer does not engage in risk sharing (grey), their losses follow a Cram\'er-Lundberg process, $X^{CL}_t$. The variance of the terminal wealth $X^{CL}_T$ is high, resulting in the density curve appearing nearly flat. The other three scenarios show the distribution of the terminal wealth under the optimal strategy, $X_T^*$, for different values of the model penalization parameter $\theta$: $\theta = 0.005$ (turquoise), $\theta = 0.01$ (orange), and $\theta = 0.02$ (dark blue).  As $\theta$ increases, the insurer's optimal strategy puts more emphasis on variance reduction: the variance gets smaller under both $\P_C$ and $\Q^*$, while the mean decreases under $\P_C$. 
Estimated values of the mean and variance are given in \zcref{tab:theta-mean-var}, which shows the effect of $\theta$ on the variance and mean. For example, if $\theta = 0.005$, the variance of $X_T^*$ under $\P_C$ is about 89 thousand, a substantial reduction from the variance of $X^{CL}_T$ under $\P_C$, which is about 46.3 million. This comes at a cost of a reduction in the mean under $\P_C$, which falls from 7,914 to 5,235.

\begin{table}[!t]
    \centering
    \begin{tabular}{l@{\hskip 0.4in} r r@{\hskip 0.3in} rr}

    \toprule\toprule
    
    & \multicolumn{2}{p{0.12\textwidth}}{\centering $\P_C$} & \multicolumn{2}{p{0.12\textwidth}}{\centering$\Q^*$} 
    \\
    \midrule
    & mean & variance & mean & variance \\
    \midrule
    no risk sharing & 7,914 & 46,271,682 &  4,865 & 50,007,138 \\
    $\theta = 0.005$  & 5,235 & 89,072 & 4,865 & 790,398\\
    $\theta = 0.01$ & 5,050 & 22,268 & 4,865 & 197,600\\ 
    $\theta = 0.02$  & 4,958 & 5,567 & 4,865 & 49,400\\ 
    \bottomrule\bottomrule
    \end{tabular}
    \caption{The mean and variance of $X_T$ for different values of $\theta$ under $\P_C$ and $\Q^*$. The mean of $X_T$ and the variance of $X_T$ with no risk sharing are computed exactly, while the remaining variances are estimated numerically using 10,000 simulations. The other parameters are $\eta=0.12$, $c=5{,}550$, $\pi_k = 1/ 100$ for $k = 1,\ldots, 100$, $\pi_C=0$, $x_{0} = 5{,}000$, and $T=5$.}
    \label{tab:theta-mean-var}
\end{table}

We next illustrate the behaviour of $X^*$ and selected $Z_k^*$ over time, demonstrating the effect of model combination. As there is no true model to simulate under, we will choose to illustrate the pathwise behaviour under the pricing model, $\P_C$, but any of the 101 reference models could be used. We recall that the $Z^*$ processes are the process versions of the Radon-Nikodym derivatives between each reference model $\P_k$ and the optimal model $\Q^*$. Given the linear relationship between $X^*$ and the $Z_k^*$'s, the paths of $Z_k^*$ demonstrate the effect of the corresponding reference model $\P_k$ on the insurer's problem. \zcref{fig:XZ} shows five selected paths of $X^*$ and of two $Z^*$ variables under $\P_C$, as well as the mean (middle black line), the mean plus standard deviation of the paths above the mean (upper black line), and the mean minus standard deviation of the paths below the mean (lower black line).

For illustrative purposes, one of the selected $Z^*$ variables corresponds to a reference model which is close to the pricing model, while the other has the same arrival rate, but a different severity distribution. The severity distribution parameters for the first $Z^*$ variable illustrated are $m_{35}=0.269$ and $\phi_{35}=1444.69$, while the severity distribution parameters for the second are $m_{65}=0.282$ and $\phi_{65}=1268.71$ (compared with shape $m_C= 0.269$ and scale $\phi_C=1414.96$ under both $\P_C$ and $\Q^*$). We observe that when shape and scale parameters are closer to those of $\Q^*$, the paths of the corresponding $Z^*$ process are closer to 1, meaning these models are not distorted as much. Furthermore, the mean of $Z^*_{35}$ stays nearly constant at 1, whereas as the mean of $Z^*_{65}$ increases to 1.37 by time $T$. The negative linear relationship between $X^*$ and the $Z^*$s is also apparent, with the lower paths (red and green) of $X^*_t$ being the upper paths on $Z^*_{35}$ and $Z^*_{65}$, and the upper paths (purple and blue) of $X^*_t$ being the lower paths on $Z^*_{35}$ and $Z^*_{65}$. Therefore those reference models $\P_k$ which differ more from $\Q^*$ have a larger negative impact on the insurer's wealth under the optimal strategy, $X^*$. If one were to ignore this information, the insurer's expected wealth would be overly optimistic. In this way, the model combination framework allows for the transparent incorporation of additional information, as one can isolate the effect of each model through its corresponding $Z^*$ process.
\begin{figure}[H]
    \centering
    \hspace*{-1em}
    \subfloat[Paths of $X_t^*$ under $\P_C$]{\includegraphics[height=0.25\textheight]{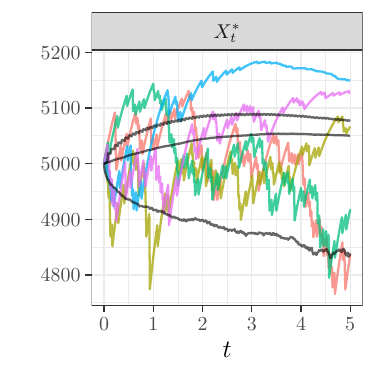}\label{fig:XZ_a}}
    \subfloat[Paths of $Z_{35,t}^*$, $Z_{65,t}^*$ under $\P_C$]{\includegraphics[height=0.25\textheight]{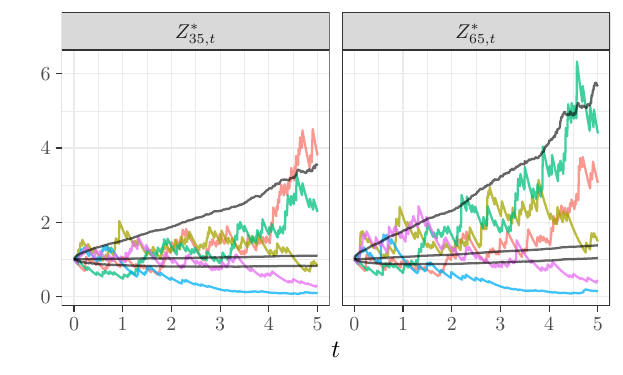}\label{fig:XZ_b}}
    \caption{Paths of $X^*_t$ and two selected $Z^*_{k,t}$ for $t \in [0,5]$ under the pricing measure $\P_C$. The parameters are $\theta = 0.01$, $\eta=0.12$, $c=5{,}550$, $\pi_k = 1/ 100$ for $k = 1,\ldots, 100$, $\pi_C=0$, $x_{0} = 5{,}000$, and $T=5$. The black lines are the mean (middle), the mean plus standard deviation of the paths above the mean (upper), and the mean minus standard deviation of the paths below the mean (lower).}
    \label{fig:XZ}
\end{figure} 

Finally, we consider the counterparty's problem, as discussed in \zcref{sec:counterparty}. Using the full 101 estimated models and the same parameters with $y_{0} = 5{,}000$ and $T=5$, we compute counterparty's terminal expected wealth \eqref{eqn:Y_PC} numerically. \zcref{fig:optim_eta_a} shows the counterparty's expected wealth under their probability measure $\P_C$ as a function of the safety loading $\eta$ for three values of $\theta$: $\theta = 0.02$ (dark blue), $\theta = 0.01$ (orange), and $\theta = 0.005$ (turquoise). The optimal safety loading, $\eta^*$, is found numerically and denoted by a dot. We see that in these cases $\E^{\P_C}[Y_T^\eta]$ is concave in $\eta$, and therefore $\eta^*$ is uniquely identified. \zcref{fig:optim_eta_b} plots the optimal safety loading  $\eta^*$, found numerically, as a function of $\theta$. We see that the optimal safety loading is increasing as the model penalization parameter, $\theta$, increases. As we have observed that increasing the model penalization parameter reduces the variance of the insurer's terminal wealth, the interpretation is that as the insurer demands less variance of their terminal wealth, the counterparty charges a higher premium.

\begin{figure}[htb!]
    \centering
    \subfloat[{$\E^{\P_C}[Y^\eta_T]$ vs.~$\eta$}]{\includegraphics{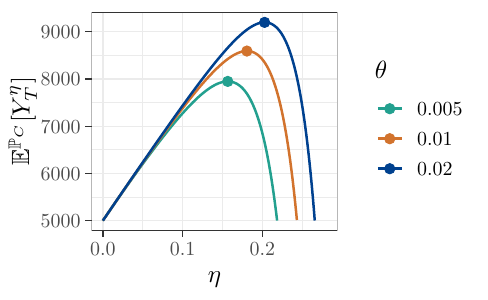}\label{fig:optim_eta_a}}
    \hspace{1em}
    \subfloat[$\eta^*$ vs.~$\theta$]{\includegraphics{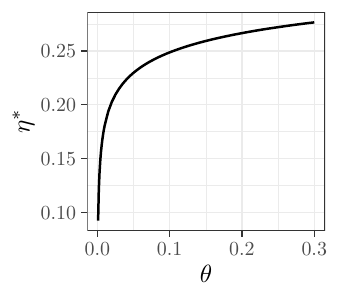}\label{fig:optim_eta_b}}
    \caption{The left panel shows $\E^{\P_C}[Y^\eta_T]$ as a function of $\eta$ for three values of $\theta$: $\theta = 0.02$ (dark blue), $\theta = 0.01$ (orange) and $\theta = 0.005$ (turquoise). The optimal $\eta$, found numerically, is denoted by the point. The right panel shows the optimal $\eta^*$ as a function of $\theta$. The other parameters are $\pi_k = 1/ 100$ for $k = 1,\ldots, 100$, $\pi_C=0$, $c=5{,}550$, $y_{0} = 5{,}000$, and $T=5$.}
    \label{fig:optim_eta}
\end{figure}

\section{Conclusion}
We study a novel risk sharing and model combination problem where an agent wishes to maximize their expected wealth when they have multiple reference models, subject to a chi-squared model ambiguity penalization. This criterion generalizes the monotone mean-variance preferences of \textcite{Maccheroni2006,Maccheroni2009} to the model combination setting, where there is no single reference model. The general problem incorporates model combination into a risk sharing problem, and could be used in various model uncertainty applications, such as risk scenarios linked to climate change or other extreme risks.

We solve this problem and find explicit solutions for the insurer's optimal risk sharing strategy, optimal decision measure, and optimal wealth process. We show that the effect of model combination is that the optimal wealth process $X^*$ depends linearly on the auxiliary processes $Z^*_k$, $\kinI$, which are the process versions of the Radon-Nikodym derivatives from each reference measure $\P_k$ to the optimal measure $\Q^*$. Using this relationship, we determine the mean and variance of $X^*$ under $\Q^*$, and find that the model penalization parameter $\theta$ penalizes the variance of the insurer's optimal wealth process. Furthermore, we show that the optimal decision measure, $\Q^*$, depends on the model used by the counterparty and the counterparty's premium, while the optimal risk sharing strategy depends on the model penalization parameter $\theta$, the counterparty's premium, and the reference models via the auxiliary processes $Z^*_k$, $\kinI$. 

There are multiple avenues for future work stemming from this problem. First, this criterion could be applied to other contexts by changing the wealth process of the agent. Our focus here has been on model combination under ambiguity with a jump process, corresponding to different scenarios for losses. The work could be expanded to consider L\'evy-It\^o processes more generally, though then one would have to consider ambiguity about joint models for both the diffusive and jump terms. It could also be extended to consider the minimally distorted stochastic process according to the chi-squared divergence given moment constraints, such as is done in \textcite{Kroell2024} with the Kullback-Leibler divergence. Future works could also consist of using different divergences, other than Kullback-Leibler and chi-squared, to penalize model ambiguity, and determine how the divergence affects the insurer's optimal strategy. Furthermore, while we took a first step in exploring the counterparty's perspective in \zcref{sec:counterparty}, this work could be further extended to a stochastic game setting, such as a leader-follower game as in \textcite{ChenShen2018}, where both agent's perspectives are fully considered. 

\section*{Acknowledgements}
We are grateful to two anonymous reviewers for helpful comments which improved the work. EK was supported in this research by an NSERC Canada Graduate Scholarship-Doctoral and an NSERC Postdoctoral Fellowship. SJ and SP would like to acknowledge support from the Natural Sciences and Engineering Research Council of Canada (grants RGPIN-2018-05705 and RGPIN-2024-04317,  and DGECR-2020-00333, RGPIN-2020-04289, RGPIN-2025-05847). SP also acknowledges support from the Canadian Statistical Sciences Institute (CANSSI).

\appendix

\section{Proof of \zcref{prop:MV_prob}}
\label{appendixA}
To prove \zcref{prop:MV_prob}, we follow the embedding approach of \textcite{ZhouLi2000}. We begin by solving the quadratic utility problem, expanding the state space to incorporate the same auxiliary processes as in the model combination problem. This gives the optimization problem:
\begin{optimization}
\label{optim:quad}
    The insurer seeks the solution to the following problem:
    \begin{align*}
        \inf_{\alpha \in \A} \frac{1}{\zeta} \E^{\Q^*} \left[\frac{( X^\alpha_T - \gamma)^2 }{\sum_{\kinI} \pi_k Z^*_{k,T}}\right]\,,
    \end{align*}
    where $\gamma\in\R$ and
    \begin{subequations}
        \begin{align}
        \label{eqn:X_Q_SDE2}
            d X_t^\alpha &= \left[ c - (1+\eta) \int_{\Rp} \!\!\! \alpha_t(\xi) \, \nu_C (d\xi)   \right] dt - \int_{\Rp} \!\!\! [\xi - \alpha_t(\xi)] \, N(d\xi,dt) \,,\\
            dZ_{k,t}^* &=  - Z_{k,t^-}^* \int_{\Rp} \!\! \left[ 1 - \frac{(1+\eta)v_C(\xi)}{v_k(\xi)}\right] \tilde N^{\P_k}(d\xi,dt)\,, \quad \text{for all } \kinI \,,
        \end{align}
    \end{subequations}
    with $X^\alpha_0 = x_{0}$, $Z_{k,0}^*=1$ for all $\kinI$.
\end{optimization}

We first solve \zcref{optim:quad} for $\gamma\in\R$ fixed and find the optimal strategy $\hat \alpha^\gamma$ and the corresponding optimal wealth process under this strategy, $\hat X^\gamma := X^{\hat \alpha^\gamma}$. Then we apply the result of \textcite[Theorem 3.1]{ZhouLi2000},  which states that $\hat \alpha^{\gamma^*}$ solves \zcref{optim:MV}, where $\gamma^*$ is the solution to the equation
$\gamma = 1/\bar\theta + \E^{\bar\P}[\hat X_T^\gamma]$ in $\gamma$. This gives us the optimal mean-variance strategy.

\begin{proposition}
\label{prop:quad_soln}
    The candidate optimal control for \zcref{optim:quad} is
    \begin{equation*}
        \alpha^\gamma(t,\hat X^\gamma_t,\bZ^*_t) = \xi + \left( \hat X^\gamma_t - A^\gamma(t) \right) \left( \frac{\sum_{\kinI} \pi_k \ell_k(T-t) Z^*_{k,t}\left[ \frac{(1+\eta) v_C(\xi)}{v_k(\xi)}-1\right]}{\sum_{\kinI} \pi_k \ell_k(T-t) Z^*_{k,t}} \right) \,,
    \end{equation*}
    where
    \begin{equation*}
        A^\gamma(t) = \gamma + \left[ (1+\eta) \int_{\Rp} \!\!\! \xi v_C(\xi)\,d\xi - c\right](T-t) \,,
    \end{equation*}
    The candidate value function is
    \begin{equation*}
        V^\gamma(t,x,\bz) = \frac{(x-A^\gamma(t))^2}{\sum_{\kinI} \pi_k \ell_k(T-t) z_{k}} \,,
    \end{equation*}
    where $\ell_k(t)$ is defined in \eqref{eqn:ell_k}.
\end{proposition}

\begin{proof}
    Define the value function
    \begin{equation*}
        V^\gamma(t,x,\bz) = \inf_{\alpha^\gamma \in \A} \E^{\Q^*}_{t,x,\bz}  \left[\frac{( X^\alpha_T - \gamma)^2 }{\sum_{\kinI} \pi_k Z^*_{k,T}}\right],
    \end{equation*}
    which satisfies the HJB equation
    \begin{align*}
     0 & = \, \partial_t V^\gamma(t,x,\bz)  + \sum_{\kinI} z_{k} \int_{\Rp} \!\! \left[ v_k(\xi) - (1+\eta)v_C(\xi)\right]  \,\partial_{z_k} V^\gamma(t,x,\bz) \\
    & \quad + \inf_{a \in \mfS} \Bigg\{ \! \left[ c - (1+\eta) \int_{\Rp} \!\!\! a(t,\xi,x,\bz) \, \nu_C (d\xi) \right] \partial_x V^\gamma(t,x,\bz)  \\
    & \hspace{5.5em}  + \int_{\Rp} \!\! \Big[ V^\gamma \! \left(t,x - [\xi -  a(t,\xi,x,z)],z_1 \frac{(1+\eta)v_C(\xi)}{v_1(\xi)},\ldots,z_n \frac{(1+\eta)v_C(\xi)}{v_n(\xi)},z_C (1+\eta) \right) \\
    &\hspace{8.5em}  - V^\gamma(t,x,\bz) \Big] (1+\eta)v_C(\xi) d\xi \Bigg\}
\end{align*}
with terminal condition $V^\gamma(T,x,\bz) = (x - \gamma)^2/(\sum_{\kinI} \pi_k z_k)$.
We use the Ansatz 
\begin{equation*}
    V^\gamma(t,x,\bz) = \frac{(x-A^\gamma(t))^2}{\sum_{\kinI} \pi_k \ell_k(T-t) z_k} \,,
\end{equation*}
where
 $\ell_k(t) = \exp \left( t \int_{\Rp} \!\! \left[1- (1+\eta) \frac{v_C(\xi)}{v_k(\xi)} \right]^2\!\! \nu_k(d\xi)\right)$ is the same as in the solution to the model combination, \zcref{prop:optim_contr}, and $A^\gamma(t)$ is an unknown function with terminal condition $A^\gamma(T)=\gamma$. Substituting the Ansatz into the HJB equation and simplifying gives
\begin{align*}
    0 &= -2 (x-A^\gamma(t)){A^\gamma}'(t) + 2c (x-A(t)) \\
    & \quad + \frac{(x-A^\gamma(t))^2}{\sum_{\kinI} \pi_k \ell_k(T-t) z_k} \sum_{\kinI} \pi_k \ell_k(T-t) z_k \int_{\Rp} \left[(1+\eta)^2 \frac{v_C^2(\xi)}{v_k(\xi)} - 2 (1+\eta) v_C(\xi) \right] d\xi \\
    &\quad + \inf_{a \in \mfS} \Bigg\{ \! -2 (x-A^\gamma(t)) (1+\eta) \int_{\Rp} \!\!\! a(t,\xi,x,\bz) \,v_C (\xi)d\xi \\
    & \hspace{5em} + \int_{\Rp}\!\! [x - (\xi - a(t,\xi,x,\bz)) - A^\gamma(t)]^2 \frac{\sum_{\kinI} \pi_k \ell_k(T-t) z_k }{\sum_{\kinI} \pi_k \ell_k(T-t) \frac{z_k}{v_k(\xi)} } d\xi \Bigg\} \,.
\end{align*}
 Solving the infimum problem yields
\begin{equation*}
    \hat a^\gamma(t,\xi,x,\bz) = \xi + \left(x -A^\gamma(t) \right) \frac{\sum_{\kinI} \pi_k \ell_k(T-t) z_k \left(\frac{(1+\eta) v_c(\xi)}{v_k(\xi)} -1 \right)}{\sum_{\kinI} \pi_k \ell_k(T-t) z_k }\,.
\end{equation*}
We substitute $\hat a$ into the HJB equation. Tedious calculations show that it simplifies to 
\begin{equation*}
    0 = 2 (x-A^\gamma(t)) \left[-{A^\gamma}'(t) + c - (1+\eta) \int_{\Rp}\!\! \xi \, v_C(\xi) \,d\xi \right]\,.
\end{equation*}
Given the terminal condition $A^\gamma(T)=\gamma$, we have $A^\gamma(t) = \gamma + \left[ (1+\eta) \int_{\Rp} \!\!\! \xi v_C(\xi)\,d\xi - c\right](T-t)$.
\end{proof}

\begin{lemma}
\label{prop:XA}
For every $t \in[0,T]$, we have
\begin{equation*}
    \frac{\hat X_t^\gamma - A^\gamma(t)}{\sum_{\kinI} \pi_k \ell_k(T-t) Z^*_{k,t}} = \frac{x_0 - A^\gamma(0)}{\sum_{\kinI} \pi_k \ell_k(T)} \,.
\end{equation*}
\end{lemma}
\begin{proof}
    Substituting $\hat\alpha^\gamma$ into the dynamics of $X_t$, \eqref{eqn:X_Q_SDE2}, gives
    \begin{align*}
        d \hat X_t^\gamma &= \left[ c - (1+\eta) \int_{\Rp} \!\!\! \xi \, \nu_C (d\xi) - \int_{\Rp}\!\! \left(\hat X_t^\gamma -A^\gamma(t) \right) \frac{\sum_{\kinI} \pi_k \ell_k(T-t) z_k \left(\frac{(1+\eta) v_c(\xi)}{v_k(\xi)} -1 \right)}{\sum_{\kinI} \pi_k \ell_k(T-t) z_k } (1+\eta) v_C(\xi) d\xi \right] dt \\
        &\quad + \int_{\Rp} \!\!\! \left(\hat X_t^\gamma -A^\gamma(t) \right) \frac{\sum_{\kinI} \pi_k \ell_k(T-t) z_k \left(\frac{(1+\eta) v_c(\xi)}{v_k(\xi)} -1 \right)}{\sum_{\kinI} \pi_k \ell_k(T-t) z_k } \, N(d\xi,dt)
    \end{align*}
    from which it is clear that
    \begin{align*}
        d (\hat X_t^\gamma - A^\gamma(t)) &= \left(\hat X_t^\gamma -A^\gamma(t) \right) \int_{\Rp} \!\!\! \frac{\sum_{\kinI} \pi_k \ell_k(T-t) z_k \left(\frac{(1+\eta) v_c(\xi)}{v_k(\xi)} -1 \right)}{\sum_{\kinI} \pi_k \ell_k(T-t) z_k } \, \tilde N^{\Q^*}(d\xi,dt) \,.
    \end{align*}
    Furthermore, by It\^o's Lemma,
    \begin{align*}
        d \left(\sum_{\kinI} \pi_k \ell_k(T-t) Z^*_{k,t} \right)&= \sum_{\kinI} \pi_k \ell_k(T-t) Z^*_{k,t} \int_{\Rp} \!\!\! \frac{\sum_{\kinI} \pi_k \ell_k(T-t) z_k \left(\frac{(1+\eta) v_c(\xi)}{v_k(\xi)} -1 \right)}{\sum_{\kinI} \pi_k \ell_k(T-t) z_k } \, \tilde N^{\Q^*}(d\xi,dt) \,.
    \end{align*}
    Given the similar forms of the differentials for $\hat X_t^\gamma - A^\gamma(t)$ and $\sum_{\kinI} \pi_k \ell_k(T-t) Z^*_{k,t}$, another application of It\^o's Lemma gives
    \begin{equation*}
        d\left( \frac{\hat X_t^\gamma - A^\gamma(t)}{\sum_{\kinI} \pi_k \ell_k(T-t) Z^*_{k,t}}\right) = 0 \,,
    \end{equation*}
    from which the conclusion follows. 
\end{proof}

\begin{proposition}
    The candidate control and value function given in \zcref{prop:quad_soln} are optimal for \zcref{optim:quad}.
\end{proposition}

We will not give a proof of the verification theorem here as the arguments are standard. To see why, observe that in light of \zcref{prop:XA}, we can re-write the candidate control and value functions in feedback form as
    \begin{align*}
        \alpha^{\gamma}(t,\hat X_t,\bZ^*_t) = \xi + \left( \frac{x_0 - A^{\gamma}\!(0)}{\sum_{\kinI} \pi_k \ell_k(T)} \right) \left( {\sum_{\kinI} \pi_k \ell_k(T-t) Z^*_{k,t}\left[ \frac{(1+\eta) v_C(\xi)}{v_k(\xi)}-1\right]}\right)\,, 
    \end{align*}
    and
    \begin{equation*}
    V^\gamma(t,\hat X_t,\bZ_t^*)  = \frac{x_0 - A^\gamma(0)}{\sum_{\kinI} \pi_k \ell_k(T)} (\hat X_t - A^\gamma(t)) \,.
\end{equation*}
Then arguments similar to those in the proof of \zcref{prop:alpha_adm} show that $\alpha^\gamma$ is an admissible strategy. The candidate value function is linear in $\hat X_t$, and so standard arguments show that it is the optimal value function. 

We are now in a position to give the proof of \zcref{prop:MV_prob}:

\begin{proof}
By the definition of $\bar \P$ in \eqref{eqn:Pbar_def} and \zcref{prop:P_bar_unique}, we have
\begin{equation*}
    \frac{d\bar\P}{d\Q^*} = \frac{1}{\zeta}\frac{1}{\sum_{\kinI} \pi_k  \frac{d\Q^*}{d\P_k}} =  \frac{1}{\zeta} \frac{1}{\sum_{\kinI} \pi_k  Z^*_{k,T}}\,.
\end{equation*}
    Then from \zcref{prop:XA}, it follows that
\begin{align*}
    \E^{\bar \P} \left[ X_T^*\right] = \frac{1}{\zeta} \E^{\Q^*} \left[ \frac{X_T^*}{\sum_{\kinI} \pi_k Z^*_{k,T}}\right] &= \frac{1}{\zeta} \frac{x_0 - A^\gamma(0)}{\sum_{\kinI} \pi_k \ell_k(T)} + \frac{1}{\zeta} \E^{\Q^*} \left[ \frac{A^\gamma(T)}{\sum_{\kinI} \pi_k Z^*_{k,T}}\right] \\
    &= \frac{1}{\zeta} \frac{x_0 - A^\gamma(0)}{\sum_{\kinI} \pi_k \ell_k(T)} + A^\gamma(T) \,.
\end{align*}

We compute $\gamma^*$ using the condition from \textcite{ZhouLi2000}. We have that $\gamma^*$ solves
\begin{align*}
    \gamma = \frac{1}{\bar \theta} + \E^{\bar \P} \left[ X_T^*\right] &= \frac{1}{\zeta \theta} +  \frac{1}{\zeta} \frac{x_0 -  A^\gamma(0)}{\sum_{\kinI} \pi_k \ell_k(T)} + A^\gamma(T) \\
     &= \frac{1}{\zeta \theta} +  \frac{1}{\zeta} \frac{x_0 - \gamma - \left[ 1+\eta) \int_{\Rp} \xi v_C(\xi)d\xi - c\right]T}{\sum_{\kinI} \pi_k \ell_k(T)} + \gamma \,,
\end{align*}
which gives
\begin{equation*}
    \gamma^* = \frac{\sum_{\kinI} \pi_k \ell_k(T)}{\theta} + x_0 - \left[ (1+\eta) \int_{\Rp} \xi v_C(\xi)d\xi - c\right]T \,.
\end{equation*}
Then we re-write the optimal risk sharing strategy from \zcref{prop:quad_soln} as follows, where $\hat X_t^*$ is the solution to \eqref{eqn:X_Q_SDE2} with $\alpha = \hat \alpha^{\gamma^*}$, $\hat X_0^*=x_0$:
\begin{align*}
        \alpha^{\gamma^*}(t,\hat X_t^*,\bZ^*_t) &= \xi + \left( \frac{\hat X_t^* - A^{\gamma^*}\!(t)}{\sum_{\kinI} \pi_k \ell_k(T-t) Z^*_{k,t}} \right) \left( {\sum_{\kinI} \pi_k \ell_k(T-t) Z^*_{k,t}\left[ \frac{(1+\eta) v_C(\xi)}{v_k(\xi)}-1\right]}\right) \\
        &= \xi + \left( \frac{x_0 - A^{\gamma^*}\!(0)}{\sum_{\kinI} \pi_k \ell_k(T)} \right) \left( {\sum_{\kinI} \pi_k \ell_k(T-t) Z^*_{k,t}\left[ \frac{(1+\eta) v_C(\xi)}{v_k(\xi)}-1\right]}\right) \\
        &= \xi + \left( \frac{x_0 - \gamma^* - \left[ (1+\eta) \int_{\Rp} \!\!\! \xi v_C(\xi)d\xi - c\right]T}{\sum_{\kinI} \pi_k \ell_k(T)} \right) \left( {\sum_{\kinI} \pi_k \ell_k(T-t) Z^*_{k,t}\left[ \frac{(1+\eta) v_C(\xi)}{v_k(\xi)}-1\right]}\right) \\
        &= \xi - \frac{1}{\theta} {\sum_{\kinI} \pi_k \ell_k(T-t) Z^*_{k,t}\left[ \frac{(1+\eta) v_C(\xi)}{v_k(\xi)}-1\right]} \,,
    \end{align*}
    which shows that the strategy is identical to the optimal model combination strategy given in \zcref{prop:optim_contr}.
\end{proof}

\section{Proof of \zcref{prop:P-bar-compensator}}
\label{appendixB}

Choose $\P$ equivalent to $\P_k$, where the $\P$-compensator of $N$ is denoted $v$, such that (with a slight abuse of notation) $v(d\xi,dt) = v(\xi)\,d\xi\,dt$ and $\int_{\Rp} \! v(\xi)\,d\xi<\infty$. Define the composite probability measure $\bar \P$ as
\begin{equation*}
    \frac{d\bar\P}{d\P} = \frac1\zeta \left( \sum_{k\in\cI} \pi_k \frac{d\P\,}{d\P_k}\right)^{-1} \!, \quad \text{where } \zeta := \E^{\P} \left[ \left( \sum_{k\in\cI} \pi_k \frac{d\P\,}{d\P_k}\right)^{-1}\right] \,.
\end{equation*}
and the conditional Radon-Nikodym derivatives
\begin{align}
    Z_t^k &:= \E^{\P}\left[\left.\frac{d\P_k}{d{\P}} \right|\F_t\right], \qquad \forall \kinI, \nonumber
\\
    \bar{Z}_t &:= \E^{\P}\left[\left.\frac{d\bar\P}{d{\P}}\right |\F_t\right]
    =  \E^{\P}\left[\left.\left(\zeta\,
    \sum_{k\in\cI} \pi_k \frac{d\P\,}{d\P_k}\right)^{-1}\right|\F_t\right]
    =
    \zeta^{-1}\; \E^{\P}\left[\left.\left(
    \sum_{k\in\cI} \pi_k (Z_T^k)^{-1}\right)^{-1}\right|\F_t\right].
    \label{eqn:bar-Z}
\end{align}

We aim to find the $\bar\P$-compensator of $N$. Similar to \eqref{eqn:Z_SDE} we have that
\begin{align*}
    dZ_t^k = -Z_{t^-}^k \int_{\Rp}\!\! \left(1-\frac{v_k(\xi)}{ v(\xi)}\right) \left(N(d\xi,dt)- v(d\xi,dt)\right).
\end{align*}
From \eqref{eqn:bar-Z} and the above, we have that $\bar Z_t$ is Markov in $(Z_t^k)_{\kinI}$, hence we may write $\bar Z_t = f(t,Z_t^1,\dots,Z_t^n,Z_t^C)$ for a function $f$ that satisfies the PIDE
\begin{equation}
\left\{
\hspace{-1em}
\begin{split}
    \partial_t f(t,z_1,\dots,z_n,z_C) \hspace{22em}\\
    \hspace{3em}+ \int_{\Rp}\!\! v(\xi)\left(f\left(t,\tfrac{v_1(\xi)}{v(\xi)}\,z_1,\dots,\tfrac{v_n(\xi)}{v(\xi)}z_n,\tfrac{v_C(\xi)}{v(\xi)}z_C\right)
    -f(t,z_1,\dots,z_n,z_C)\right) d\xi
    &=0,
    \\
    f(T,z_1,\dots,z_n,z_C) &= 
    \left(\zeta\,\sum_{k\in\cI}  \frac{\pi_k}{z_k}\right)^{-1}.
\end{split}
\right.
\label{eqn:PIDE-f}
\end{equation}
Under the integrability assumptions in \eqref{eqn:nu_int_assumption} and on $v(\xi)$, and the equivalence of the probability measures, standard Picard iteration arguments show that a unique solution of this equation exists.
Given a solution to \eqref{eqn:PIDE-f}, from It\^o's lemma and that $\bar{Z}$ is a $\P$-martingale, we then have that
\begin{align*}
    d\bar{Z}_t &= \int_{\Rp}\!\! 
    \left(f\left(t,\tfrac{v_1(\xi)}{v(\xi)}\,Z^1_{t^-},\dots,\tfrac{v_n(\xi)}{v(\xi)}Z^n_{t^-},\tfrac{v_C(\xi)}{v(\xi)}Z^C_{t^-}\right)
    -f(t,Z^1_{t^-},\dots,Z^n_{t^-},Z^C_{t^-})\right)\,\left(N(d\xi,dt)
    - v(d\xi,dt)\right)
    \\
    &= -\bar{Z}_{t^-} \int_{\Rp}\!\! 
    \left(1-\frac{f\left(t,\tfrac{v_1(\xi)}{v(\xi)}\,Z^1_{t^-},\dots,\tfrac{v_n(\xi)}{v(\xi)}Z^n_{t^-},\tfrac{v_C(\xi)}{v(\xi)}Z^C_{t^-}\right)}{f(t,Z^1_{t^-},\dots,Z^n_{t^-},Z^C_{t^-})}
    \right)\,\left(N(d\xi,dt)
    - v(d\xi,dt)\right)\,.
\end{align*}
Therefore, the $\bar{\P}$-compensator of $N$ is
\begin{equation*}
    \bar{v}(t,\xi) = \frac{f\left(t,\tfrac{v_1(\xi)}{v(\xi)}\,Z^1_{t^-},\dots,\tfrac{v_n(\xi)}{v(\xi)}Z^n_{t^-},\tfrac{v_C(\xi)}{v(\xi)}Z^C_{t^-}\right)}{f(t,Z^1_{t^-},\dots,Z^n_{t^-},Z^C_{t^-})}\;v(\xi)\;.
\end{equation*}

We next, show that the $f(t,z_1,\dots,z_n,z_C)$ is homogenous of order one in $z$'s. First, the terminal condition gives
\begin{equation*}
    f(T,c\,z_1,\dots,c\,z_n,c\,z_C) = 
    \left(\zeta\,\sum_{k\in\cI} \pi_k (c\,z_k)^{-1}\right)^{-1}
    = c f(T,z_1,\dots,z_n,z_C)\,.
\end{equation*}
Next, define the integral operator
\begin{equation*}
    (\mathcal L f)(t,z_1,\dots,z_n,z_C) := \int_{\Rp}\!\! v(\xi)\left(f\left(t,\tfrac{v_1(\xi)}{v(\xi)}\,z_1,\dots,\tfrac{v_n(\xi)}{v(\xi)}z_n,\tfrac{v_C(\xi)}{v(\xi)}z_C\right)
    -f(t,z_1,\dots,z_n,z_C)\right) d\xi \,.
\end{equation*}
Suppose that $h$ is homogenous of order one in $z$, then we show that $(\mathcal L h)$ is as well. To this end,
\begin{align*}
    (\mathcal L h)(t,c\,\bz) 
    &=
    \int_{\Rp}\!\! v(\xi)\left(h\left(t,c\,\tfrac{v_1(\xi)}{v(\xi)}\,z_1,\dots,c\,\tfrac{v_n(\xi)}{v(\xi)}z_n,c\,\tfrac{v_C(\xi)}{v(\xi)}z_C\right)
    -h(t,c\,z_1,\dots,c\,z_n,c\,z_C)\right) d\xi
    \\
    &=
    c\int_{\Rp}\!\! v(\xi)\left(h\left(t,\tfrac{v_1(\xi)}{v(\xi)}\,z_1,\dots,\tfrac{v_n(\xi)}{v(\xi)}z_n,\tfrac{v_C(\xi)}{v(\xi)}z_C\right)
        -h(t,z_1,\dots,z_n,z_C)\right) d\xi    
    \\
    &= c\,(\mathcal L h)(t,\bz) 
\end{align*}

So next, define $h_c(t,\bz) := \frac{1}{c} f(t,c\bz)$. Then $h_c(T,\bz) = 
    \left(\zeta\, \sum_{k\in\cI} \pi_k (z_k)^{-1}\right)^{-1}$.
Next, $\partial_t h_c(t,\bz) = \tfrac{1}{c} \partial_t f(t,c\,\bz)$.
Further, as $\mathcal L$ is a linear operator, $(\mathcal L h_c)(t,\bz) 
    = \tfrac1c (\mathcal L f)(t,c\,\bz)$. 
Therefore,
\begin{equation*}
    \partial_t h_c(t,z) + (\mathcal L h_c)(t,\bz) = \tfrac1c \partial_t f(t,c\bz) + \tfrac 1c (\mathcal L f)(t,c\,\bz) =0\,.
\end{equation*}
Therefore, $h_c$ solves the same PIDE as $f$ and with the same terminal condition. As the solution to the PIDE is unique, then $h_c$ and $f$ coincide. Therefore, we must have $\tfrac1c f(t,c\bz) = f(t,\bz)$, implying $f(t,c\bz) = c\,f(t,\bz)$.

Back to $\bar{v}$, we have
\begin{align*}
    \bar{v}(t,\xi) 
    &= \frac{f\left(t,\tfrac{v_1(\xi)}{v(\xi)}\,Z^1_{t^-},\dots,\tfrac{v_n(\xi)}{v(\xi)}Z^n_{t^-},\tfrac{v_C(\xi)}{v(\xi)}Z^C_{t^-}\right)}{f(t,Z^1_{t^-},\dots,Z^n_{t^-},Z^C_{t^-})}\;v(\xi) = \frac{\frac{1}{v(\xi)}f\left(t,v_1(\xi)Z^1_{t^-},\dots,v_n(\xi)Z^n_{t^-},v_C(\xi)Z^C_{t^-}\right)}{f(t,Z^1_{t^-},\dots,Z^n_{t^-},Z^C_{t^-})}\;v(\xi)
    \\
    &= \frac{f\left(t,v_1(\xi)Z^1_{t^-},\dots,v_n(\xi)Z^n_{t^-},v_C(\xi)Z^C_{t^-}\right)}{f(t,Z^1_{t^-},\dots,Z^n_{t^-},Z^C_{t^-})} \,.
\end{align*}
Further, note that if we define $g(t,z_1,\ldots,z_n,z_C):=e^{\int_{\Rp}\!\! v(\xi)\,d\xi (T-t)}\,f(t,z_1,\dots,z_n,z_C)$, then $g$ inherits the homogeneity property and $g$ satisfies the PIDE 
\begin{align*}
    0&=\partial_t g(t,z_1,\dots,z_n,z_C)
    + \int_{\Rp}\!\! v(\xi) \,g\left(t,\tfrac{v_1(\xi)}{v(\xi)}\,z_1,\dots,\tfrac{v_n(\xi)}{v(\xi)}z_n,\tfrac{v_C(\xi)}{v(\xi)}z_C\right)
     d\xi \\
    &= \partial_t g(t,z_1,\dots,z_n,z_C)
    + \int_{\Rp}\!\! g\left(t, v_1(\xi) z_1,\dots,v_n(\xi)z_n,v_C(\xi)z_C\right)
     d\xi  
\end{align*}
with the same terminal condition as $f$, which concludes the proof.
\qed

\singlespacing
\printbibliography

\end{document}